\newtheorem{claim}{Claim}
\newtheorem{lemma}{Lemma}
\newtheorem{theorem}{Theorem}
\newtheorem{corollary}[claim]{Corollary}
\newtheorem{note}{Note}
\definecolor{Red}{rgb}{1,0,0}
\definecolor{Blue}{rgb}{0,0,1}
\definecolor{Olive}{rgb}{0.41,0.55,0.13}
\definecolor{Green}{rgb}{0,1,0}
\definecolor{MGreen}{rgb}{0,0.8,0}
\definecolor{DGreen}{rgb}{0,0.55,0}
\definecolor{Yellow}{rgb}{1,1,0}
\definecolor{Cyan}{rgb}{0,1,1}
\definecolor{Magenta}{rgb}{1,0,1}
\definecolor{Orange}{rgb}{1,.5,0}
\definecolor{Violet}{rgb}{.5,0,.5}
\definecolor{Purple}{rgb}{.75,0,.25}
\definecolor{Brown}{rgb}{.75,.5,.25}
\definecolor{Grey}{rgb}{.5,.5,.5}
\definecolor{Pink}{rgb}{1,0,1}
\definecolor{DBrown}{rgb}{.5,.34,.16}
\definecolor{Black}{rgb}{0,0,0}
\newcommand{\no}{\nonumber}
\newcommand{\mc}{\mathcal}
\newcommand{\mb}{\mathbb}
\newcommand{\mrm}{\mathrm}
\newcommand{\f}{\frac}
\newcommand{\bs}{\backslash}
\newcommand{\pl}{\parallel}
\newcommand{\deq}{\stackrel{\text{\rm d}}{=}}
\newcommand{\diag}{\textrm{diag}}
\newcommand{\cov}{\textrm{\rm Cov}}
\newcommand{\sigal}{\mathfrak}
\newcommand{\pty}{\mc}
\newcommand{\order}{\vec{o}}
\newcommand{\indicator}{\mb{I}}
\newcommand{\almostsurely}{{\rm a.s.}}
\newcommand{\asequal}{\stackrel{\almostsurely}{=}}
\newcommand{\empr}{\hat{p}}
\newcommand{\tpsi}{\tilde{\psi}}
\newcommand{\bound}{B}
\newcommand{\pW}{p_{W}}
\newcommand{\psL}{\mathrm{PL}}
\newcommand{\tQ}{\tilde{Q}}
\newcommand{\tq}{\tilde{q}}
\newcommand{\tM}{\tilde{M}}
\newcommand{\tm}{\tilde{m}}
\newcommand{\identity}{{\sf I}}
\newcommand{\zeros}{{\sf 0}}
\def\reals{{\mathbb R}}
\def\naturals{{\mathbb N}}
\def\<{\langle}
\def\>{\rangle}
\def\E{{\mathbb E}}
\def\cC{{\cal C}}
\def\cF{{\cal F}}
\def\th{{\tilde{h}}}
\def\cP{{\cal P}}
\def\tA{\tilde{A}}
\def\normal{{\sf N}}
\def\trace{{\rm Tr}}
\def\tG{\tilde{G}}
\def\ve{\varepsilon}
\def\prob{{\mathbb P}}
\def\htau{\widehat{\tau}}
\def\dx{{\delta x}}
\def\dz{{\delta z}}
\def\va{\vec{\alpha}}
\def\vb{\vec{\beta}}
\def\lbq{\rho}
\def\lbm{\varsigma}
\def\lip{\ell}
\def\var{{\rm Var}}
\def\tV{{\tilde{V}}}
\def\slln{\varrho}
\begin{document}

\title{The dynamics of message passing on dense graphs,\\
with applications to compressed sensing}

\author{Mohsen Bayati\thanks{Department of Electrical Engineering,
Stanford University} \;\;\; and\;\;\;
Andrea Montanari${}^{*,}$\thanks{Department of Statistics,
Stanford University}}

\date{}

\maketitle


\begin{abstract}
`Approximate message passing'  algorithms have proved to be
effective in reconstructing sparse signals
from a small number of incoherent linear measurements.
Extensive numerical experiments further showed that their
dynamics is accurately tracked by a simple one-dimensional
iteration termed \emph{state evolution}.
In this paper we provide rigorous foundation to
state evolution. We prove that indeed it holds asymptotically
in the large system limit for sensing matrices with independent and
identically distributed gaussian entries.

While our focus is on message passing algorithms for compressed
sensing, the analysis extends beyond this setting, to a general
class of algorithms on dense graphs. In this context, state evolution
plays the role that density evolution has for sparse graphs.

The proof technique is fundamentally different from the
standard approach to density evolution, in that it copes
with a large number of short cycles in the underlying factor graph. It relies
instead  on a conditioning technique recently developed
by Erwin Bolthausen in the context of spin glass theory.
\end{abstract}
%
%
\section{Introduction and main results}

Given an $n\times N$ matrix $A$, the compressed  sensing reconstruction
problem requires to reconstruct a sparse vector $x_0\in \reals^{N}$ from
a (small) vector of linear observations $y = Ax_0+w\in \reals^n$.
Here $w$ is a noise vector and $A$ is assumed to be known.
 Recently \cite{DMM09} suggested the following first order
\emph{approximate message-passing (AMP)} algorithm for reconstructing $x_0$
given $A,y$. Start with an initial guess $x^0=0$ and proceed by
\begin{align}
x^{t+1}&=\eta_t(A^*z^t+x^t),\label{eq:dmm}\\
z^t &= y - Ax^t+\f{1}{\delta} z^{t-1}
\left\<\eta_{t-1}'(A^*z^{t-1}+x^{t-1})\right\>\, ,\nonumber
\end{align}
for an appropriate sequence of non-linear functions $\{\eta_t\}_{t\ge 0}$.
(Here by convention any variable with negative index is assumed to be $0$.)
 The algorithm succeeds if $x^t$ converges to
a good approximation of $x_0$ (cf. \cite{DMM09} for details).

Throughout this  paper, the matrix $A$ is normalized in such a way that its columns have $\ell_2$ norm\footnote{Recall that the $\ell_p$ norm of a vector
$v$ is $\|v\|_p\equiv(\sum_i|v_i|^p)^{1/p}$.} concentrated around $1$.
Given a vector $x\in \reals^{N}$ and a scalar function
$f:\reals\to\reals$, we write $f(x)$ for the vector
obtained by applying $f$ componentwise. Further, $\delta=n/N$, $\<v\>\equiv
N^{-1}\sum_{i=1}^{N}v_i$ and $A^*$ is the transpose of matrix $A$.

Three findings were presented in \cite{DMM09}:
\begin{itemize}
\item[(1)] For a large class of random matrices $A$, the
behavior of AMP algorithm is accurately described by
a formalism called `state evolution' (SE). Extensive numerical
experiments tested this claim on gaussian, Radamacher, and partial Fourier
matrices;

\item[(2)] The sparsity-undersampling tradeoff of AMP as derived from
SE coincides, for an appropriate choice of the functions $\eta_t$,
with the one of convex optimization approaches. Let us stress that
standard convex optimization algorithms do not scale
to large applications (e.g. to image processing), while the
computational complexity of AMP is as low as the one of the simplest
greedy algorithms;

\item[(3)] As a byproduct of $(1)$ and $(2)$,
SE allows to re-derive reconstruction phase boundaries
earlier determined via random polytope geometry
(see in particular \cite{DoTa05,DoTa08} and references therein).
\end{itemize}
These findings were based on heuristic arguments and
numerical simulations. In this paper we provide the first
rigorous support to finding $(1)$, by proving that SE holds in the large
system limit, for random sensing matrices $A$ with gaussian entries.
Implications on points $(2)$ and $(3)$ will be reported in a forthcoming paper.

Interestingly, state evolution gives access to sharp predictions
that cannot be derived from random polytope geometry.
A prominent example is the noise sensitivity of LASSO,
which is investigated in \cite{NSPT}.

Note that AMP is an approximation to the following message passing algorithm.
For all $i,j\in[N]$ and $a,b\in[n]$ (here and below
$[N]\equiv \{1,2,\ldots,N\}$) start with messages $x_{j\to a}^0=0$
and proceed by
\begin{align}
z_{a\to i}^t &= y_a - \sum_{j\in[N]\bs i}A_{aj}x_{j\to a}^t\, ,
\label{eq:mp1}\\
x_{i\to a}^{t+1}&=\eta_t\left(\sum_{b\in[n]\bs a}A_{bi}z_{b\to i}^t\right).
\nonumber
\end{align}
As argued in \cite{DMM_ITW_I}, AMP accurately approximates message passing in
the large system limit. We refer to appendix \ref{app:AMPderivation}
for an heuristic argument justifying the AMP update rules
(\ref{eq:dmm}) starting from the algorithm (\ref{eq:mp1}).
While this derivation is not necessary for the proofs of this
paper, it can help the reader familiar with message passing
algorithms to
develop the correct intuition.

An important tool for the analysis
of message passing algorithms is provided by density evolution
\cite{RiU08}. Density evolution is known to hold asymptotically
for sequences of sparse graphs that are locally tree-like.
The factor graph underlying the algorithm (\ref{eq:mp1}) is dense:
indeed it is the complete bipartite graph.
State evolution can be regarded (in a very precise sense) as the analogue
of density evolution for dense graphs.

For the sake of concreteness, we will focus in this Section
on the algorithm (\ref{eq:dmm}), and will keep to the
compressed sensing language.
Nevertheless our analysis applies to a much larger family
of message passing algorithms on dense graphs, for instance the
multi-user detection algorithm studied in
\cite{Kab03,NeirottiSaad,MoT06}.  Applications to
such algorithms are discussed in Section \ref{sec:Examples}.
Section \ref{sec:Analysis}  describes an even more general formulation,
as well as the proof of our theorems. Finally, Section
\ref{sec:Symmetric} describes a generalization to the case of symmetric
matrices $A$ that is directly related to the work of Erwin Bolthausen \cite{Bolthausen} .

It is important to mention that the algorithms \eqref{eq:dmm} and
\eqref{eq:mp1}
are completely different from gaussian belief propagation (BP).
The gaussian assumption refers indeed to the distribution of the matrix
entries, not to the variables to be inferred.
More generally, none of the existing rigorous results for BP seem to be applicable here.

It is  remarkable that density evolution (in its special incarnation,
SE) holds for dense graphs. This is at odds with the standard argument
used for justifying density evolution so far:
`density evolution works \emph{because} the graph is locally tree-like.'
To the best of our knowledge, the approach developed here
is the first one that overcomes the limitations of the standard argument
(a discussion of earlier literature is provided in Section \ref{sec:Related}).
%
%
\subsection{Main result}\label{subsec:main-result}

We begin with some missing definitions for algorithm (\ref{eq:dmm}). We assume
\begin{eqnarray}
y = Ax_0+w\, ,\label{eq:Model}
\end{eqnarray}
with $w\in \reals^n$ a vector with i.i.d. entries with mean
$0$ and variance $\sigma^2$.
In Section \ref{sec:General}, we will show that the
i.i.d. assumption can be relaxed to existence of a weak limit for the empirical distribution of $w$ with certain moment conditions.
Further, let $\{\eta_t\}_{t\ge 0}$ be a sequence of scalar functions
$\eta_t:\reals\to\reals$ which we assume to be Lipschitz continuous
(and hence almost everywhere differentiable).
Define the sequence of vectors $\{x^t\}_{t\ge 0}$,
$x^{t}\in\reals^N$,  $\{z^t\}_{t\ge 0}$,
$z^{t}\in\reals^n$, through Eqs.~(\ref{eq:dmm}).

Next, let us define formally state evolution.
Given a probability distribution $p_{X_0}$, let $\tau_0^2\equiv
\sigma^2+\E\{X_0^2\}/\delta$, and define recursively for $t\ge 0$,
\begin{align}
\tau_{t+1}^2&  =\sigma^2+\frac{1}{\delta}
\E\bigg\{[\eta_t(X_0+\tau_tZ)-X_0]^2\bigg\}\, ,\label{eq:SERecursion}
\end{align}
with $X_0\sim p_{X_0}$ and $Z\sim \normal(0,1)$ independent
from $X_0$. We will use the term \emph{state evolution} to refer
both to the recursion (\ref{eq:SERecursion})
(or its more general version introduced in Section \ref{sec:General})
and to the
sequence $\{\tau_t\}_{t\ge 0}$ that it defines.

Let us denote the empirical
distribution\footnote{The probability distribution that puts a point mass $1/N$ at each of the $N$
entries of the vector.} of a  vector $x_0\in\reals^N$ by $\empr_{x_0}$.
Further, for $k\ge1$ we say a function $\phi:\reals^m\to\reals$ is \emph{pseudo-Lipschitz}
of order $k$ and denote it by $\phi\in \psL(k)$ if there exists a constant
$L>0$ such that, for all $x,y\in\reals^m$:
\begin{eqnarray}
|\phi(x)-\phi(y)|\le L(1+\|x\|^{k-1}+\|y\|^{k-1})\, \|x-y\|\, .
\end{eqnarray}
Notice that when $\phi\in \psL(k)$, the following two properties follow:
\begin{itemize}
\item[$(i)$] There is a constant $L'$ such that for all $x\in\reals^m$: $|\phi(x)|\leq L'(1+\|x\|^k)$.
\item[$(ii)$] $\phi$ is locally Lipschitz, that is for any $M>0$ there exist
a constant $L_{M,m}<\infty$ such that for all $x,y\in [-M, M]^m$,
\[
|\phi(x)-\phi(y)|\leq  L_{M,m}\|x-y\|.
\]
Further, $L_{M,m}\leq c[1+ (M\sqrt{m})^{k-1}]$ for some constant $c$.
\end{itemize}
In the following we shall use generically $L$ for Lipschitz constants
entering bounds of this type. It is understood (and will not
be mentioned explicitly) that the constant must be properly adjusted at
various passages.

\begin{theorem}\label{thm:main}
Let $\{A(N)\}_{N\ge 0}$ be a sequence of sensing
matrices $A\in\reals^{n\times N}$
indexed by $N$, with i.i.d. entries $A_{ij}\sim \normal(0,1/n)$, and assume $n/N\to\delta\in (0,\infty)$.
Consider further a sequence of signals $\{x_0(N)\}_{N\ge 0}$, whose
empirical distributions converge weakly to a probability measure
$p_{X_0}$ on $\reals$  with bounded $(2k-2)^{th}$ moment, and
assume $\E_{\empr_{x_0(N)}}(X_0^{2k-2})\to\E_{p_{X_0}}(X_0^{2k-2})$ as $N\to\infty$ for some $k\ge 2$.
Also, assume the noise $w$ has iid entries with a distribution $p_W$ that has bounded $(2k-2)^{th}$ moment.
Then, for any pseudo-Lipschitz function $\psi:\reals^2\to\reals$
of order $k$ and all $t\ge 0$, almost surely
\begin{eqnarray}
\lim_{N\to\infty}\f{1}{N}\sum_{i=1}^N\psi(x_i^{t+1},x_{0,i})
=\E\Big[ \psi\big(\eta_{t}(X_0+\tau_{t} Z),X_0\big)\Big]\,,
\label{eq:SE2}
\end{eqnarray}
with $X_0\sim p_{X_0}$ and $Z\sim \normal(0,1)$ independent.
\end{theorem}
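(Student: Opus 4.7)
My plan is to induct on the iteration number $t$, leveraging the conditioning technique introduced by Bolthausen. The central difficulty is that after the first iteration the vectors $x^1, \dots, x^t$ and $z^1, \dots, z^t$ become entangled with $A$, so a naive Gaussian calculation breaks down. The remedy is to condition on the $\sigma$-algebra $\mathcal{S}_t$ generated by every matrix--vector product already computed, $\mathcal{S}_t = \sigma\{Aq^0, \dots, Aq^{t-1};\, A^*m^0, \dots, A^*m^{t-1}\}$, where $q^s, m^s$ are the vectors fed to $A$ and $A^*$ at step $s$. Standard multivariate Gaussian conditioning then expresses $A \mid \mathcal{S}_t$ as the sum of (a) a deterministic matrix enforcing the known linear constraints and (b) an independent fresh $\normal(0,1/n)$ matrix supported on the orthogonal complement of the spanned subspaces. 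The new vector produced at step $t+1$ therefore decomposes into a bias $\widetilde{\mu}_t$ (a linear combination of previously-computed vectors) plus a genuinely new Gaussian piece.

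It is cleanest to first prove a more general statement (the content of Section~\ref{sec:Analysis}) in which \eqref{eq:dmm} is replaced by a general dense-graph recursion $h^{t+1} = A^* m^t - \xi_t q^t$, $b^t = A q^t - \lambda_t m^{t-1}$, with $q^t = f_t(h^t, x_0)$, $m^t = g_t(b^t, w)$, and specific $\xi_t, \lambda_t$ serving as Onsager corrections. Theorem~\ref{thm:main} then falls out by choosing $f_t(h,x_0) = \eta_{t-1}(h+x_0) - x_0$ and $g_t(b,w) = b-w$ up to the $1/\delta$ normalization. The inductive hypothesis at level $t$ is that, conditionally on $\mathcal{S}_t$, the bias $\widetilde{\mu}_t$ is cancelled by the Onsager term $\xi_t q^t$ up to an error vanishing almost surely in $N$, and that the empirical joint distribution of $(h^1, \dots, h^{t+1}, x_0)$ converges to a centered Gaussian on $\reals^{t+1}$ independent of $x_0 \sim p_{X_0}$ with covariance prescribed by \eqref{eq:SERecursion}. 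This cancellation is exactly where Stein's identity enters: it rewrites $\E[Z\, \eta_t(X_0 + \tau_t Z)]$ as $\tau_t \E[\eta_t'(X_0 + \tau_t Z)]$, and this is precisely the coefficient attached to the Onsager correction in \eqref{eq:dmm}.

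Once the inductive claim is in hand, \eqref{eq:SE2} is immediate: $x_i^{t+1} = \eta_t(h^{t+1}_i + x_{0,i})$, so any $\psi \in \psL(k)$ becomes a pseudo-Lipschitz function of $(h^{t+1}_i, x_{0,i})$, and convergence of the joint empirical distribution together with convergence of the $(2k-2)$-th moment, both consequences of the inductive claim and the moment assumptions on $p_{X_0}$ and $p_W$, suffice to conclude. I expect the main obstacle to lie in propagating moment and concentration bounds uniformly across the induction. One must show that the Gram matrices $N^{-1}\langle q^r, q^s\rangle$ and $n^{-1}\langle m^r, m^s\rangle$ remain bounded and invertible in the limit, so that the conditional bias is well-defined, and that they converge to the state-evolution covariance at a polynomial-in-$N$ rate. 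This polynomial rate is what enables the \emph{almost-sure} conclusion via Borel--Cantelli, and in turn is why a $(2k-2)$-th moment is assumed rather than merely a weak limit. A secondary technical point is that $\eta_t$ is only almost-everywhere differentiable, so Stein's identity has to be applied through a Lipschitz-integration-by-parts or mollification argument rather than literally.
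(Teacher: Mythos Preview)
Your proposal is correct and follows essentially the same route as the paper: reduce to the general recursion (\ref{eq:mpMain}), prove an inductive lemma (Lemma~\ref{lem:elephant}) via Bolthausen's conditioning of $A$ on the $\sigma$-algebra of past matrix--vector products, use Stein's lemma for the Onsager cancellation, and control the Gram matrices to keep the conditional decomposition well-defined. The only minor deviation is that the paper obtains the almost-sure convergence not through explicit polynomial rates plus Borel--Cantelli but directly via a strong law of large numbers for triangular arrays (Theorem~\ref{thm:SLLN}, from Hu--Taylor), which spares you from tracking rates; also, the paper's reduction uses the sign convention $h^{t+1}=x_0-(A^*z^t+x^t)$ and $f_t(s,x_0)=\eta_{t-1}(x_0-s)-x_0$, equivalent to yours.
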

Up to a trivial change of variables, this is a formalization of
the findings of \cite{DMM09} (cf. in particular Eqs. (7), (8) and Finding 2
in that paper).

As an immediate consequence of the above theorem we have the following
\emph{decoupling principle} implying that a typical (finite) subset
of the coordinates of $x^t$ are asymptotically independent.
\begin{corollary}[Decoupling principle]\label{coro:Decoupling}
Under the assumption of Theorem \ref{thm:main}, fix  $\ell\ge 2$,
let $\psi:\reals^{2\ell}\to\reals$ be any Lipschitz function,
and denote by ${\sf E}$ expectation with respect to a
uniformly random subset of distinct indices
$J(1),\dots,J(\ell)\in [N]$.

Then for all $t>0$, almost surely
\begin{eqnarray}
\lim_{N\to\infty}{\sf E}\, \psi(x_{J(1)}^{t},\dots,x_{J(\ell)}^t,
x_{0,J(1)},\dots,x_{0,J(\ell)})
=\E\Big\{\psi\big(
\widehat{X}_{1},\dots,\widehat{X}_{\ell},X_{0,1},\dots,
X_{0,\ell}\big)\Big\}\, ,\label{eq:Kconv}
\end{eqnarray}
where
$\widehat{X}_i\equiv\eta_{t-1}(X_{0,i}+\tau_{t-1} Z_i)$
for  $X_{0,i}\sim p_{X_0}$ and $Z_i\sim \normal(0,1)$,
$i=1,\dots,\ell$ mutually independent.
\end{corollary}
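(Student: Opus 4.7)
The plan is to deduce \eqref{eq:Kconv} from Theorem \ref{thm:main} by a tensorization of the empirical measure. Let $\widehat{\mu}_N$ denote the joint empirical distribution of $\{(x^t_i,x_{0,i})\}_{i\in[N]}$, and let $\mu$ be the joint law of $(\widehat{X},X_0)$ with $\widehat{X}=\eta_{t-1}(X_0+\tau_{t-1}Z)$ and $Z\sim\normal(0,1)$ independent of $X_0$. Unpacking the definition of ${\sf E}$,
\begin{equation*}
{\sf E}\,\psi\big(x^t_{J(1)},\ldots,x^t_{J(\ell)},x_{0,J(1)},\ldots,x_{0,J(\ell)}\big) = \frac{1}{(N)_\ell}\sum_{\substack{i_1,\ldots,i_\ell\in[N]\\ \text{distinct}}}\psi\big(x^t_{i_1},\ldots,x^t_{i_\ell},x_{0,i_1},\ldots,x_{0,i_\ell}\big),
\end{equation*}
with $(N)_\ell=N(N-1)\cdots(N-\ell+1)$, so the statement to prove is that this converges almost surely to $\int\psi\, d\mu^{\otimes\ell}$.

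First I would treat tensor-product test functions $\psi(a_1,\ldots,a_\ell,b_1,\ldots,b_\ell)=\prod_{k=1}^{\ell}\phi_k(a_k,b_k)$ with each $\phi_k\in\psL(q)$ for a suitable order $q$. The \emph{unrestricted} $\ell$-fold empirical average factorizes,
\begin{equation*}
\frac{1}{N^\ell}\sum_{i_1,\ldots,i_\ell}\prod_{k=1}^{\ell}\phi_k(x^t_{i_k},x_{0,i_k})=\prod_{k=1}^{\ell}\left(\frac{1}{N}\sum_{i=1}^{N}\phi_k(x^t_i,x_{0,i})\right),
\end{equation*}
and by Theorem \ref{thm:main} each factor converges almost surely to $\E\,\phi_k(\widehat{X},X_0)$; hence the product converges almost surely to $\prod_k\E\,\phi_k(\widehat{X}_k,X_{0,k})$, which equals $\E\prod_k\phi_k(\widehat{X}_k,X_{0,k})$ by independence of the limit copies. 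The discrepancy between this unrestricted sum and the sum over distinct indices comprises $O(N^{\ell-1})$ tuples in which at least two indices coincide; their contribution, after normalization, is $O(1/N)$ by H\"older together with the almost-sure boundedness of $(1/N)\sum_i|\phi_k(x^t_i,x_{0,i})|^{\ell}$ (itself a consequence of Theorem \ref{thm:main} applied to $|\phi_k|^\ell$). By linearity the same convergence then holds for any finite linear combination of tensor-product test functions, that is, for any polynomial $\psi$.

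To pass from polynomial test functions to a general Lipschitz $\psi$, I would use Stone--Weierstrass to approximate $\psi$ uniformly on $[-M,M]^{2\ell}$ by a polynomial $p_{M,\epsilon}$ with sup-norm error $\epsilon$, and control the contribution from the complement $\{\|(a,b)\|_\infty>M\}$ via the linear growth $|\psi(x)|\le|\psi(0)|+L\|x\|$ combined with the uniform-in-$N$ moment bound $(1/N)\sum_i(|x^t_i|^p+|x_{0,i}|^p)=O(1)$ almost surely (again from Theorem \ref{thm:main}). Letting first $N\to\infty$, then $\epsilon\to 0$ and $M\to\infty$, and intersecting the null sets arising from a countable dense family of approximating polynomials, yields the corollary. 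The main obstacle is therefore not conceptual but one of bookkeeping: to make the polynomial-approximation step legitimate one must apply Theorem \ref{thm:main} at orders $k$ large enough to accommodate the monomials used and their $\ell$-th powers, which is permissible provided the moment assumptions on $p_{X_0}$ and $p_W$ are sharpened accordingly.
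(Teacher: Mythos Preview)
Your core idea matches the paper's: reduce to product test functions, factorize the (unrestricted) empirical average, apply Theorem~\ref{thm:main} to each factor, and absorb the distinct-indices correction as $O(1/N)$. The paper writes exactly
\[
{\sf E}\,\psi(x_{J(1)}^{t},\dots,x_{J(\ell)}^t,x_{0,J(1)},\dots,x_{0,J(\ell)})
=\prod_{s=1}^{\ell}\Big(\frac{1}{N}\sum_{i=1}^{N}\psi_s(x^t_i,x_{0,i})\Big)+O(1/N)\,,
\]
and invokes Theorem~\ref{thm:main} on each $\psi_s$.

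Where you diverge is in the passage from products to a general Lipschitz $\psi$. The paper does not go through Stone--Weierstrass and polynomials; instead it observes that the claim is equivalent to weak convergence of the law of $(x^t_{J(1)},\dots,x^t_{J(\ell)},x_{0,J(1)},\dots,x_{0,J(\ell)})$ to that of $(\widehat{X}_1,\dots,\widehat{X}_\ell,X_{0,1},\dots,X_{0,\ell})$, and weak convergence on $\reals^{2\ell}$ is determined by the algebra of \emph{bounded} Lipschitz products $\psi_1(x_1,y_1)\cdots\psi_\ell(x_\ell,y_\ell)$. Each bounded Lipschitz $\psi_s$ lies in $\psL(2)$, so Theorem~\ref{thm:main} applies under exactly the stated hypotheses, and no extra moments are needed. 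Your polynomial route, as you correctly flag, would force you to invoke Theorem~\ref{thm:main} at pseudo-Lipschitz orders growing with the approximation degree, which the corollary's assumptions do not provide; this is a genuine obstacle, not mere bookkeeping. (It can be circumvented by truncating the polynomial to a bounded Lipschitz function on $[-M,M]^{2\ell}$ before applying Theorem~\ref{thm:main}, but at that point you are essentially reproducing the weak-convergence argument by hand.) The paper's framing is therefore both shorter and sharper: it delivers the corollary under the hypotheses as stated.
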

For the proof of this corollary we refer to Section
\ref{sec:CoroProof}.
%
%
\subsection{Universality}

Our proof technique heavily relies on the assumption that $A(N)$
is gaussian. Nevertheless, we expect the convergence expressed
in Theorem \ref{thm:main} to be a fairly general result.
In particular, we expect it to hold for matrices with i.i.d.
entries with zero mean and variance $1/n$, under a suitable
moment condition. This type of \emph{universality} is quite common
in random matrix theory, and several arguments suggest that it
should hold in the present case. For instance,
it is possible to prove that state evolution holds for
this broader class of random matrices when $\eta_t(\,\cdot\,)$
is affine. Also, the heuristic argument discussed in the next
section is clearly insensitive to the details of distribution of the
entries.

Numerical evidence presented in \cite{DMM09}
(we refer in particular to the online supplement)
suggests that state evolution might hold for an even broader class of
matrices. Determining the domain of such an universality class is an
outstanding open problem.
%
%
\subsection{State evolution: the basic intuition}
\label{sec:Intuition}

The state evolution recursion has a simple heuristic
description, that is useful to present here since it
clarifies the difficulties  involved in the proof.
In particular, this description brings up the key role
played by the last term in the update equation
for $z^t$, that we will call the `Onsager term',
following \cite{DMM09}.

Consider again the recursion (\ref{eq:dmm}), but
introduce the following three modifications:
$(i)$ Replace the random matrix $A$ with a new
independent copy $A(t)$ at each iteration $t$;
$(ii)$ Correspondingly replace the observation vector $y$
with $y^t=A(t)x_0+w$;
$(iii)$ Eliminate the last term in the update equation for $z^t$.
We thus get the following dynamics:
\begin{align}
x^{t+1}& = \eta_t(A(t)^*z^t+x^t)\, ,\\
z^{t}& = y^t-A(t)x^t\, ,
\end{align}
where $A(0),A(1),A(2),\dots$ are i.i.d. matrices of dimensions $n\times N$
with i.i.d. entries $A_{ij}(t)\sim \normal(0,1/n)$.
(Notice that, unlike in the rest of the paper, we use here the
argument of $A$ to denote the iteration number, and not the matrix
dimensions.)

This recursion is most
conveniently written by eliminating $z^t$:
\begin{align}
x^{t+1}& = \eta_t\big(A(t)^*y^t+(\identity-A(t)^*A(t))x^t\big)\, ,\nonumber\\
       & = \eta_t\big(x_0+A(t)^*w+B(t)(x^t-x_0)\big)\, ,
\label{eq:ModifiedRecursionNoAlg}
\end{align}
where we defined $B(t) = \identity-A(t)^{*}A(t)\in\reals^{N\times N}$.
Notice that this recursion does not correspond to any concrete algorithm,
since the matrix $A$ changes from iteration to iteration.
It is nevertheless useful for developing intuition.

Using the central limit theorem, it is easy
to show that each entry of $B(t)$ is approximately
normal, with  zero mean and variance $1/n$.
Further, distinct entries are approximately pairwise independent.
Therefore, if we let
$\htau_t^2 = \lim_{N\to\infty}\|x^t-x_0\|^2/N$, we obtain that $B(t)(x^t-x_0)$
converges to a vector with i.i.d. normal entries with $0$ mean and
variance $N\htau_t^2/n = \htau_t^2/\delta$. Notice that this is true
because $A(t)$ is independent of $\{A(s)\}_{1\le s\le t-1}$ and,
in particular, of $(x^t-x^0)$.

Conditional on $w$, $A(t)^*w$ is a vector of i.i.d. normal
entries with mean $0$  and variance $(1/n)\|w\|^2$
which converges by the law of large numbers to $\sigma^2$.
A slightly longer exercise shows that these entries are
approximately independent from the ones of $B(t)(x^t-x_0)$.
Summarizing, each entry of the vector in the argument
of $\eta_t$ in Eq.~(\ref{eq:ModifiedRecursionNoAlg})
converges to $X_0+\tau_t Z$ with $Z\sim\normal(0,1)$
independent of $X_0$, and
\begin{align}
\tau_t^2 &= \sigma^2+\frac{1}{\delta}\htau_t^2\, ,\label{eq:FirstHeuristic}\\
\htau_t^2 & =\lim_{N\to\infty}\frac{1}{N}\|x^t-x_0\|^2\, .\nonumber
\end{align}
On the other hand, by Eq.~(\ref{eq:ModifiedRecursionNoAlg}),
each entry of $x^{t+1}-x_0$ converges to $\eta_t(X_0+\tau_t\, Z)-X_0$,
and therefore
\begin{align}
\htau_{t+1}^2 & =\lim_{N\to\infty}\frac{1}{N}\|x^{t+1}-x_0\|^2
=\E\big\{[\eta_t(X_0+\tau_t\, Z)-X_0]^2\big\}\, .\label{eq:SecondHeuristic}
\end{align}
Using together Eq.~(\ref{eq:FirstHeuristic}) and (\ref{eq:SecondHeuristic})
we finally obtain the state evolution recursion, Eq.~(\ref{eq:SERecursion}).

We conclude that state evolution would hold if the matrix $A$
was drawn independently from the same gaussian distribution at
each iteration. In the case of interest, $A$ does not change across
iterations, and the above argument falls apart because
$x^t$ and $A$ are dependent. This dependency is non-negligible even
in the large system limit $N\to\infty$. This point can be clarified by
considering the iteration
\begin{align}
x^{t+1}& = \eta_t(A^*z^t+x^t)\, ,\label{IST}\\
z^{t}& = y-Ax^t\, ,\nonumber
\end{align}
with a matrix $A$ constant across iterations.
This iteration is the basis of several algorithms in
compressed sensing, most notably the so-called `iterative soft thresholding'
\cite{DaDeDe04}. Such algorithms have been the object of
great interest because of the high computational cost of
standard convex optimization methods in large scale applications.

Numerical studies of iterative soft thresholding
\cite{DonohoMalekiTuned,DMM09} show that its behavior is dramatically
different from the one in Eq.~(\ref{eq:dmm}) and in particular
\emph{state evolution does not hold for the
iterative soft thresholding iteration} (\ref{IST}), even in the large
system limit.

This is not a surprise: the correlations between $A$ and $x^t$
simply cannot be neglected.
On the other hand, adding the Onsager term leads to an asymptotic cancelation
of these correlations. As a consequence, state evolution
holds for the AMP iteration (\ref{eq:dmm}) despite the fact that the matrix is kept
constant.
%
%
\subsection{Related literature}
\label{sec:Related}

As mentioned, the standard argument for justifying
density evolution relies on the locally-tree like
structure of the underlying graph. This argument was
developed and systematically exploited for the analysis of low-density
parity-check (LDPC) codes under iterative decoding \cite{RiU08}.
In this context,
density evolution provides an exact tool for computing asymptotic
thresholds of code ensembles based on sparse graph constructions.
Optimization of these thresholds has been a major design principle
in LDPC codes.

The locally tree-like property is a special case of local
weak convergence.
Local weak convergence of graph sequences was first defined
and studied in probability theory by Benjamini and Schramm
\cite{BenjaminiSchramm}, and then greatly developed by David Aldous
\cite{AldousSteele}, in particular to study the so called
`random assignment problem' \cite{AldousAssignment}.
Loosely  speaking, local weak convergence allows to treat sequences of
graphs of increasing size, such that the neighborhood of a node
converges to a well defined limit object.

The random assignment problem is defined as a distribution of random
instances of the assignment problem on
complete bipartite graphs. In particular, such graphs are
not locally tree-like.
Nevertheless, they admit a rather simple local weak limit (called
the PWIT), which is a tree. The basic reason is that only a sparse subgraph
of the complete bipartite graph is relevant for the minimum cost assignment,
namely the one of edges with small cost.
One concrete way to derive density evolution in this case is indeed
to eliminate all the edges of cost larger than --say--
$\Delta_n/n$ with $\Delta_n$ diverging slowly with the graph size $n$.
 The resulting graph is sparse and one can apply standard arguments
(cf. \cite{MezardMontanari} for an outline of this argument).
A more sophisticated argument was presented in
\cite{SalezShah} which nevertheless uses
the existence of a non-trivial local weak
limit, and the fact that only a sparse subgraph is relevant (Lemma 4.1
in \cite{SalezShah}).

This reduction to a sparse graph, and hence to a limit tree,
is impossible in the class of algorithms
studied in our paper: the
algorithm iteration cannot be approximated
by an iteration on a sparse graph (at least not on an
instance-by-instance basis).
This corresponds to the fact that no (simple) local weak limit
exists in our case. The underlying graph is the complete
bipartite graph with vertex sets $[N]\equiv\{1,2,\ldots,N\}$ and $[n]\equiv\{1,2,\ldots,n\}$,
and edge-weights $A_{ai}$ for all $(a,i)\in [n]\times [N]$.
If we choose a node $i\in [N]$ as root, its depth-$1$ neighborhood
consists of $[n]$ node, each carrying a weight of order $1/\sqrt{n}$.
Even this small neighborhood has no simple local weak limit.

 This difference is analogous to the difference between
mean-field spin glasses (e.g. the Sherrington-Kirkpatrick model)
and the random assignment problem \cite[Chapter 7]{TalagrandBookLatest}.
As a consequence, our proof does not rely on local weak convergence,
and has to deal directly with the intricacies of graphs with many short cycles.

The theorem proved in this paper is not only relevant
for \cite{DMM09} but for a larger context as well. First of all,
following the work by Tanaka \cite{TanakaCDMA},
hundreds of papers have been published
in information theory using the replica method to study multi-user
detection problems.
In its replica-symmetric version, the replica method typically predicts
the system performances through the solution of
a system of non-linear equations, which coincide
with the fixed point equations for state evolution.
The present result provides a rigorous
foundation to that line of work, along with the analysis of a concrete
algorithm that achieves those performance.
Further, \cite{GuoVerdu} insisted on the role of a `decoupling principle'
that emerges from the replica method, and on the insight it provides.
Corollary \ref{coro:Decoupling} indeed proves a specific form of
this decoupling principle.

A more recent line of works uses the replica method to
study typical performances of compressed sensing methods.
Although non-rigorous and limited to asymptotic statements,
the replica method has the advantage of providing sharp predictions.
Standard techniques instead predict performances up to undetermined
multiplicative constants. The determination of these constants
can be of guidance for practical applications. This motivated
several groups to publish results based on the replica method
\cite{Goyal,KabashimaTanaka,BaronGuoShamai}.
The present paper provides a rigorous foundation to this work as well.

%
%
\section{Examples}
\label{sec:Examples}

In this section we discuss in greater detail
some of the applications of Theorem \ref{thm:main} to specific problems.
To be definite, it is convenient to keep in mind
a specific observable for applying  Theorem \ref{thm:main}.
If we choose the test function $\psi(x,y) = (x-y)^2$, we get almost surely
\begin{eqnarray}
\lim_{N\to\infty}\frac{1}{N}\|x^t-x_0\|^2 = (\tau_t^2-\sigma^2)\delta\, .
\label{eq:MSE}
\end{eqnarray}
Therefore state evolution allows to predict the mean square error of the
iterative algorithm (\ref{eq:dmm}). More generally, state evolution can be
used to estimate $\ell_p$ distances for $p\leq k$ through
\begin{eqnarray}
\lim_{N\to\infty}\frac{1}{N}\|x^t-x_0\|^p_p =
\E\big\{[\eta_{t-1}(X_0+\tau_{t-1}Z)-X_0]^p\big\}\,,
\label{eq:ellp}
\end{eqnarray}
almost surely.
%
%
\subsection{Linear estimation}

As a warm-up example consider the case in which the
\emph{a priori} distribution of $x_0$ is gaussian, namely
its entries are i.i.d. $\normal(0,v^2)$.
It is a consequence of state evolution that the
optimal AMP algorithm makes use of linear scalar estimators
\begin{eqnarray}
\eta_t(x) = \lambda_t\, x\, .
\end{eqnarray}
Clearly, such functions are Lipschitz continuous, for any
$\lambda_t$ finite. The AMP algorithm (\ref{eq:dmm}) becomes
\begin{align}
x^{t+1}&=\lambda_t(A^*z^t+x^t),\label{eq:linear}\\
z^t &= y - Ax^t+(\lambda_{t-1}/\delta) \, z^{t-1}
\, .\nonumber
\end{align}
State evolution reads
\begin{align}
\tau_{t+1}^2&  =\sigma^2+\frac{1}{\delta}
(1-\lambda_t)^2v^2+\frac{1}{\delta}\lambda_t^2\tau_t^2\, .\label{LinearSE}
\end{align}
Theorem \ref{thm:main} also shows that
the empirical distribution of  $\{(A^*z^t+x^t)_i-x_{0,i}\}_{i\in [N]}$
is asymptotically gaussian with
mean $0$ and variance $\tau_t^2$. Hence, the optimal choice of $\lambda_t$
is
\begin{eqnarray}
\lambda_t = \frac{v^2}{v^2+\tau_t^2}\, .
\end{eqnarray}
Notice that this also minimizes the right hand side of
Eq.~(\ref{LinearSE}). Under this choice, the recursion
(\ref{LinearSE}) yields
\begin{eqnarray}
\tau_{t+1}^2 = \sigma^2 + \frac{1}{\delta}\, \frac{v^2\tau_t^2}
{v^2+\tau_t^2}\, .
\end{eqnarray}
The right hand side is a concave function of $\tau_t^2$,
and is easy to show that $\tau_t\to\tau_{\infty}$ exponentially fast,
where, for $c= (1-\delta)/\delta$,
\begin{eqnarray}
\tau_{\infty}^2 = \frac{1}{2}\Big\{
(\sigma^2+cv^2)+\sqrt{(\sigma^2+cv^2)^2+4\sigma^2v^2}\Big\}
\, .
\end{eqnarray}
The mean square error of the resulting algorithm
is estimated via Eq.~(\ref{eq:MSE}). In particular, under
the optimal choice of $\lambda_t$, the latter converges to
$(\tau_{\infty}^2-\sigma^2)\delta$ with $\tau_{\infty}$
given as above, thus yielding
\begin{eqnarray}
\lim_{t\to\infty}\lim_{N\to\infty}\frac{1}{N}\|x^t-x_0\|^2 = \frac{\delta}{2}\Big\{
(-\sigma^2+cv^2)+\sqrt{(\sigma^2+cv^2)^2+4\sigma^2v^2}\Big\}
\, .\label{eq:linearMSE}
\end{eqnarray}

We recall that the  asymptotic mean square error of optimal
(MMSE) linear estimation was computed by Tse-Hanly and Verd\'{u}-Shamai
in the case of random matrices $A$ with i.i.d. entries
\cite{TseHanly,VerduShamaiCDMA}. The motivation came from
the analysis of multiuser receivers.
The resulting MSE coincides with the value predicted
in Eq.~(\ref{eq:linearMSE}), thus showing that --in the linear case--
the AMP algorithm is asymptotically equivalent to the MMSE
estimator.

Notice that the computation of the MMSE in
\cite{TseHanly,VerduShamaiCDMA} relied heavily on the Marcenko-Pastur
law for the limit spectral  law of Wishart matrices
\cite{MarcenkoPastur}. Conversely, any calculation of the MMSE
as a function of the noise variance $\sigma^2$ gives access to
the asymptotic Stieltjis transform of the spectral measure of $A$.
This suggests that state evolution is a  non-trivial result already in
the case of linear $\eta_t(\,\cdot\,)$, since it can be used to
derive the Marcenko-Pastur law in random matrix theory.

%
%
\subsection{Compressed sensing via soft thresholding}

In this case the vector $x_0$ is $\ell$ sparse (i.e. it has at most
$\ell$ non-vanishing entries). Assuming that the empirical distribution
of $x_0$ converges to the probability measure $p_{X_0}$, it
is also natural to assume $\ell/N\to\ve$ as $N\to\infty$ with
\begin{eqnarray}
\prob\{X_0\neq  0\} = \ve\, . \label{eq:SparseSource}
\end{eqnarray}
(Indeed  Theorem \ref{thm:main} accommodates for a more general
behavior, since $\empr_{x_0(N)}$ is only required to converge
weakly.)

In \cite{DMM09}, the authors proposed an algorithm of the form
(\ref{eq:dmm}) with $\eta_t(x) = \eta(x;\theta_t)$ a sequence of
soft-threshold functions
\begin{eqnarray}
\eta(x;\theta) = \left\{
\begin{array}{ll}
(x-\theta) & \mbox{ if $x>\theta$,}\\
0 & \mbox{ if $-\theta\le x\le \theta$,}\\
(x+\theta) & \mbox{ if $x<-\theta$.}
\end{array}\right.\label{eq:SoftThreshold}
\end{eqnarray}
The function $x\mapsto \eta(x;\theta)$ is non-linear but nevertheless
it is Lipschitz continuous. Therefore
Theorem \ref{thm:main} applies to this case, and allows
to predict the asymptotic mean square error using
Eqs.~(\ref{eq:SERecursion}) and (\ref{eq:SE2}).

This choice of the nonlinearity $\eta_t$
is close to the optimal in minimax sense. Indeed,
a substantial literature
(see e.g. \cite{DJ94a,DJ94b}) studies the problem of estimating
the scalar $X_0$ from the noisy observation
\begin{eqnarray}
Y=X_0+Z\, ,\label{eq:ScalarProb}
\end{eqnarray}
with $Z\sim\normal(0,s^2)$. For an appropriate choice
of the threshold $\theta=\theta(\ve,s)$, and $\ve\downarrow 0$
(very sparse sources), the  soft thresholding estimator
was proved to be minimax optimal, i.e. to achieve the minimum
worst-case MSE over the class (\ref{eq:SparseSource}).
State evolution allows to deduce that
the choice (\ref{eq:SoftThreshold}) yields the best algorithm
of the form (\ref{eq:dmm}) for estimating sparse vectors, over the
worst-case  vector $x_0$ \cite{DMM09}.

It is argued in \cite{DMM09,NSPT}, and proved in \cite{InPreparation}
in the case of gaussian matrices,
that the asymptotic MSE of AMP coincides with the one of a popular
convex optimization estimation technique, known as the LASSO.
The above argument is suggestive of a possible way to prove minimax
optimality of the LASSO.

Finally, state evolution provides a systematic way of
improving the choice of the non-linearities $\eta_t$
when the class of signal changes. The basic idea is to choose
the function $\eta_t$ that minimizes the right-hand side of
Eq.~(\ref{eq:SERecursion}) in minimax sense.
This corresponds to constructing minimax MMSE estimators
for the scalar problem (\ref{eq:ScalarProb}).
For instance,
the limit case in which the distribution of $X_0$ is known,
the MMSE estimator is simply conditional expectation,
which leads to the choice
\begin{eqnarray}
\eta_t(x) = \E\{X_0\, |\, X_0+\tau_t\, Z=x\}\, ,\label{eq:GeneralMMSE}
\end{eqnarray}
with $Z\sim\normal(0,1)$. In other words, the very choice
of the non-linearities is dictated by the gaussian
convergence phenomenon described in Theorem \ref{thm:main}.
%
%
\subsection{Multi-User Detection}

The model (\ref{eq:Model}) is used to describe the
input-output relation in code division multiple access (CDMA)
channel. The matrix $A$ contains the users' signatures.
A frequently used setting for theoretical analysis
consists in taking the large system limit with $n/N\to\delta$
giving the spreading factor,
and in assuming that the signatures (and hence $A$) have i.i.d. components.
The entries $x_{0,i}$ belong to
the signal constellation used by the system.
For the sake of
simplicity, we consider the case of antipodal signaling,
i.e. $x_{0,i}\in\{+1,-1\}$ uniformly at random.
Other signal constellations can also be treated applying
our Theorem \ref{thm:main}. The hypothesis
that $x_{0}$ is independent from $A$ is also standard
in this context and justified by the remark that the transmitted
information is independent from the signatures. Further,
the source-channel separation theorem naturally leads to the uniform
distribution.

Following \cite{Kab03,NeirottiSaad,MoT06}
we take
\begin{eqnarray}
\eta_t(x) = \tanh\big\{x/\tau_t^2\big\}\, .
\end{eqnarray}
The rationale for this choice is that it gives the conditional expectation
of a uniformly random signal $X_{0}\in\{+1,-1\}$, given the observation
$X_{0}+\tau_t\, Z=x$\, for\, $Z\sim\normal(0,1)$ gaussian noise.
This is therefore a special case of the rule (\ref{eq:GeneralMMSE})
and by the argument given there, it achieves minimal mean-square error
within the class of algorithms (\ref{eq:dmm}).

The algorithm (\ref{eq:dmm}) reads in this case
\begin{align}
x^{t+1}&=\tanh\Big\{\frac{1}{\tau_t^2}(A^*z^t+x^t)\Big\}
\, ,\label{eq:CDMA}\\
z^t &= y - Ax^t+\f{z^{t-1}}{\delta\tau_t^2} \Big\{1-
\left\<\tanh^2\big[(A^*z^t+x^t)/\tau_t^2]\right\>\Big\}
\, .\nonumber
\end{align}

State evolution yields
\begin{eqnarray}
\tau_{t+1}^2 = \sigma^2+\frac{1}{\delta}\, \E\Big\{
\big[\tanh\big(\tau_t^{-2}+\tau_t^{-1}Z\big)-1\big]^2\Big\}\, .
\label{eq:SE_Multiuser}
\end{eqnarray}
This state evolution recursion was proved in \cite{MoT06}
for properly chosen sparse signature matrices $A$.
Theorem \ref{thm:main} provides the first generalization to
the more relevant case of dense signatures.

As mentioned in Section \ref{sec:Related}, Tanaka used the replica
method to compute the asymptotic performance of a MMSE receiver.
The expressions obtained through this method correspond to
a fixed point of the recursion (\ref{eq:SE_Multiuser}).
It was further proved in \cite{MoT06} that, whenever the fixed point is unique,
this prediction is asymptotically correct. For such values
of the parameters, we deduce that the AMP algorithm is
asymptotically equivalent to the MMSE receiver.

Let us point out that, in a practical setting,
it might be inconvenient to estimate the noise variance and/or
to change the function $\eta_t$ across iterations.
Several authors (see for instance  \cite{Tanaka0}) used the function
\begin{eqnarray}
\eta_t(x) = \tanh\big\{\beta x\big\}\, .
\end{eqnarray}
State evolution can be applied in this case as well (for any
finite $\beta$) and reads
\begin{eqnarray}
\tau_{t+1}^2 = \sigma^2+\frac{1}{\delta}\, \E\Big\{
\big[\tanh\big(\beta+\beta\tau_tZ\big)-1\big]^2\Big\}\, .
\end{eqnarray}
On the other hand the case $\beta\to\infty$ is not
covered by our Theorem \ref{thm:main}, since it corresponds to the
discontinuous function $\eta_t(x) = {\rm sign}(x)$.
%
%
\section{Proof}
\label{sec:Analysis}

The proof is based on a conditioning technique developed by
Erwin Bolthausen for the analysis of the so-called TAP
equations in spin glass theory \cite{Bolthausen}.
Related ideas can also be found in \cite{DonohoIterative}.

In the next section, we provide a high-level description of the
conditioning technique, by using a simpler
type of recursion as reference. We will then
introduce some new notations and state and prove a more
general result than Theorem \ref{thm:main}.

%
%
\subsection{The conditioning technique: an informal description}
\label{sec:HighLevel}

For understanding the conditioning technique, it is convenient to
consider a somewhat simpler setting, namely the one of symmetric
matrices. This will be discussed more formally in
Section \ref{sec:Symmetric}. Let $G=A^*+A$ where
$A\in\reals^{N\times N}$ has i.i.d. entries $A_{ij}\sim\normal(0,(2N)^{-1})$.
We consider the iteration
\begin{align}
h^{t+1}&=Gm^{t}-\lambda_t m^{t-1},\label{eq:TAP-1st-appearence}\\
m^t&=f(h^t)\, ,
\end{align}
for $f:\reals\to\reals$
a non-linear function and $m^{-1} = 0$. For the sake of simplicity,
$h^0=0$. The correct expression for the scalar
$\lambda_t$ is provided in Section \ref{sec:Symmetric},
and state evolution holds only if this value is used.
On the other hand, this expression is not important for our
informal discussion here.

Consider the first iteration. By definition $m^{-1}=0$,
whence $h^1=Gf(0)$ is a vector with i.i.d. gaussian components
with variance $\|f(0)\|^2/N$. This follows in particular by the rotational
invariance of the distribution of $G$, which implies that, for a deterministic
vector $v$, $Gv$ is distributed as
$\|v\|Ge_1$ for $e_1$ the first vector of the standard basis of $\reals^N$
(see also Lemma \ref{lem:prop-Gaussian-matrix} below).

Now consider the $t^{\rm th}$ iteration (i.e., $h^{t+1}=Gm^t-\lambda_t m^{t-1}$).
The problem in repeating the above argument is
that $G$ and $f(m^t)$ are dependent. For instance $f(m^t)$
might \emph{a priori} align with the minimum eigenvector of $G$.
More generally the problem is that
$G$ is not independent from the $\sigma$-algebra
$\sigal{S}_t$ generated by $\{h^0,h^1,\dots,h^t\}$.

The key idea in the conditioning technique is to avoid
computing the conditional distribution of $m^t$ given $G$.
We  instead compute the \emph{conditional distribution of $G$ given
$\sigal{S}_t$}.

The next important remark is that
$\sigal{S}_t$ contains  $\{m^0,m^1,\dots,m^t\}$
as well. Conditioning on $\sigal{S}_t$ is therefore equivalent to
conditioning on the event
\begin{eqnarray*}
{\cal E}_t \equiv \{h^1 +\lambda^{0}m^{-1}= Gm^{0},\dots
,h^t +\lambda^{t-1}m^{t-2}= Gm^{t-1}\}\,.
\end{eqnarray*}
which is in turn equivalent to making a set of linear observations
of $G$.

At this point, the assumption that $G$ is gaussian plays a crucial role.
The conditional distribution of a gaussian random variable $G$
given linear observations is the same as its conditional
expectation plus the projection of an independent gaussian.
In formulae:
\begin{eqnarray*}
G|_{\sigal{S}_t} &\deq& G|_{{\cal E}_t} \deq \E\{G|\sigal{S}_t\}
+ {\rm P}^t_{\perp} G^{\rm new}{\rm P}^t_{\perp}\, ,
\end{eqnarray*}
with ${\rm P}^t_{\perp}$ an appropriate projector. If we
write $E_t \equiv \E\{G|\sigal{S}_t\}$, we have
\begin{eqnarray*}
Gm^t|_{\sigal{S}_t} &\deq&  G^{\rm new}({\rm P}^t_{\perp}m^t)-(I-
{\rm P}^t_{\perp}) G^{\rm new}({\rm P}^t_{\perp}m^t)+E_tm^t\, .
\end{eqnarray*}
We refer to the actual proof for a calculation of the various terms
involved.

Each of the above terms can be written
explicitly as a function of the observed values $\{m^0,m^1,\dots,m^t\}$
and of the new gaussian random variables $G^{\rm new}$.
The first term $G^{\rm new}({\rm P}^t_{\perp}m^t)$ is clearly
gaussian. The other terms are not.
 In order to control them, we will proceed by induction over $t$
and use an appropriate strong law of large numbers for triangular
arrays. The key phenomenon
is that the only non-gaussian term that does not vanish in the
large system limit cancels with the term $-\lambda_tm^{t-1}$
in recursion \eqref{eq:TAP-1st-appearence}, thus implying the claimed gaussianity of $h^{t+1}$.
%
%
\subsection{A general result}
\label{sec:General}

We describe now a more general recursion than in Eq.~(\ref{eq:dmm}).
In the next section we show that the AMP algorithm (\ref{eq:dmm})
can be regarded as a special case of the recursion defined here.


The algorithm is defined by two sequences of functions
$\{f_t\}_{t\ge 0}$, $\{g_t\}_{t\ge 0}$, where for each $t\ge 0$,
$f_t:\reals^2\to\reals$ and $g_t:\reals^2\to\reals$
are assumed to be Lipschitz continuous. Recall that
Lipschitz functions are continuous,
and are almost everywhere continuously differentiable with a bounded
derivative.
As before, given $a,b\in \reals^{K}$,
we write $f_t(a,b)$ for the vector obtained by applying componentwise
$f_t$ to $a$, $b$. When $b$ is clear from the context we will just write, with an abuse of notation,
$f_t(a)$.
We will use analogous notations for $g_t$.

Given $w\in\reals^n$ and $x_0\in\reals^N$,
define the sequence of vectors
$h^t,q^t\in\reals^N$ and
$z^t,m^t\in\reals^n$, by fixing initial condition $q^0$, and
obtaining $\{b^t\}_{t\ge0}$, $\{m^t\}_{t\ge0}$,
$\{h^t\}_{t\ge1}$, and $\{q^t\}_{t\ge1}$ through
\begin{eqnarray}
h^{t+1}&=& A^*m^t-\xi_t\, q^t\, ,\;\;\;\;\;\;\;\;
m^t=g_t(b^t,w)\, ,\nonumber\\
b^t&=& A\,q^t-\lambda_t m^{t-1} \, ,\;\;\;\;\;\;\;
q^t=f_t(h^t,x_0)\, ,\label{eq:mpMain}
\end{eqnarray}
where $\xi_t = \< g'_t(b^t,w)\>$, $\lambda_t=\f{1}{\delta}
\< f'_t(h^{t},x^0)\>$ (both derivatives are with respect to the first
argument), and we recall that --by definition-- $m^{-1}=0$.

Assume that the limit
\begin{eqnarray}
\sigma^2_{0}\equiv\lim_{N\to\infty}\frac{1}{N\delta}\|q^0\|^2\label{eq:limit-<||q0||^2>}
\end{eqnarray}
exists, is positive and finite, for a sequence of initial conditions of increasing dimensions.
State evolution defines quantities $\{\tau_t^2\}_{t\ge0}$ and $\{\sigma_t^2\}_{t\ge0}$ via
\begin{eqnarray}
\tau_{t}^2 = \E\big\{ g_t(\sigma_{t} Z,W)^2\big\}\, ,
\;\;\;\;\; \sigma_{t}^2 = \frac{1}{\delta}\,
\E\big\{ f_t(\tau_{t-1} Z,X_0)^2\big\}\, ,\label{eq:tau-sigma-recursion}
\end{eqnarray}
where $W\sim \pW$ and $X_0\sim p_{X_0}$ are independent of $Z\sim\normal(0,1)$.
Further, recall the notion of pseudo-Lipschitz function for $k>1$ from Section \ref{subsec:main-result}.
We have the following general result.
\begin{theorem}\label{thm:general}
Let $\{q_0(N)\}_{N\ge 0}$ and $\{A(N)\}_{N\ge 0}$ be, respectively,
a sequence of initial conditions and a sequence of matrices $A\in\reals^{n\times N}$
indexed by $N$ with i.i.d. entries $A_{ij}\sim \normal(0,1/n)$.
Assume $n/N\to\delta\in (0,\infty)$. Consider sequences of vectors $\{x_0(N),w(N)\}_{N\ge 0}$, whose
empirical distributions converge weakly to  probability measures
$p_{X_0}$ and $\pW$ on $\reals$  with bounded $(2k-2)^{th}$ moment, and
assume:
\begin{itemize}
\item[(i)] $\lim_{N\to\infty}\E_{\empr_{x_0(N)}}(X_0^{2k-2})
=\E_{p_{X_0}}(X_0^{2k-2})<\infty$.
\item[(ii)] $\lim_{N\to\infty}\E_{\empr_{w(N)}}(W^{2k-2})=
\E_{\pW}(W^{2k-2})<\infty$.
\item[(iii)] $\lim_{N\to\infty} \E_{\empr_{q_0(N)}}( X^{2k-2})<\infty$.
\end{itemize}
Then, for any pseudo-Lipschitz function $\psi:\reals^2\to\reals$
of order $k$ and all $t\ge 0$, almost surely
\begin{eqnarray}
\lim_{N\to\infty}\f{1}{N}\sum_{i=1}^N\psi(h_i^{t+1},x_{0,i})
=\E\Big\{ \psi\big(\tau_t Z,X_0\big)\Big\}\, ,\\
\lim_{n\to\infty}\f{1}{n}\sum_{i=1}^n\psi(b_i^{t},w_{i})
=\E\Big\{ \psi\big(\sigma_t Z,W\big)\Big\}\, ,
\end{eqnarray}
where $X_0\sim p_{X_0}$ and $W\sim \pW$ are independent of $Z\sim \normal(0,1)$,
and $\sigma_t$, $\tau_t$ are determined by recursion
(\ref{eq:tau-sigma-recursion}).
\end{theorem}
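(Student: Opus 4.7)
The plan is to prove Theorem \ref{thm:general} by induction on the iteration index $t$, using the Bolthausen conditioning technique sketched in Section \ref{sec:HighLevel}. The natural inductive hypothesis $\mathcal{H}_t$ bundles together several statements: (i) joint pseudo-Lipschitz convergence of $(h^1,\dots,h^t)$ in $\reals^N$ to a centered Gaussian vector whose covariance entries are prescribed by state evolution; (ii) the analogous statement for $(b^0,\dots,b^{t-1})$ in $\reals^n$; (iii) almost-sure convergence of the normalized inner products $\<q^r,q^s\>/N$ and $\<m^r,m^s\>/n$ to the corresponding Gaussian covariances; (iv) almost-sure convergence of the scalars $\xi_s$ and $\lambda_s$ to their natural limits $\E\{g_s'(\sigma_s Z,W)\}$ and $\f{1}{\delta}\E\{f_s'(\tau_{s-1}Z,X_0)\}$; and (v) uniform non-degeneracy of the Gram matrices of $\{q^0,\dots,q^{t-1}\}/\sqrt{N}$ and $\{m^0,\dots,m^{t-1}\}/\sqrt{n}$. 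Theorem \ref{thm:general} then follows immediately from $\mathcal{H}_t$ by specializing (i) and (ii) to the given test function $\psi$.

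For the inductive step, let $\sigal{S}_t$ denote the $\sigma$-algebra generated by all iterates produced through step $t$. Conditioning on $\sigal{S}_t$ is equivalent to imposing on $A$ the linear observations $\{A\,q^s = b^s+\lambda_s m^{s-1},\; A^*m^s = h^{s+1}+\xi_s q^s : s\le t\}$. Since $A$ has i.i.d. Gaussian entries, its conditional distribution admits a representation of the form
\[
A\,\big|_{\sigal{S}_t} \;\deq\; E_t \;+\; \mathsf{P}^{\perp}_{m}\, \tA\, \mathsf{P}^{\perp}_{q},
\]
where $E_t=\E[A\mid \sigal{S}_t]$ is an explicit linear functional of the iterates involving pseudo-inverses of the Gram matrices above, $\tA$ is an independent $n\times N$ matrix with i.i.d. $\normal(0,1/n)$ entries, and $\mathsf{P}^{\perp}_{m}, \mathsf{P}^{\perp}_{q}$ project onto the orthogonal complements of $\mathrm{span}(m^0,\dots,m^{t-1})$ and $\mathrm{span}(q^0,\dots,q^t)$. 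Substituting this into $h^{t+1}=A^*m^t-\xi_t q^t$ decomposes the update into (a) a Gaussian piece $\tA^*\mathsf{P}^{\perp}_q m^t$, whose entries are conditionally i.i.d. $\normal(0,\|\mathsf{P}^{\perp}_q m^t\|^2/n)$ with variance converging to $\tau_t^2$ by $\mathcal{H}_t$; (b) an explicit linear combination of $q^0,\dots,q^{t-1}$ whose coefficients converge by $\mathcal{H}_t$ to the constants that reproduce the correct state-evolution covariance with the previously constructed Gaussians $h^1,\dots,h^t$; and (c) a multiple of $q^t$ contributed by $E_t m^t$, whose coefficient is an empirical average identified via Stein-type integration by parts as precisely $\<g_t'(b^t,w)\>=\xi_t$, so that the Onsager term exactly cancels it. A strong law for triangular arrays applied to the fresh Gaussian block then upgrades convergence in distribution to almost-sure pseudo-Lipschitz convergence, and the symmetric argument controls $b^{t+1}$ via the Onsager scalar $\lambda_{t+1}$.

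The main obstacle will be the quantitative bookkeeping of error terms in the conditional representation. One must show that each of the following vanishes almost surely with a usable rate: the deviation of the Gram-inner-product coefficients in $E_t m^t$ from their state-evolution limits, which requires the Gram matrices in (v) to stay uniformly invertible with a quantitative lower bound on their smallest eigenvalues; the difference between $\xi_s,\lambda_s$ and their limits, which enters multiplicatively; and the polynomial-growth tails of the iterates $h^s,b^s,q^s,m^s$. These tails are handled by propagating $(2k-2)$-moment bounds through $\mathcal{H}_t$, exploiting the Lipschitz hypothesis on $f_s,g_s$ together with the moment assumptions (i)--(iii) on $x_0(N), w(N), q^0(N)$ to justify that pseudo-Lipschitz rather than merely bounded continuous test functions can be integrated against the empirical distribution. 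The non-degeneracy in (v) is itself delicate: it should be established by showing that the limiting $t\times t$ covariance predicted by state evolution is non-singular (with a degenerate stratum of initial data handled by a separate truncation argument), and then transferred to a uniform lower bound on the empirical Gram matrices via (iii).
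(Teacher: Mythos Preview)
Your overall strategy coincides with the paper's: reduce Theorem~\ref{thm:general} to a stronger inductive lemma (the paper's Lemma~\ref{lem:elephant}) that carries joint pseudo-Lipschitz convergence, inner-product limits, Onsager-scalar limits, moment bounds, and Gram non-degeneracy through the conditioning step, and then specialize part~(b) to $\psi$. The conditioning representation, the SLLN for triangular arrays on the fresh Gaussian block, and the role of the non-degeneracy hypothesis are all as in the paper.

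However, your decomposition of $h^{t+1}$ after conditioning is inaccurate in a way that matters. First, the projection is on the wrong side: the fresh-Gaussian piece is $\mathsf{P}_q^{\perp}\tA^{*}\mathsf{P}_m^{\perp}m^t=\mathsf{P}_q^{\perp}\tA^{*}m_\perp^t$, not $\tA^{*}\mathsf{P}_q^{\perp}m^t$ (the latter is ill-typed since $\mathsf{P}_q^{\perp}$ acts on $\reals^N$ while $m^t\in\reals^n$), and its per-entry variance converges to $\lim_n\<m_\perp^t,m_\perp^t\>$, the \emph{conditional} variance, not to $\tau_t^2$. Second, and more substantively, the explicit non-random piece coming from $E_t^{*}m^t$ is \emph{not} a linear combination of $q^0,\dots,q^{t-1}$ but of the past $h^1,\dots,h^t$: in the paper's notation $E_{t+1,t}^{*}m^t$ contributes $X_t(M_t^{*}M_t)^{-1}M_t^{*}m_\pl^t$, and since $X_t=H_t+Q_t\Xi_t$ the $H_t$-part yields $\sum_{i=0}^{t-1}\alpha_i h^{i+1}$ with $\alpha$ the projection coefficients of $m^t$ onto $M_t$. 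All contributions proportional to $q^0,\dots,q^t$ (from the $Q_t\Xi_t$-part, from the second term $Q_{t+1}(Q_{t+1}^{*}Q_{t+1})^{-1}Y_{t+1}^{*}m_\perp^t$, and from the Onsager correction $-\xi_t q^t$) cancel to order $o(1)$; this is the content of the paper's Lemma~\ref{lem:kt-ht Conditioned} and Eq.~\eqref{eq:error(h)}. The distinction is essential: because the surviving explicit piece is a combination of the \emph{past $h$'s}, which are asymptotically jointly Gaussian by induction, the sum $\sum_i\alpha_i h^{i+1}+\tA^{*}m_\perp^t$ is itself asymptotically Gaussian and jointly Gaussian with $h^1,\dots,h^t$, with total variance $\<m_\pl^t,m_\pl^t\>+\<m_\perp^t,m_\perp^t\>=\<m^t,m^t\>\to\tau_t^2$. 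A decomposition into $q$'s would not deliver this, since $q^i=f_i(h^i,x_0)$ are nonlinear in the Gaussians.
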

%

%
%
\subsection{Corollary of Theorem \ref{thm:general}: AMP and Theorem \ref{thm:main}}

As already mentioned, the AMP algorithm (\ref{eq:dmm})
is a special case of recursion (\ref{eq:mpMain}).
The reduction is obtained by defining
\begin{eqnarray}
h^{t+1}& = &x_0-(A^*z^{t}+x^{t})\, ,\\
q^t& = & x^t-x_0\, ,\\
b^t & = & w-z^t\, ,\\
m^t & = & -z^t\, .
\end{eqnarray}
The functions $f_t$ and $g_t$ are given by
\begin{eqnarray}
f_t(s,x_0)=\eta_{t-1}(x_0-s)-x_0\, ,\;\;\;\;\;\;\;\;\;\;
g_t(s, w)=s-w\,,
\end{eqnarray}
and the initial condition is $q^0=-x_0$.
\begin{note}\label{nt:AMP}~
\begin{itemize}
\item[(a)] Although the recursions \eqref{eq:dmm} and \eqref{eq:mpMain}
are equivalent mathematically, only the former can be used as an algorithm.
Indeed the recursion \eqref{eq:mpMain} tracks the difference of
the current estimates $x^t$ from $x_0$, and is initialized using $x^0$
itself. The recursion \eqref{eq:mpMain} is only relevant
for mathematical analysis.
\item[(b)] Due to symmetry, for each $t$, all coordinates of the
vector $h^t$ have the same distribution
(similarly for $b^t$, $q^t$ and $m^t$).
\end{itemize}
\end{note}

%
%
\subsection{Proof of Theorem \ref{thm:main}}
First note that \eqref{eq:tau-sigma-recursion} reduces to
\[\tau_t^2=\sigma^2+\sigma_t^2=\sigma^2+\f{1}{\delta}\,\E\Big\{\Big(\eta_{t-1}(X_0+\tau_{t-1}Z)-X_0\Big)^2\Big\},\]
with $\tau_0^2 = \sigma^2+\delta^{-1}\E(X_0^2)$. The latter follows from
\[\sigma_0^2=\frac{1}{\delta}\lim_{N\to\infty} \frac{1}{N}\|q^0\|^2=\frac{1}{\delta}\E_{p_{X_0}}(X_0^2)\]
and $\tau_0^2=\sigma^2+\sigma_0^2$.  Also, by definition, $x^{t+1}=\eta_t(A^*b^t+x^t)=\eta_t(x_0-h^{t+1})$.  Therefore,
applying Theorem \ref{thm:general} to the function $
(h^t_i,x_{0,i})\mapsto\psi(\eta_{t-1}(x_{0,i}-h^t_i),x_{0,i})$ we obtain almost surely
\begin{align*}
\lim_{N\to\infty}\f{1}{N}\sum_{i=1}^N\psi(x_i^{t},x_{0,i})=\E\left\{\psi\big(\eta_{t-1}(X_0-\tau_{t-1}Z),X_0\big)\right\}\,,
\end{align*}
with $Z\sim \normal(0,1)$ independent of $X_0\sim p_{X_0}$, which yields the claim as $Z$ has the same distribution as $-Z$.
Note that since $\eta$ is Lipschitz continuous, when $\psi$ belongs to $\psL(k)$ then
$(h^t_i,x_{0,i})\mapsto\psi(\eta_{t-1}(x_{0,i}-h_i^t),x_{0,i})$ also belongs to $\psL(k)$.
%
%
\subsection{Definitions and notations}\label{subsec:definitions}

When the update equation for $h^{t+1}$ in \eqref{eq:mpMain} is used, all values of $b^0,\ldots,b^{t}$, $m^0,\ldots,m^{t}$, $h^1,\ldots,h^t$ and $q^0,\ldots,q^t$ have been previously calculated.
Hence, we can consider the distribution of $h^{t+1}$ conditioned on all these known variables and also conditioned on $x_0$ and $w$.  In particular, define $\sigal{S}_{t_1,t_2}$ to be the $\sigma$-algebra generated by $b^0,\ldots,b^{t_1-1}$, $m^0,\ldots,m^{t_1-1}$, $h^1,\ldots,h^{t_2}$, $q^0,\ldots,q^{t_2}$ and $x_0$ and $w$.
The basic idea of the proof is to compute the conditional distributions
$b^t|_{\sigal{S}_{t,t}}$ and $h^{t+1}|_{\sigal{S}_{t+1,t}}$.
This is done by characterizing the conditional distribution of the matrix $A$
given this filtration.

Regarding $h^{t}$ and $b^t$ as column vectors, the equations for $b^0,\ldots,b^{t-1}$ and $h^1,\ldots,h^{t}$ can be written in matrix form as:
\begin{align*}
\underbrace{\left[h^1+\xi_0q^0|h^2+\xi_1q^1|\cdots|h^t+\xi_{t-1}q^{t-1}\right]}_{X_t}&=A^*\underbrace{[m^0|\ldots|m^{t-1}]}_{M_t},\\
\underbrace{\left[b^0|b^1+\lambda_1m^0|\cdots|b^{t-1}+\lambda_{t-1}m^{t-2}\right]}_{Y_t}&=A\underbrace{[q^0|\ldots|q^{t-1}]}_{Q_t}.
\end{align*}
or in short $X_t=A^*M_t$ and $Y_t=AQ_t$ . Here and below we use
vertical lines to indicate columns of a matrix,
i.e. $[a_1|a_2|\dots|a_k]$ is the matrix with columns $a_1,\dots,a_k$.

We also introduce the notation $m_{\pl}^t$ for the projection of $m^t$ onto the column space of $M_t$ and define $m_{\perp}^t=m^t-m_{\pl}^t$. Similarly, define $q_\pl^t$ and $q_\perp^t$ to be the parallel and orthogonal projections of $q^t$ onto column space of $Q_t$.
In particular, let $\va=\va_t=(\alpha_0,\ldots,\alpha_{t-1})$ and $\vb=\vb_t=(\beta_0,\ldots,\beta_{t-1})$ be the vectors (in $\reals^t$) of coefficients for
these projections. i.e.,
\begin{align}
m_{\pl}^t = \sum_{i=1}^{t-1}\alpha_i m^i\,,~~~~~ q_{\pl}^t=\sum_{i=0}^{t-1}\beta_i q^i\,\label{eq:alpha-beta-defined}.
\end{align}
We will show in Section \ref{subsec:pf-lem-elephant} (cf. Corollary \ref{coro:beta-alpha-finite}) that for any fixed $t$ as $N$ goes to infinity the quantities $\beta_i$'s and $\alpha_j$'s have a finite limit.

Recall that $D^*$ denotes the transpose of the matrix $D$ and for a vector $u\in\reals^m$: $\< u\>=\sum_{i=1}^mu_i/m$.
Also, for vectors $u,v\in\reals^m$ we define the scalar product
\[
\< u,v\>\equiv\frac{1}{m}\sum_{i=1}^mu_iv_i\,.
\]

Given two random variables $X,Y$, and a $\sigma$-algebra
$\sigal{S}$, the notations
$X|_{\sigal{S}}\deq Y$ means that for
any integrable function $\phi$ and for any random variable $Z$ measurable
on $\sigal{S}$, $\E\{\phi(X)Z\}= \E\{\phi(Y)Z\}$.
In words we will say that $X$ is distributed as
(or is equal in distribution to) $Y$ \emph{conditional
on} $\sigal{S}$.
In case $\sigal{S}$ is the trivial $\sigma$ algebra we simply write
$X\deq Y$ (i.e. $X$ and $Y$ are equal in distribution).
For random variables $X,Y$ the notation $X\asequal Y$ means that $X$ and $Y$ are equal almost surely.

The large system limit will be denoted either as
$\lim_{N\to\infty}$ or as $\lim_{n\to\infty}$. It is understood that
either of the two dimensions can index the sequence of problems under
consideration, and that $n/N\to\delta$.  In the large system limit,
we use the notation $\order_t(1)$ to
represent a vector in $\reals^t$ (with $t$ fixed) such that all of its coordinates converge to 0 almost surely as $N\to\infty$.

Finally, we will use $\identity_{d\times d}$ to denote the $d\times d$
identity matrix (and drop the subscript when dimensions should be
clear from the context). Similarly, $\zeros_{n\times m}$ is used to denote the $n\times m$ zero matrix.
The indicator function of property ${\cal A}$
is denoted by $\indicator({\cal A})$ or $\indicator_{{\cal A}}$. The normal distribution
with mean $\mu$ and variance $v^2$ is $\normal(\mu,v^2)$.
%
%
\subsection{Main technical Lemma}

We prove the following more general result.
\begin{lemma}\label{lem:elephant}
Let $\{A(N)\}$, $\{q_0(N)\}_N$, $\{x_0(N)\}_N$ and $\{w(N)\}_N$ be sequences as in Theorem \ref{thm:general},  with $n/N\to\delta\in (0,\infty)$
and let $\{\sigma_t,\tau_t\}_{t\ge 0}$ be defined uniquely by the
recursion \eqref{eq:tau-sigma-recursion} with initialization
$\sigma_0^2=\delta^{-1}\lim_{n\to\infty}\<q^0,q^0\>$.  Then the following hold
for all $t\in\naturals\cup\{0\}$
\begin{itemize}
\item[$(a)$]
\begin{align}
h^{t+1}|_{\sigal{S}_{t+1,t}}&\deq \sum_{i=0}^{t-1}{\alpha_i}h^{i+1}+ {\tA}^*m_\perp^t+\tQ_{t+1}\order_{t+1}(1)\, ,\label{eq:main-lem-h-c}\\
b^t|_{\sigal{S}_{t,t}}&\deq \sum_{i=0}^{t-1}{\beta_i}b^{i} + {\tA}q_\perp^{t}+ \tM_t\order_t(1)\, ,\label{eq:main-lem-z-c}
\end{align}
where ${\tA}$ is an independent copy of $A$ and the matrix $\tQ_t$ ($\tM_t$) is such that its columns form an orthogonal basis for the column space of $Q_t$ ($M_t$) and $\tQ_t^*\tQ_t=N\,\identity_{t\times t}$ ($\tM_t^*\tM_t=n\,\identity_{t\times t}$).

\item[$(b)$] For all pseudo-Lipschitz functions $\phi_h,\phi_b:\reals^{t+2}\to\reals$ of order $k$
\begin{align}
\lim_{N\to\infty}\f{1}{N}\sum_{i=1}^N\phi_h(h_i^1,\ldots,h_i^{t+1},x_{0,i})
&\asequal\E\big\{\phi_h(\tau_0 Z_0,\ldots,\tau_tZ_{t},X_0)\big\}\, ,\label{eq:main-lem-h-a}\\
\lim_{n\to\infty}\f{1}{n}\sum_{i=1}^n\phi_b(b_i^0,\ldots,b_i^{t},w_i)
&\asequal\E\big\{\phi_b(\sigma_0\hat{Z}_0,\ldots,\sigma_t\hat{Z}_t,W)\big\}\,
,\label{eq:main-lem-z-a}
\end{align}
where $(Z_0,\ldots,Z_{t})$ and $(\hat{Z}_0,\ldots,\hat{Z}_{t})$
are two zero-mean gaussian vectors independent of $X_0$, $W$,
with $Z_i,\hat{Z}_i\sim \normal(0,1)$.

\item[$(c)$] For all $0\leq r,s\leq t$ the following equations hold and all limits exist, are bounded and have degenerate distribution
(i.e. they are constant random variables):
\begin{align}
\lim_{N\to\infty}\< h^{r+1},h^{s+1}\>&\asequal \lim_{n\to\infty}\< m^{r},m^{s}\>\, ,\label{eq:main-lem-h-b}\\
\lim_{n\to\infty}\< b^r,b^s\>&\asequal\f{1}{\delta}\lim_{N\to\infty}\< q^r,q^s\>\, .\label{eq:main-lem-z-b}
\end{align}

\item[$(d)$] For all $0\leq r,s\leq t$, and for any Lipschitz function
$\varphi:\reals^2\to\reals$ , the following equations hold and all limits exist, are bounded and have degenerate distribution (i.e. they are constant random variables):
\begin{align}
\lim_{N\to\infty}\< h^{r+1}, \varphi(h^{s+1},x_0)\>&\asequal \lim_{N\to\infty}\< h^{r+1},h^{s+1}\> \< \varphi'(h^{s+1},x_0)\>,\label{eq:main-lem-h-d}\\
\lim_{n\to\infty} \< b^{r},\varphi(b^s,w)\>&\asequal \lim_{n\to\infty}\< b^r,b^s\> \< \varphi'(b^s,w)\>\,. \label{eq:main-lem-z-d}
\end{align}
Here $\varphi'$ denotes derivative with respect to the first coordinate of $\varphi$.

\item[$(e)$] For $\ell = k-1$, the following hold
almost surely
\begin{align}
\lim\sup_{N\to\infty}\f{1}{N}\sum_{i=1}^N (h_i^{t+1})^{2\ell}&<\infty\,,\label{eq:main-lem-h-e}\\
\lim\sup_{n\to\infty}\f{1}{n}\sum_{i=1}^n (b_i^t)^{2\ell}&<\infty.\label{eq:main-lem-z-e}
\end{align}

\item[$(f)$] For all $0\leq r\le t$:
\begin{align}
\lim_{N\to\infty}\f{1}{N}\< h^{r+1},q^0\>&\asequal0\,.\label{eq:<ht,q0>=0}
\end{align}

\item[$(g)$] For all $0\leq r\leq t$ and $0\leq s \leq t-1$ the following limits exist, and there exist strictly positive constants $\lbq_r$ and $\lbm_s$ (independent of $N$, $n$) such that almost surely

\begin{align}
\lim_{N\to\infty}\< q^{r}_{\perp},q^{r}_{\perp}\>&>\lbq_r\,,\label{eq:main-lem-q-g}\\
\lim_{n\to\infty}\< m^{s}_\perp,m^{s}_\perp\>&>\lbm_s\,.\label{eq:main-lem-m-g}
\end{align}

\end{itemize}
\end{lemma}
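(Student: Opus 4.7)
The plan is to proceed by a joint strong induction on $t$, proving (a)--(g) in the interleaved order $b^0 \to h^1 \to b^1 \to h^2 \to \cdots$ dictated by the recursion (\ref{eq:mpMain}). The base case $b^0 = A q^0$ reduces to a direct computation: conditional on $q^0$, the entries of $b^0$ are i.i.d.\ $\normal(0, \|q^0\|^2/n)$, and assumption (iii) together with (\ref{eq:limit-<||q0||^2>}) delivers (b)--(g) immediately at level $t=0$. At stage $t$, assume that (a)--(g) hold through $b^{t-1}$ and $h^t$, and derive them first for $b^t$ (conditional on $\sigal{S}_{t,t}$) and then for $h^{t+1}$ (conditional on $\sigal{S}_{t+1,t}$).

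\textbf{Conditioning and the Onsager cancellation.} The technical engine is the explicit form of the conditional law of a Gaussian matrix given linear observations. Since $\sigal{S}_{t,t}$ coincides with the $\sigma$-algebra generated by $x_0, w$ and the constraints $A Q_t = Y_t$, $A^* M_t = X_t$, a standard Gaussian conditioning lemma (proved separately) gives
\[
A|_{\sigal{S}_{t,t}} \deq E_t + P_{M_t}^\perp\, \tilde{A}\, P_{Q_t}^\perp,
\]
with $E_t$ the deterministic conditional expectation, $P_{M_t}^\perp = \identity - M_t(M_t^*M_t)^{-1}M_t^*$ (similarly $P_{Q_t}^\perp$), and $\tilde{A}$ an independent copy of $A$. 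Splitting $q^t = \sum_{i=0}^{t-1}\beta_i q^i + q_\perp^t$ with the $\beta_i$ of (\ref{eq:alpha-beta-defined}) and using $A q^i = $ (column $i{+}1$ of $Y_t$) for $i\le t-1$, one obtains
\[
b^t = \sum_{i=0}^{t-1}\beta_i\, b^i + \tilde{A} q_\perp^t + \Big(\sum_{i}\beta_i \lambda_i m^{i-1} - \lambda_t m^{t-1}\Big) + E_t q_\perp^t + \tM_t\, \order_t(1),
\]
where $\tM_t\,\order_t(1)$ absorbs the contribution $-M_t(M_t^*M_t)^{-1}M_t^*\tilde{A} q_\perp^t$, which is negligible by standard Gaussian concentration. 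The crucial claim is that $E_t q_\perp^t$ contributes a term exactly cancelling $\sum_i \beta_i\lambda_i m^{i-1} - \lambda_t m^{t-1}$ up to $\tM_t\,\order_t(1)$, via Gaussian integration by parts against the inductive (d); this is the \emph{Onsager cancellation}, which leaves the clean representation (\ref{eq:main-lem-z-c}). The analysis of $h^{t+1}$ conditional on $\sigal{S}_{t+1,t}$ is symmetric and yields (\ref{eq:main-lem-h-c}).

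\textbf{Passing from (a) to (b)--(f).} Once (a) is in hand, the entries of $\tilde{A} q_\perp^t$ are conditionally i.i.d.\ $\normal(0, \|q_\perp^t\|^2/n)$. A conditional strong law of large numbers for triangular arrays of Gaussians, combined with the inductive joint limit for $(b^0,\dots,b^{t-1})$ from (b), then yields (\ref{eq:main-lem-z-a}); the covariance is determined by the inductive (c) plus the identity $\<q_\perp^t, q_\perp^t\> = \<q^t,q^t\> - \sum_{i,j}\beta_i\beta_j \<q^i,q^j\>$, which by the SE recursion (\ref{eq:tau-sigma-recursion}) equals $\delta\sigma_t^2$. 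The remaining items specialise (b): (c) uses a quadratic test; (d) is Gaussian integration by parts on the limit; (e) uses a polynomial of degree $2(k-1)$ (which lies in $\psL(k)$), with the $(2k-2)$-moment hypotheses (i)--(iii) providing the uniform integrability needed to promote convergence in expectation to the a.s.\ empirical $2\ell$-th moment bound; (f) follows from (b) applied to a linear functional of $q^0$, using that $q^0$ is deterministic and asymptotically uncorrelated with the Gaussian part of $h^{r+1}$.

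\textbf{Main obstacle --- part (g).} The hardest step, which I expect to occupy the bulk of the technical effort, is part (g). It is both a conclusion and a hypothesis of the induction: the representation in (a) is only useful if $\<q_\perp^t, q_\perp^t\>$ and $\<m_\perp^s, m_\perp^s\>$ remain bounded below, since otherwise the Gaussian term $\tilde{A} q_\perp^t$ vanishes and the Gram matrices $M_t^*M_t/n$ and $Q_t^*Q_t/N$ entering $E_t$ cease to be invertible, destroying the error estimate $\tM_t\,\order_t(1)$. I will establish (g) within the same induction by contradiction: if $\lim \<q_\perp^t, q_\perp^t\> = 0$, then $q^t = f_t(h^t, x_0)$ asymptotically lies in $\mathrm{span}(q^0,\dots,q^{t-1})$, and testing this dependency against the just-established (b) and (d) forces a linear identity in $L^2$ among the limiting scalars $\{f_s(\tau_{s-1}Z_{s-1},X_0)\}_{s\le t}$ and the deterministic $q^0$-limit. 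The non-triviality of $p_{X_0}$ and $\pW$, together with the strict positivity of $\sigma_t^2, \tau_t^2$ guaranteed by (\ref{eq:tau-sigma-recursion}), rules out such a collapse. This closes the induction and yields (a)--(g) at level $t$.
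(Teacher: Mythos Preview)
Your overall architecture---joint induction in the order $b^0 \to h^1 \to b^1 \to \cdots$, Gaussian conditioning on $\sigal{S}_{t,t}$ and $\sigal{S}_{t+1,t}$, and the Onsager cancellation driven by the inductive Stein identity (d)---is exactly the paper's scheme, and your identification of (g) as load-bearing (it keeps $M_t^*M_t/n$ and $Q_t^*Q_t/N$ uniformly invertible, hence the $\alpha_i,\beta_i$ bounded) is correct.

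There is, however, a genuine gap in the internal ordering, centred on (e). You propose to deduce (e) from (b) by taking $\phi(y) = y^{2(k-1)}$ and asserting this lies in $\psL(k)$. It does not: for $k>2$ one has $|x^{2k-2}-y^{2k-2}|\asymp \max(|x|,|y|)^{2k-3}|x-y|$ and $2k-3>k-1$. More importantly, the dependency runs the other way: (e) must be established \emph{before} (b), directly from the representation (a), because discarding the $\tM_t\,\order_t(1)$ error in the proof of (b) via the pseudo-Lipschitz bound and Cauchy--Schwarz produces a factor $\big(n^{-1}\sum_i\|(b_i^0,\dots,b_i^t,w_i)\|^{2k-2}\big)^{1/2}$, whose finiteness \emph{is} (e). The paper's order within each step is $(g)\to(a)\to(c)\to(e)\to(b)\to(d)$, with (e) obtained by expanding $(b_i^t)^{2\ell}$ through (a) and bounding each summand (Gaussian moments for $\tA q_\perp^t$, induction for $b^r$ with $r<t$, hypothesis (ii) for $w$). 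Two smaller points in the same vein: (f) cannot come from (b) either, since $q^0$ is not an argument of $\phi_h$ in (\ref{eq:main-lem-h-a}); it follows directly from (a) via $\langle \tA^* m_\perp^t,q^0\rangle=O(N^{-1/2})$ plus the inductive (f). And your contradiction argument for (g) is heavier than needed: by the inductive (b), $\langle m_\perp^{t-1},m_\perp^{t-1}\rangle$ converges to a Schur complement in the covariance of $\big(g_r(\sigma_r\hat Z_r,W)\big)_{r<t}$, which is bounded below by a conditional-variance lemma once the inductive (g) on the $q$-side guarantees $\var(\sigma_r\hat Z_r\mid \hat Z_0,\dots,\hat Z_{r-1})>0$.
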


\begin{note}
Equations (\ref{eq:main-lem-h-d}) and  (\ref{eq:main-lem-z-d}) have the form
of Stein's lemma \cite{Ste72} (cf. Lemma \ref{lem:stein} in Section \ref{subsec:properties}).
\end{note}

%
%

\subsubsection{Proof of Theorem \ref{thm:general}}
Assuming Lemma \ref{lem:elephant} is correct Theorem \ref{thm:general} easily follows. To be more precise, Theorem \ref{thm:general} is a obtained by applying Lemma \ref{lem:elephant}(b) to functions
$\phi_h(y_0,\ldots,y_t,x_{0,i})=\psi(y_t,x_{0,i})$ and $\phi_b(y_0,\ldots,y_t,w_{i})=\psi(y_t,w_{i})$.
\endproof

The rest of Section \ref{sec:Analysis} focuses on proof of Lemma \ref{lem:elephant}.
%
%
\subsection{Useful probability facts}\label{sec:ProbFacts}

Before embarking in the actual proof, it is convenient to summarize
a few facts that will be used repeatedly.

We will use the following strong law of large numbers (SLLN)
for triangular arrays of independent but not identically distributed
random variables.
The form stated below follows immediately from \cite[Theorem 2.1]{SLLN2}.
{
\begin{theorem}[SLLN, \cite{SLLN2}]\label{thm:SLLN}
Let $\{X_{n,i}: 1\le i\le n, n\ge 1\}$ be a triangular array of random
variables with $(X_{n,1},\dots,X_{n,n})$ mutually independent
with mean equal to zero for each $n$ and $n^{-1}\sum_{i=1}^n\E|X_{n,i}|^{2+\slln}\le cn^{\slln/2}$ for some $0<\slln<1$, $c<\infty$.
Then $\frac{1}{n}\sum_{i=1}^nX_{i,n}\to 0$ almost surely for $n\to\infty$.
\end{theorem}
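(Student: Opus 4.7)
I would prove Theorem \ref{thm:SLLN} using the classical recipe of truncation, Bernstein's inequality, and the first Borel--Cantelli lemma, which is well suited to the moment control $\sum_i\E|X_{n,i}|^{2+\slln}\le cn^{1+\slln/2}$ implied by the hypothesis. Let $S_n=\sum_{i=1}^nX_{n,i}$, fix an exponent $\alpha\in(0,1)$ to be calibrated, and set $X'_{n,i}=X_{n,i}\indicator_{|X_{n,i}|\le n^\alpha}$. Markov's inequality gives
\[
\prob\!\Big(\bigcup_{i\le n}\{X_{n,i}\neq X'_{n,i}\}\Big)\le n^{-\alpha(2+\slln)}\sum_{i=1}^n\E|X_{n,i}|^{2+\slln}\le c\,n^{1+\slln/2-\alpha(2+\slln)},
\]
which is summable in $n$ as soon as $\alpha>(2+\slln/2)/(2+\slln)$. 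Borel--Cantelli therefore lets me replace $S_n$ by $\sum_iX'_{n,i}$ without loss of generality.

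Recentering is also negligible: since $\E X_{n,i}=0$, one has $|\E X'_{n,i}|\le \E|X_{n,i}|\indicator_{|X_{n,i}|>n^\alpha}\le n^{-\alpha(1+\slln)}\E|X_{n,i}|^{2+\slln}$, so $n^{-1}|\sum_i\E X'_{n,i}|\le c\,n^{\slln/2-\alpha(1+\slln)}\to 0$ for the same range of $\alpha$. It thus suffices to prove $n^{-1}\widetilde S_n\to 0$ a.s., where $\widetilde S_n=\sum_i(X'_{n,i}-\E X'_{n,i})$. These summands are independent, centered, and uniformly bounded by $2n^\alpha$, and by Jensen's and Hölder's inequalities
\[
\sum_{i=1}^n\mathrm{Var}(X'_{n,i})\le\sum_i\E X_{n,i}^2\le\bigg(\sum_i\E|X_{n,i}|^{2+\slln}\bigg)^{\!2/(2+\slln)}n^{\slln/(2+\slln)}\le C\,n^{1+\slln/(2+\slln)}.
\]
Bernstein's inequality then yields, for any $\varepsilon>0$,
\[
\prob(|\widetilde S_n|>n\varepsilon)\le 2\exp\!\bigg(-\frac{n^2\varepsilon^2/2}{C\,n^{1+\slln/(2+\slln)}+(2n^{1+\alpha}/3)\varepsilon}\bigg),
\]
and since $\alpha>\slln/(2+\slln)$ the second term in the denominator dominates, so the exponent is of order $n^{1-\alpha}\varepsilon$. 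The bound is therefore summable in $n$, and a final application of Borel--Cantelli completes the proof.

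The main obstacle is the simultaneous calibration of $\alpha$: it must be large enough for the truncation tails and the recentering error to be summable, yet strictly less than $1$ so that the Bernstein exponent diverges as a positive power of $n$. The strict positivity of $\slln$ is exactly what opens the nonempty admissible window $\alpha\in\big((2+\slln/2)/(2+\slln),\,1\big)$. A tempting shortcut would be to bound $\E|S_n|^{2+\slln}$ directly via Rosenthal's inequality and apply Markov plus Borel--Cantelli, but the variance estimate that can be extracted from the hypothesis only yields $\E|S_n|^{2+\slln}=O(n^{1+\slln})$, giving a non-summable $O(1/n)$ tail; truncation is genuinely needed to recover the missing decay.
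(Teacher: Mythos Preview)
Your proof is correct and self-contained, but it takes a genuinely different route from the paper. The paper does not prove Theorem~\ref{thm:SLLN} from scratch; it simply verifies that the hypothesis $n^{-1}\sum_i\E|X_{n,i}|^{2+\slln}\le cn^{\slln/2}$ implies conditions (2.1), (2.3), (2.4) of Theorem~2.1 in Hu--Taylor \cite{SLLN2}, choosing $a_n=n$, $p=2$, $\psi(t)=t^{2+\slln}$, and then invokes that result as a black box. Your argument instead builds the proof directly via truncation at level $n^\alpha$, a recentering estimate, a H\"older/Jensen bound on the total variance, Bernstein's inequality, and two applications of Borel--Cantelli, with the calibration $\alpha\in\big((2+\slln/2)/(2+\slln),1\big)$ making all three error terms summable simultaneously. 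The paper's approach is shorter on the page (once the external reference is granted), while yours is elementary and self-contained; it also yields the stronger information that $\prob(|S_n|>n\varepsilon)$ decays like $\exp(-c_\varepsilon n^{1-\alpha})$, which could be useful for non-asymptotic refinements that the black-box citation does not give. Your closing remark that Rosenthal alone yields only an $O(1/n)$ tail (hence non-summable) is correct and nicely explains why the truncation step is not optional.
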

\begin{note}
Theorem 2.1 in \cite{SLLN2} is stronger than what we state here. In Appendix \ref{ap:SLLN} we show how Theorem \ref{thm:SLLN} follows from it.
\end{note}
Next, we present an algebraic inequality that will be used in conjunction with Theorem \ref{thm:SLLN}. Its proof is provided in Appendix \ref{ap:alg-lemma}
\begin{lemma}\label{lem:alg-lemma}
Let $u_1,\ldots,u_n$ be a sequence of non-negative numbers. Then for all $\varepsilon>0$ the following holds
\[
\sum_{i=1}^n u_i^{1+\varepsilon} \le \left(\sum_{i=1}^n u_i\right)^{1+\varepsilon}\,.
\]
\end{lemma}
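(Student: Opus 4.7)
The plan is to reduce to the case where the sum $S := \sum_{i=1}^n u_i$ is normalized to $1$, and then exploit the fact that raising a number in $[0,1]$ to a power larger than $1$ makes it smaller. This is really a special case of the monotonicity of $\ell_p$ norms on the positive cone, namely $\|u\|_{1+\varepsilon}\le \|u\|_1$ for $\varepsilon>0$.

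First I would dispose of the trivial case $S=0$: all $u_i$ vanish, so both sides of the claimed inequality equal zero. Assume henceforth $S>0$ and define $v_i := u_i/S$, so that $v_i\in[0,1]$ and $\sum_{i=1}^n v_i = 1$. The inequality to be proved is homogeneous of degree $1+\varepsilon$ in $u$, so after dividing both sides by $S^{1+\varepsilon}$ it reduces to showing
\[
\sum_{i=1}^n v_i^{1+\varepsilon}\le 1.
\]

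For each $i$, since $0\le v_i\le 1$ and $\varepsilon>0$, the elementary inequality $v_i^{1+\varepsilon}\le v_i$ holds (trivially if $v_i=0$; otherwise $v_i^{\varepsilon}\le 1^{\varepsilon}=1$, and multiply by $v_i$). Summing over $i$ yields $\sum_{i=1}^n v_i^{1+\varepsilon}\le \sum_{i=1}^n v_i = 1$, which is the required bound. Rescaling by $S^{1+\varepsilon}$ gives the statement of the lemma.

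There is no real obstacle here — the only point worth being careful about is the $S=0$ degeneracy, and the fact that $\varepsilon>0$ (rather than $\varepsilon\ge 0$, which would also work with equality) is what ensures $v_i^{1+\varepsilon}\le v_i$ holds with the correct direction on $[0,1]$. An alternative one-line argument would invoke the power-mean or Jensen's inequality applied to the concave map $x\mapsto x^{1/(1+\varepsilon)}$, but the direct normalization argument above is the simplest and needs no outside result.
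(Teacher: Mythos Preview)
Your proof is correct. The normalization-and-pointwise-bound argument you give is the standard one-line proof of the $\ell_p$-norm monotonicity $\|u\|_{1+\varepsilon}\le\|u\|_1$, and it is entirely self-contained.

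The paper takes a genuinely different route. It defines
\[
f(\beta)\;=\;\frac{1}{\beta}\log\Big(\sum_{i=1}^n u_i^{\beta}\Big)
\]
and shows that $f$ is nonincreasing on $(0,\infty)$ by computing $f'(\beta)=-\beta^{-2}H(p)$, where $H(p)$ is the Shannon entropy of the Gibbs distribution $p_i\propto u_i^{\beta}$; nonnegativity of entropy gives $f'\le 0$, and the lemma is the inequality $f(1+\varepsilon)\le f(1)$. Your argument is shorter and needs no calculus or information-theoretic interpretation, and it handles the degenerate case $S=0$ cleanly (the paper's proof implicitly assumes $u_i>0$ in writing $\log u_i$). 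What the paper's approach buys is a stronger statement in the background: monotonicity of $\beta\mapsto\|u\|_{1/\beta}$ for all $\beta>0$, not just the comparison between exponents $1$ and $1+\varepsilon$. For the purpose at hand, however, your direct argument is the cleaner choice.
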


}
Next, we present a standard property of Gaussian matrices without proof.
\begin{lemma}\label{lem:prop-Gaussian-matrix} For any deterministic $u\in\reals^N$ and $v\in\reals^n$ with $\|u\|=\|v\|=1$ and a gaussian matrix ${\tA}$ distributed as $A$ we have
\begin{itemize}
\item[(a)] $v^* {\tA} u\deq Z/\sqrt{n}$ where $Z\sim \normal(0,1)$.

\item[(b)] $\lim_{n\to\infty} \|{\tA}u\|^2 = 1$ almost surely.

\item[(c)] Consider, for $d\le n$, a $d$-dimensional subspace $W$ of
$\reals^n$, an orthogonal basis $w_1,\ldots,w_d$ of $W$ with $\|w_i\|^2=n$ for $i=1,\ldots,d$, and the orthogonal projection $P_W$ onto $W$. Then for $D=[w_1|\ldots|w_d]$, we have $P_W A u\deq Dx$ with $x\in\reals^d$ that satisfies: $\lim_{n\to\infty}\|x\|\asequal 0$
(the limit being taken with $d$ fixed). Note that $x$ is $\order_d(1)$ as well.
\end{itemize}
\end{lemma}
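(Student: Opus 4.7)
The overall strategy is to exploit rotational invariance and the explicit Gaussian structure: since $\tilde A$ has independent $\normal(0,1/n)$ entries, any linear functional of it is Gaussian with a variance one can compute directly, and independence of such functionals reduces to orthogonality.

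For part (a), the plan is simply to expand $v^*\tilde A u=\sum_{i,j}v_iu_j\tilde A_{ij}$ as a linear combination of independent centered Gaussians. The mean is $0$ and the variance equals $\tfrac{1}{n}\sum_{i,j}v_i^2u_j^2=\tfrac{1}{n}\|v\|^2\|u\|^2=\tfrac{1}{n}$, giving the stated distributional equality.

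For part (b), I would pick any orthonormal basis $\{e_i\}_{i=1}^n$ of $\reals^n$ and set $Y_i=e_i^*\tilde A u$. By (a) each $Y_i\sim\normal(0,1/n)$, and a direct covariance computation gives $\E[Y_iY_j]=\tfrac{1}{n}(e_i^*e_j)\|u\|^2=0$ for $i\ne j$, so the $Y_i$'s are i.i.d.\ $\normal(0,1/n)$. Then $\|\tilde A u\|^2=\sum_{i=1}^n Y_i^2$ has the distribution of $\tfrac{1}{n}\sum_{i=1}^n \xi_i^2$ with $\xi_i$ i.i.d.\ $\normal(0,1)$, and the classical SLLN for i.i.d.\ $\chi^2_1$ variables gives $\|\tilde A u\|^2\to 1$ almost surely.

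For part (c), the key observation is that orthogonality of the $w_i$'s with $\|w_i\|^2=n$ yields $D^*D=n\,\identity_{d\times d}$, hence the projector is $P_W=D(D^*D)^{-1}D^*=\tfrac{1}{n}DD^*$. Thus $P_WAu=D\,x$ with $x\in\reals^d$ defined by $x_i=\tfrac{1}{n}w_i^*Au$. Applying (a) with the unit vector $w_i/\sqrt{n}$ shows $w_i^*Au\sim\normal(0,1)$, and computing $\mathrm{Cov}(w_i^*Au,w_j^*Au)=\tfrac{1}{n}(w_i^*w_j)\|u\|^2=\delta_{ij}$ shows these components are independent. Consequently $n^2\|x\|^2\sim\chi^2_d$, so for any $\varepsilon>0$ the tail bound $\prob(\|x\|^2>\varepsilon)=\prob(\chi^2_d>n^2\varepsilon)$ is summable in $n$, and Borel–Cantelli delivers $\|x\|\to 0$ almost surely (with $d$ fixed).

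There is no real obstacle here since this is a standard computation; the only point requiring a moment of care is the upgrade from convergence in probability to almost sure convergence in (b) and (c), which is handled respectively by the SLLN for i.i.d.\ $\chi^2_1$ variables and by a Borel–Cantelli argument using the exponential tail of a fixed-degree chi-square.
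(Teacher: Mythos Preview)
The paper does not prove this lemma; it is stated ``without proof'' as a standard fact about Gaussian matrices. Your argument is correct in outline and fills in exactly the computations one would expect: direct variance calculation for (a), i.i.d.\ structure of the components of $\tilde A u$ for (b), and the projector formula $P_W=\tfrac{1}{n}DD^*$ together with Borel--Cantelli for (c).

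One small technical point in (b): invoking ``the classical SLLN for i.i.d.\ $\chi^2_1$ variables'' is not quite rigorous as written. The sequence $\{\tilde A(N)\}$ is a triangular array---the matrices for different $N$ are not nested---so the equality in distribution $\|\tilde A u\|^2\deq \tfrac{1}{n}\sum_{i=1}^n\xi_i^2$ does not by itself transfer almost-sure convergence from the right-hand side. The cleanest fix is exactly what you do in (c): use a concentration bound for $\chi^2_n/n$ (e.g.\ a fourth-moment Markov inequality gives $\prob(|\tfrac{1}{n}\chi^2_n-1|>\varepsilon)=O(n^{-2})$, which is summable) and apply Borel--Cantelli. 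Alternatively, the paper's Theorem~\ref{thm:SLLN} (SLLN for triangular arrays) applies directly to $X_{n,i}=Y_i^2-\tfrac{1}{n}$. You essentially acknowledge this in your final paragraph, so this is more a matter of tightening the wording than a genuine gap.
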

%
\begin{lemma}[Stein's Lemma \cite{Ste72}]\label{lem:stein}
For jointly gaussian random variables $Z_1,Z_2$ with zero mean, and any function $\varphi:\reals\to\reals$ where $\E\{\varphi'(Z_1)\}$ and
$\E\{Z_1\varphi(Z_2)\}$ exist, the following holds
\[
\E\{Z_1\varphi(Z_2)\}=\cov(Z_1,Z_2)\, \E\{\varphi'(Z_2)\}\,.
\]
\end{lemma}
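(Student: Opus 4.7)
The plan is to reduce the bivariate identity to one-dimensional Gaussian integration by parts via the standard regression decomposition. If $\var(Z_2) = 0$ then $Z_2 \equiv 0$ almost surely and both sides vanish trivially, so I assume $\sigma_2^2 \equiv \var(Z_2) > 0$. Since $(Z_1, Z_2)$ is jointly gaussian with zero mean, write $Z_1 = aZ_2 + W$ with $a = \cov(Z_1,Z_2)/\sigma_2^2$; the residual $W$ is jointly gaussian with $Z_2$, has $\E\{W\} = 0$, and satisfies $\cov(W, Z_2) = 0$. For jointly gaussian variables, uncorrelatedness implies independence, so $W$ and $Z_2$ are independent.

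Multiplying by $\varphi(Z_2)$ and taking expectations then gives
\[
\E\{Z_1 \varphi(Z_2)\} = a\, \E\{Z_2 \varphi(Z_2)\} + \E\{W\varphi(Z_2)\} = a\, \E\{Z_2 \varphi(Z_2)\},
\]
where the second equality uses independence of $W$ from $Z_2$ together with $\E\{W\} = 0$. Since $a \sigma_2^2 = \cov(Z_1, Z_2)$, the claim reduces to proving the univariate identity $\E\{Z_2 \varphi(Z_2)\} = \sigma_2^2\, \E\{\varphi'(Z_2)\}$.

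For the univariate identity, let $\phi_{\sigma_2}(z) = (2\pi\sigma_2^2)^{-1/2} e^{-z^2/(2\sigma_2^2)}$ denote the density of $Z_2$. The key algebraic observation is $z\phi_{\sigma_2}(z) = -\sigma_2^2\, \phi_{\sigma_2}'(z)$. Integration by parts then yields
\[
\E\{Z_2 \varphi(Z_2)\} = \int z \varphi(z)\, \phi_{\sigma_2}(z)\, dz = -\sigma_2^2 \int \varphi(z)\, \phi_{\sigma_2}'(z)\, dz = \sigma_2^2 \int \varphi'(z)\, \phi_{\sigma_2}(z)\, dz = \sigma_2^2\, \E\{\varphi'(Z_2)\},
\]
provided the boundary term $[-\sigma_2^2 \varphi(z) \phi_{\sigma_2}(z)]_{-\infty}^{+\infty}$ vanishes and $\varphi$ is absolutely continuous, so that the fundamental theorem of calculus applies.

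The only delicate point is justifying the integration by parts for a general $\varphi$; this is the main (and minor) obstacle. In the concrete uses in this paper (cf.\ Lemma \ref{lem:elephant}(d)), $\varphi$ is Lipschitz, hence absolutely continuous with bounded $\varphi'$, and satisfies $|\varphi(z)| \leq |\varphi(0)| + L|z|$; Gaussian tail decay then gives integrability of both $z\varphi(z)\phi_{\sigma_2}(z)$ and $\varphi'(z)\phi_{\sigma_2}(z)$ and forces $\varphi(z)\phi_{\sigma_2}(z) \to 0$ as $|z| \to \infty$, which is all that is needed. At the level of generality stated in the lemma, one handles the boundary term by a standard truncation argument: the existence of $\E\{Z_2\varphi(Z_2)\}$ combined with absolute continuity forces $\liminf_{|z|\to\infty} |\varphi(z)|\phi_{\sigma_2}(z) = 0$, and then dominated convergence closes the identity along a suitable subsequence.
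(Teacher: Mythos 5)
The paper does not prove this lemma; it is stated and attributed to Stein's original work \cite{Ste72}, so there is no in-paper argument to compare against. Your proof is the canonical one: decompose $Z_1 = aZ_2 + W$ with $a = \cov(Z_1,Z_2)/\var(Z_2)$ and $W$ independent of $Z_2$ (by joint gaussianity and uncorrelatedness), reduce to the univariate identity $\E\{Z_2\varphi(Z_2)\} = \var(Z_2)\,\E\{\varphi'(Z_2)\}$, and obtain that by integration by parts against the Gaussian density using $z\phi_{\sigma_2}(z) = -\sigma_2^2\phi_{\sigma_2}'(z)$. This is correct and, for the Lipschitz $\varphi$ actually used in Lemma~\ref{lem:elephant}(d), complete: linear growth of $\varphi$ plus Gaussian tail decay kills the boundary term and justifies all the interchanges. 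Two small remarks at full generality. First, the step $\E\{W\varphi(Z_2)\} = \E\{W\}\E\{\varphi(Z_2)\} = 0$ tacitly uses $\E|\varphi(Z_2)| < \infty$, which is not among the stated hypotheses (though it is automatic for Lipschitz $\varphi$); a cleaner route is to observe $W\varphi(Z_2) = Z_1\varphi(Z_2) - aZ_2\varphi(Z_2)$ and invoke the assumed integrability of $Z_1\varphi(Z_2)$ together with the univariate identity you are about to prove. Second, you silently corrected an apparent typo in the lemma as printed, which requires existence of $\E\{\varphi'(Z_1)\}$ where the conclusion and your argument actually need $\E\{\varphi'(Z_2)\}$; that correction is warranted.
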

We will apply the following law of large numbers
to the sequence $\{x_0(N),w(N)\}_N$.
Its proof can be found in Appendix \ref{app:SLLN}.
\begin{lemma}\label{lem:SLLN4us}
Let $k\geq 2$ and consider a sequence of vectors $\{v(N)\}_{N\ge 0}$,
 whose
empirical distribution converges weakly to probability measure
$p_{V}$ on $\reals$  with bounded $k^{th}$ moment, and
assume $\E_{\empr_{v(N)}}(V^k)\to\E_{p_{V}}(V^k)$ as $N\to\infty$.
Then, for any pseudo-Lipschitz function $\psi:\reals\to\reals$ of order $k$:
\begin{eqnarray}
\lim_{N\to\infty}\f{1}{N}\sum_{i=1}^N\psi(v_{i}) \asequal\E\big[ \psi(V)\big]\,.\label{eq:empirical->p_X}
\end{eqnarray}
\end{lemma}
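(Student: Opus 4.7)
My plan is to combine the weak convergence $\empr_{v(N)}\Rightarrow p_V$ with the convergence of the $k$-th moment through a standard truncation-plus-uniform-integrability argument. Recall from property~(i) listed after the definition of $\psL(k)$ that any $\psi\in\psL(k)$ is continuous and satisfies $|\psi(x)|\le L'(1+|x|^k)$, so $\psi$ is integrable against $\empr_{v(N)}$ and $p_V$ under the stated moment hypotheses.

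First I would fix $M>0$ and introduce a continuous cutoff $\phi_M:\reals\to[0,1]$ that equals $1$ on $[-M,M]$ and is supported in $[-M-1,M+1]$, then decompose $\psi=\psi\phi_M+\psi(1-\phi_M)$. The function $\psi\phi_M$ is bounded and continuous, so weak convergence of $\empr_{v(N)}$ to $p_V$ immediately gives
\[
\lim_{N\to\infty}\f{1}{N}\sum_{i=1}^N\psi(v_i)\phi_M(v_i)=\E\big[\psi(V)\phi_M(V)\big].
\]
For the tail, the growth bound on $\psi$ yields
\[
\f{1}{N}\sum_{i=1}^N\big|\psi(v_i)(1-\phi_M(v_i))\big|\le L'\cdot\f{1}{N}\sum_{i=1}^N(1+|v_i|^k)\indicator_{|v_i|\ge M},
\]
with the obvious analogue for $\E[\psi(V)(1-\phi_M(V))]$. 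Letting $M\to\infty$ then delivers the lemma, provided these tails vanish uniformly in $N$, i.e.\ provided the family $\{|x|^k\}$ is uniformly integrable with respect to $\{\empr_{v(N)}\}_{N}$.

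The core technical step is therefore establishing this uniform integrability. This is a classical consequence of weak convergence together with convergence of absolute $k$-th moments to a finite limit (Vitali's theorem, equivalently de la Vall\'ee-Poussin's criterion). In the applications within the paper $k$ is always even (the lemma is invoked on the $(2k-2)$-th moment in Theorem~\ref{thm:general}), so $V^k=|V|^k$ and the hypothesis $\E_{\empr_{v(N)}}(V^k)\to\E_{p_V}(V^k)$ directly supplies convergence of absolute $k$-th moments. For general $k\ge2$, the bounded $k$-th absolute moment of $p_V$ combined with convergence of $\E(V^k)$ is enough to rule out any escape of mass to infinity after splitting $v$ into positive and negative parts. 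I expect this uniform integrability verification to be the only substantive obstacle; once it is in hand, the truncation scheme together with a routine approximation argument in $M$ closes the proof.
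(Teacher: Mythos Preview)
Your proposal is correct and is essentially the same argument as the paper's: truncate, use weak convergence on the bounded piece, and control the tail via uniform integrability of $|V|^k$ under $\{\empr_{v(N)}\}$. The only cosmetic differences are that the paper truncates the range of $\psi$ (setting $\psi_B=\max(-B,\min(B,\psi))$) rather than the domain via a smooth cutoff, and spells out the uniform-integrability step by hand---splitting $|V|^k=|V|^k\indicator_{\{|V|^k\le B/L-1\}}+|V|^k\indicator_{\{|V|^k>B/L-1\}}$ and combining the moment-convergence hypothesis with weak convergence on the bounded part---instead of citing Vitali by name.
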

Next lemma is on reak convergence
of Lipschitz functions and its proof is in Appendix \ref{app:Almost proof}.
\begin{lemma}\label{lem:AlmostSmooth}
Let $F:\reals^2\to\reals$ be Lipschitz continuous and denote by
$F'(x,y)$ its derivative with respect to the first argument at
$(x,y)\in\reals^2$. Assume $(X_n,Y_n)$ is a sequence of random vectors
in $\reals^2$ converging in distribution to the random vector $(X,Y)$
as $n\to\infty$. Assume further that $X$ and $Y$ are independent
and that the distribution of $X$ is absolutely continuous with respect to
the Lebesgue measure. Then
\begin{eqnarray}
\lim_{n\to\infty}\E\{F'(X_n,Y_n)\}=\E\{F'(X,Y)\}\, .
\end{eqnarray}
\end{lemma}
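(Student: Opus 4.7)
The plan is to approximate the bounded (but possibly discontinuous) function $F'$ by bounded continuous functions via mollification in the first argument, apply the Portmanteau theorem to the smoothed versions, and then close up with the Lebesgue differentiation theorem using the hypothesis that $X$ has an $L^1$ density and is independent of $Y$.

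Concretely, I would fix a smooth nonnegative mollifier $\phi\in C_c^\infty(\reals)$ with $\int\phi=1$ and $\mathrm{supp}(\phi)\subset[-1,1]$, set $\phi_\epsilon(u)=\epsilon^{-1}\phi(u/\epsilon)$, and define
\[
F_\epsilon(x,y):=\int F(x-u,y)\,\phi_\epsilon(u)\,du=\int F(v,y)\,\phi_\epsilon(x-v)\,dv.
\]
Since $F$ is Lipschitz with constant $L$, the second representation shows that $\partial_x F_\epsilon(x,y)=\int F(v,y)\phi_\epsilon'(x-v)\,dv$ is jointly continuous in $(x,y)$, while the first gives $\partial_x F_\epsilon(x,y)=\int F'(x-u,y)\phi_\epsilon(u)\,du$, so $|\partial_x F_\epsilon|\leq L$ and $\partial_x F_\epsilon$ is the mollification (in $x$) of $F'$. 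In particular $\partial_x F_\epsilon$ is a bounded continuous test function.

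For each fixed $\epsilon>0$, the weak convergence $(X_n,Y_n)\Rightarrow(X,Y)$ combined with Portmanteau yields
$\lim_{n\to\infty}\E\{\partial_x F_\epsilon(X_n,Y_n)\}=\E\{\partial_x F_\epsilon(X,Y)\}$.
On the limit side, the Lebesgue differentiation theorem applied to the one-variable Lipschitz map $x\mapsto F(x,y)$ gives $\partial_x F_\epsilon(x,y)\to F'(x,y)$ for Lebesgue-a.e.\ $x$, for every $y$. Using $X\perp Y$ and the absolute continuity of $P_X$ with density $p_X\in L^1$, Fubini plus dominated convergence (with dominating integrand $L\,p_X$) then yield
\[
\E\{\partial_x F_\epsilon(X,Y)\}=\int dP_Y(y)\int\partial_x F_\epsilon(x,y)p_X(x)\,dx\;\xrightarrow[\epsilon\downarrow 0]{}\;\E\{F'(X,Y)\}.
\]

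The main obstacle, and the reason the two observations above do not immediately combine into the claim, is that one must also control the discrepancy $\E\{F'(X_n,Y_n)\}-\E\{\partial_x F_\epsilon(X_n,Y_n)\}$ uniformly in $n$ as $\epsilon\downarrow 0$; the Lebesgue differentiation argument does \emph{not} transfer directly to the $\mu_n$ side, because the laws of the $X_n$ need not be absolutely continuous. To close this gap I would replace the mollified derivative by the forward-difference quotient $D_\epsilon F(x,y):=\epsilon^{-1}[F(x+\epsilon,y)-F(x,y)]$, which by the fundamental theorem of calculus equals $\int_0^1 F'(x+s\epsilon,y)\,ds$ and is itself a bounded ($\leq L$) jointly continuous function converging to $F'$ Lebesgue-a.e.\ in $x$. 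Because $D_\epsilon F$ is bounded continuous, $\E\{D_\epsilon F(X_n,Y_n)\}\to\E\{D_\epsilon F(X,Y)\}$ for each $\epsilon$, and $\E\{D_\epsilon F(X,Y)\}\to\E\{F'(X,Y)\}$ as $\epsilon\downarrow 0$ by the same AC + dominated convergence argument. A diagonal extraction choosing $\epsilon=\epsilon_n\downarrow 0$ slowly enough so that $|\E\{D_{\epsilon_n}F(X_n,Y_n)\}-\E\{D_{\epsilon_n}F(X,Y)\}|\to 0$ then gives $\E\{D_{\epsilon_n}F(X_n,Y_n)\}\to\E\{F'(X,Y)\}$. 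What remains is to compare $\E\{F'(X_n,Y_n)\}$ with $\E\{D_{\epsilon_n}F(X_n,Y_n)\}$; the hardest technical point here is that this comparison needs uniform-in-$n$ integrability/continuity information on the laws $\mu_n$, which I would obtain by invoking tightness of $\{\mu_n\}\cup\{\mu\}$ together with Lusin's theorem on a common compact set where $F'$ can be approximated by a continuous function against which Portmanteau can again be applied. This tightness-plus-Lusin step is, in my view, the delicate piece of the argument; everything else is a standard mollification and dominated-convergence routine.
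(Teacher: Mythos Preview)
Your approach via mollification and difference quotients is substantially more involved than the paper's, and the final step you flag as ``delicate'' is in fact a genuine gap that cannot be closed along the lines you suggest. The Lusin-plus-tightness idea fails for the following reason: tightness gives you a compact $K$ with $\mu_n(K^c)<\delta$ uniformly in $n$, and Lusin on $K$ (with respect to Lebesgue measure) gives a continuous $g$ with $\lambda(\{F'\neq g\}\cap K)$ small; but the laws $\mu_n$ are under no obligation to be absolutely continuous, and may concentrate their mass on the Lebesgue-null exceptional set $\{F'\neq g\}$. Consequently $|\E F'(X_n,Y_n)-\E g(X_n,Y_n)|$ is not controlled. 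The same obstruction blocks the comparison of $\E F'(X_n,Y_n)$ with $\E D_{\epsilon_n}F(X_n,Y_n)$: you have no handle on how $\mu_n$ weighs the set where $F'(\cdot,y)$ differs from its local average. Your diagonal argument delivers $\E D_{\epsilon_n}F(X_n,Y_n)\to\E F'(X,Y)$, but this is not the quantity you need.

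The paper sidesteps all of this with a single device you did not invoke: the Skorokhod representation theorem. One passes to a probability space on which $(X_n,Y_n)\to(X,Y)$ almost surely. The set $\cC_F\subseteq\reals^2$ on which $F'$ exists and is continuous has full Lebesgue measure, and the absolute-continuity-of-$X$-plus-independence hypothesis ensures $\prob\{(X,Y)\in\cC_F\}=1$. Hence $F'(X_n,Y_n)\to F'(X,Y)$ almost surely on the Skorokhod space, and since $|F'|\le L$, bounded convergence finishes the proof in one line. An equivalent route, closer in spirit to what you were reaching for, is the Portmanteau theorem for bounded measurable functions whose discontinuity set is null under the \emph{limit} law: once you observe that the discontinuity set of $F'$ has $P_{(X,Y)}$-measure zero, the conclusion $\E F'(X_n,Y_n)\to\E F'(X,Y)$ follows directly, and no mollification, diagonalisation, or Lusin argument is needed.
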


It is useful to remember a standard formula for the conditional variance of
gaussian random variables.
\begin{lemma}\label{lem:gaussian-cond-variance}
Let $(Z_1,\ldots,Z_t)$ be a normal random vector with zero mean,
and assume that the covariance matrix of $(Z_1,\ldots,Z_{t-1})$
(denoted by $C$) is invertible. Then
\[
\var(Z_t|Z_1,\ldots,Z_{t-1}) =
\E\{Z_t^2\}-u^* C^{-1} u\,,
\]
where $u\in\reals^{t-1}$ is given by $u_i \equiv\E\{Z_tZ_i\}$.
\end{lemma}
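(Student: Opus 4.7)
The plan is to use the standard orthogonal decomposition trick for jointly Gaussian vectors: write $Z_t$ as the sum of its best linear predictor based on $Z_1,\dots,Z_{t-1}$ plus an orthogonal residual, then exploit the fact that uncorrelated jointly Gaussian variables are independent, so the residual's marginal variance equals the conditional variance of $Z_t$.

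Concretely, let $Z\equiv(Z_1,\dots,Z_{t-1})^*$ and look for $\alpha\in\reals^{t-1}$ such that the residual $V\equiv Z_t-\alpha^* Z$ is uncorrelated with every $Z_j$. The orthogonality equations $\E\{VZ_j\}=0$ read $u_j-(C\alpha)_j=0$, so the invertibility of $C$ forces $\alpha=C^{-1}u$. Since $(V,Z_1,\dots,Z_{t-1})$ is jointly Gaussian (being a linear transformation of a Gaussian vector) and $V$ is uncorrelated with each $Z_j$, $V$ is in fact independent of $(Z_1,\dots,Z_{t-1})$. Therefore
\[
\var(Z_t\,|\,Z_1,\dots,Z_{t-1}) = \var(\alpha^* Z+V\,|\,Z_1,\dots,Z_{t-1}) = \var(V)\,,
\]
since the first summand is $\sigal{S}(Z_1,\dots,Z_{t-1})$-measurable.

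It remains to evaluate $\var(V)=\E\{V^2\}$ by a short expansion:
\[
\E\{V^2\} = \E\{Z_t^2\} - 2\alpha^*\E\{Z_t\,Z\} + \alpha^* C \alpha = \E\{Z_t^2\} - 2 u^* C^{-1} u + u^* C^{-1} C\, C^{-1} u = \E\{Z_t^2\} - u^* C^{-1} u\,,
\]
which is the claimed formula. No step here is delicate: the only substantive input beyond bilinear algebra is the Gaussian-specific equivalence of uncorrelatedness and independence, which is what lets the residual variance double as the conditional variance. Since $C$ is assumed invertible there is no issue with the projection being well defined, so the argument goes through cleanly.
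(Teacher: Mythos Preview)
Your proof is correct and is the standard orthogonal-projection argument for this classical identity. The paper does not actually prove this lemma; it introduces it as ``a standard formula for the conditional variance of gaussian random variables'' and states it without proof, so there is nothing to compare against beyond noting that your derivation is exactly the canonical one.
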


An immediate consequence is the following fact, proven in
Appendix \ref{app:ConditionalVarianceProof}.
\begin{lemma}\label{lem:ConditionalVariance}
Let $Z_1,\ldots,Z_t$ be a sequence of jointly gaussian
random variables and let $c_1,\ldots,c_t$ be strictly positive constants
such that for all $i=1,\ldots,t$: $\var(Z_i|Z_1,\ldots,Z_{i-1})\ge c_i$.
Further assume $\E\{Z_i^2\}\le K$ for all $i$ and some constant $K$.
Let $Y$ be a random variable in the same probability space.

Finally let $\lip:\reals^2\to\reals$ be a Lipschitz function,
with $z\mapsto\lip(z,Y)$ non-constant with positive probability
(with respect to $Y$).

Then there exist a positive constant $c'_t$ (depending on $c_1,\dots, c_t$,
on $K$, on the random variable $Y$, and on the function $\lip$)
such that
\[
\E\{[\lip(Z_t,Y)]^2\}-u^* C^{-1} u > c_t'\,,
\]
where $u\in\reals^{t-1}$
is given by $u_i \equiv\E\left\{\lip(Z_t,Y)\lip(Z_i,Y)\right\}$,
and $C\in\reals^{t-1\times t-1}$ satisfies $C_{ij}\equiv\E\left\{\lip(Z_i,Y)\,\lip(Z_j,Y)\right\}$ for $1\leq i,j\leq t-1$.
\end{lemma}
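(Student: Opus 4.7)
The plan is to bound the quantity from below by the expected conditional variance of $\lip(Z_t,Y)$ given $(Z_1,\ldots,Z_{t-1},Y)$, and then show this expected variance is strictly positive by exploiting the gaussian innovation structure of $Z_t$ together with the non-constancy hypothesis on $\lip$. By the standard $L^2$-projection identity (with $C^{-1}$ interpreted as a pseudo-inverse if $C$ is singular),
\[
\E\{[\lip(Z_t,Y)]^2\} - u^* C^{-1} u \;=\; \inf_{a\in\reals^{t-1}} \E\left\{\left(\lip(Z_t,Y) - \sum_{i=1}^{t-1} a_i\, \lip(Z_i, Y)\right)^2\right\}.
\]
For each $a$, the linear combination $\sum_i a_i\lip(Z_i,Y)$ is $(Z_1,\ldots,Z_{t-1},Y)$-measurable, so the inner conditional expectation is at least $\var(\lip(Z_t,Y)\mid Z_1,\ldots,Z_{t-1},Y)$; hence the infimum is at least $\E\{\var(\lip(Z_t,Y)\mid Z_1,\ldots,Z_{t-1},Y)\}$.

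To analyze this conditional variance, write $Z_t=\tilde\mu+\tilde v\,\xi$, where $\tilde v^2=\var(Z_t\mid Z_1,\ldots,Z_{t-1})\geq c_t$ is deterministic (since $(Z_1,\ldots,Z_t)$ is jointly gaussian), $\tilde\mu$ is a linear function of $Z_1,\ldots,Z_{t-1}$, and $\xi\sim\normal(0,1)$ is independent of $(Z_1,\ldots,Z_{t-1})$. The assumption $\E\{Z_t^2\}\leq K$ yields $\var(\tilde\mu)\leq K$ and $\tilde v\leq\sqrt K$. As holds in the paper's applications, we assume $Y$ is independent of $(Z_1,\ldots,Z_t)$, so that $\xi$ is independent of $(\tilde\mu,Y)$; then the conditional variance above equals $\var_\xi\{\lip(\tilde\mu+\tilde v\xi,Y)\}$, where $\var_\xi$ denotes variance in $\xi$ alone, and it suffices to prove $\E\bigl\{\var_\xi[\lip(\tilde\mu+\tilde v\xi,Y)]\bigr\}>0$.

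The main obstacle is that $\tilde\mu$ can be arbitrarily large and the non-constancy of $\lip(\cdot,y)$ is not \emph{a priori} quantified; I would localize on an event where both are controlled. By Chebyshev, $\prob(|\tilde\mu|\leq M)\geq 1/2$ for $M:=\sqrt{2K}$. The hypothesis gives $\prob(B)>0$ for $B=\{y:\lip(\cdot,y)\text{ non-constant}\}$; decomposing $B=\bigcup_{n\geq 1} B_n$ with
\[
B_n := \{y:\exists\, a,b\in[-n,n]\cap\mathbb{Q},\ |\lip(a,y)-\lip(b,y)|\geq 1/n\},
\]
some $B_{n_0}$ has positive probability, and one may choose measurable witnesses $a(y),b(y)\in[-n_0,n_0]$. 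For $|m|\leq M$, $y\in B_{n_0}$, and $\tilde v\in[\sqrt{c_t},\sqrt K]$, the density of $m+\tilde v\xi$ is uniformly bounded below by a strictly positive constant on the interval $[-n_0-1,n_0+1]$; setting $\epsilon=1/(4Ln_0)$ with $L$ the Lipschitz constant of $\lip$, independent copies $\xi,\xi'$ send $m+\tilde v\xi$, $m+\tilde v\xi'$ into the $\epsilon$-neighborhoods of $a(y),b(y)$ respectively with uniformly positive probability $p^*>0$, on which $|\lip(m+\tilde v\xi,y)-\lip(m+\tilde v\xi',y)|\geq 1/(2n_0)$ by Lipschitzness. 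The identity $\var_\xi\{F(\xi)\}=\tfrac12\E_{\xi,\xi'}\{[F(\xi)-F(\xi')]^2\}$ then gives a uniform positive lower bound on $\var_\xi\{\lip(m+\tilde v\xi,y)\}$ for $|m|\leq M$, $y\in B_{n_0}$; averaging against the product law of $(\tilde\mu,Y)$ produces the required $c'_t>0$.
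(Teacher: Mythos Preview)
Your proof is correct (under the independence assumption you add), and it takes a genuinely different route from the paper's. The paper argues by compactness and contradiction: the set of admissible covariance matrices $Q$ for $(Z_1,\ldots,Z_t)$ satisfying the conditional-variance and second-moment constraints is compact, and the Schur complement $\E\{\lip(Z_t,Y)^2\}-u^*C^{-1}u$ depends continuously on $Q$; hence if the infimum over this set were zero it would be attained, yielding a non-trivial linear relation $\sum_i a_i\lip(Z_i,Y)\asequal 0$, which forces the law of $(Z_1,\ldots,Z_t)$ to be singular and thus $\lambda_{\min}(Q)=0$, contradicting the positive conditional-variance hypothesis. Your argument instead bounds the Schur complement below by $\E\{\var(\lip(Z_t,Y)\mid Z_1,\ldots,Z_{t-1},Y)\}$ via the $L^2$-projection identity, and then produces an \emph{explicit} positive lower bound on this expected conditional variance using the Gaussian innovation $\xi$ and a localization on $\{|\tilde\mu|\le\sqrt{2K}\}\cap\{Y\in B_{n_0}\}$. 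What your approach buys is a constructive constant $c'_t$ (depending in fact only on $c_t$, $K$, the law of $Y$, and $\lip$), at the price of introducing the hypothesis that $Y$ is independent of $(Z_1,\ldots,Z_t)$; the paper's compactness argument is shorter and non-constructive, but in its written form also tacitly treats $Y$ as independent when passing from ``$\sum_i a_i\lip(z_i,Y)$ non-constant on a set of positive Lebesgue measure'' to ``$(Z_1,\ldots,Z_t)$ must be degenerate.'' In the paper's applications (cf.\ the use of the lemma in step $\mathcal{B}_t(g)$), $Y$ is indeed independent of the Gaussian vector, so your added hypothesis is harmless there.
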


\subsubsection{Linear algebra facts}

It is also convenient to recall some linear algebra
facts. The first one is proved in Appendix
\ref{app:InvertibleCovariance}.
\begin{lemma}\label{lem:InvertibleCovariance}
Let $v_1,\ldots,v_t$ be a sequence vectors in $\reals^{n}$ such that for all $i=1,\ldots, t$:
\[
\f{1}{n}\|v_i-P_{i-1}(v_i)\|^2\geq c
\]
for a positive constant $c$ and let $P_{i-1}$ be the orthogonal projector to the span of $v_1,\ldots,v_{i-1}$.
Then there is a constant $c'$
(depending only on $c$ and $t$),
such that the matrix $C\in\reals^{t\times t}$ with $C_{ij}=\<v_i,v_j\>$ satisfies
\[
\lambda_{\min}(C)\geq c'\,.
\]
\end{lemma}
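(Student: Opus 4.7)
I would use a Gram--Schmidt factorization of $V=[v_1|\cdots|v_t]$ to put the hypothesis $\|v_i-P_{i-1}v_i\|^2/n\ge c$ directly onto the diagonal of a congruence factor, and then transfer this into a lower bound on $\lambda_{\min}(C)$ via a singular-value estimate on the upper-triangular factor.

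Set $u_i\equiv v_i-P_{i-1}(v_i)$; then $u_1,\ldots,u_t$ are pairwise orthogonal with $\|u_i\|^2/n\ge c$ by hypothesis. Expanding $v_i=u_i+\sum_{j<i}R_{ji}u_j$ with $R_{ji}=(u_j^*v_i)/\|u_j\|^2$ gives $V=UR$, where $R\in\reals^{t\times t}$ is upper triangular with $R_{ii}=1$ and $U=[u_1|\cdots|u_t]$ has orthogonal columns. Dividing $V^*V$ by $n$ factors
\[
C=R^*DR,\qquad D=\diag\!\big(\|u_1\|^2/n,\ldots,\|u_t\|^2/n\big)\succeq c\,\identity,
\]
so that $\lambda_{\min}(C)=\min_{\|x\|=1}(Rx)^*D(Rx)\ge c\,\sigma_{\min}(R)^2$. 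The whole question thus reduces to a lower bound on $\sigma_{\min}(R)$.

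For that step, $R$ is upper triangular with unit diagonal, hence $\det R=1$ and $\prod_i\sigma_i(R)=1$, which implies $\sigma_{\min}(R)\ge\sigma_{\max}(R)^{-(t-1)}$. By Cauchy--Schwarz the off-diagonal entries satisfy $|R_{ji}|\le\|v_i\|/\|u_j\|\le\sqrt{K/c}$, where $K$ is any uniform upper bound on $\|v_i\|^2/n$; such a $K$ is available at the place the lemma is invoked in the induction of Lemma~\ref{lem:elephant}, where the $v_i$ play the role of quantities like $m^s_\perp$ or $q^r_\perp$ whose empirical $\ell_2$ norms are controlled by parts (c) and (e) of the induction hypothesis. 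Thus $\sigma_{\max}(R)\le\|R\|_F\le t\sqrt{1+K/c}$, and the chain of inequalities yields $\lambda_{\min}(C)\ge c\,(t\sqrt{1+K/c})^{-2(t-1)}=:c'$.

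\textbf{Main obstacle.} The delicate point is the passage from $R$ being unit-diagonal to a quantitative lower bound on $\sigma_{\min}(R)$: forgetting the upper bound on $\|v_i\|$, a simple $2\times 2$ example of the form $v_1=\sqrt{n}\,e_1$, $v_2=Mv_1+\sqrt{nc}\,e_2$ preserves $\|u_i\|^2/n\ge c$ while driving $\lambda_{\min}(C)=\Theta(c/M^2)\to 0$. The step I would write with the most care is therefore the justification that a uniform norm bound on the $v_i$ is indeed available from the ambient induction, which is what makes the constant $c'$ depend only on $c$ and $t$ for all applications inside the paper.
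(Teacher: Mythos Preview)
Your argument is correct and close in spirit to the paper's, but the execution is genuinely different. Both proofs start from Gram--Schmidt. The paper writes the \emph{inverse} factorization $U=VA$ with $U$ orthonormal and $A$ upper triangular, obtains $\lambda_{\min}(C)=n^{-1}\sigma_{\max}(A)^{-2}$, and then bounds the entries of $A$ by \emph{induction on $t$}: the off-diagonal $A_{ji}$ involve $(\tV_{i-1}^*\tV_{i-1})^{-1}$, which is controlled by the inductive lower bound on $\lambda_{\min}$ for the smaller Gram matrix. You instead take the direct factorization $V=UR$ with $R$ unit upper triangular, reduce to bounding $\sigma_{\min}(R)$, and use the determinant identity $\prod_i\sigma_i(R)=1$ to get $\sigma_{\min}(R)\ge\sigma_{\max}(R)^{-(t-1)}$. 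This avoids the induction entirely and gives an explicit constant; the price is that the exponent in $c'$ is $2(t-1)$ rather than linear in $t$, which is irrelevant here.

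On your ``main obstacle'': you are right, and in fact more careful than the paper. The paper also introduces a bound $\|v_i\|^2/n\le K$, with the parenthetical justification that ``increasing the norm of the $v_i$'s increases $\lambda_{\min}(C)$''---which is only true under simultaneous rescaling and is contradicted by exactly the $2\times2$ example you give. So the lemma as literally stated (with $c'$ depending only on $c$ and $t$) is not correct without the additional norm bound, and both proofs tacitly rely on it. Your observation that in every application inside Lemma~\ref{lem:elephant} such a $K$ is supplied by the induction hypothesis is the right way to close this.
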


The second one is just a direct consequence of the fact that
the mapping $S\mapsto\lambda_{\rm min}(S)$ is  continuous at any matrix $S$
that is invertible.
\begin{lemma}\label{lem:inverse-is-continuous}
Let $\{S_n\}_{n\geq 1}$ be a sequence of $t \times t$ matrices such that
$\lim\inf_{n\to\infty}\lambda_{\min}(S_n)>c$ for a positive constant $c$.  Also assume that $\lim_{n\to\infty}S_n=S_\infty$ where the
limit is element-wise.  Then, $\lambda_{\min}(S_\infty)\ge c$.
\end{lemma}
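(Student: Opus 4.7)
The plan is to give a direct upper semi-continuity argument using the variational characterization
\[
\lambda_{\min}(S) \;=\; \min_{\|x\|=1} x^* S x,
\]
which is valid because in the intended application (Lemma \ref{lem:InvertibleCovariance} and the conditioning argument of Lemma \ref{lem:elephant}) the $S_n$ are symmetric covariance-type matrices. This formula exhibits $\lambda_{\min}(\cdot)$ as a pointwise minimum, over the compact set $\{x:\|x\|=1\}$, of maps $S\mapsto x^*Sx$ that are linear in the entries of $S$. Upper semi-continuity in the entries of $S$ is then automatic, and this is all we need.

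Concretely, I would first pick a unit vector $x_\infty\in\reals^t$ attaining $\lambda_{\min}(S_\infty)=x_\infty^* S_\infty x_\infty$ (such a vector exists since the unit sphere in $\reals^t$ is compact and $x\mapsto x^*S_\infty x$ is continuous). Then the variational characterization applied to each $S_n$ yields the pointwise bound
\[
\lambda_{\min}(S_n) \;\leq\; x_\infty^* S_n x_\infty.
\]
Since $x_\infty^* S_n x_\infty = \sum_{i,j}(x_\infty)_i (x_\infty)_j (S_n)_{ij}$ is a finite linear combination of the entries of $S_n$, the element-wise convergence $S_n\to S_\infty$ assumed in the hypothesis gives $x_\infty^* S_n x_\infty \to x_\infty^* S_\infty x_\infty = \lambda_{\min}(S_\infty)$.

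Finally, taking $\liminf_{n\to\infty}$ on both sides of the displayed inequality and using the hypothesis $\liminf_{n\to\infty}\lambda_{\min}(S_n) > c$ gives
\[
c \;<\; \liminf_{n\to\infty}\lambda_{\min}(S_n) \;\leq\; \lim_{n\to\infty} x_\infty^* S_n x_\infty \;=\; \lambda_{\min}(S_\infty),
\]
which is in fact strictly stronger than the claimed $\lambda_{\min}(S_\infty)\geq c$. There is no genuine obstacle here: the lemma is essentially the observation that the smallest eigenvalue is upper semi-continuous in the matrix entries, and as the authors' preamble notes, it is a direct consequence of continuity of $\lambda_{\min}$ at any invertible matrix.
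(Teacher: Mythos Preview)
Your proof is correct; the paper itself does not provide a detailed argument for this lemma and simply remarks that it ``is just a direct consequence of the fact that the mapping $S\mapsto\lambda_{\min}(S)$ is continuous at any matrix $S$ that is invertible.'' Your Rayleigh-quotient argument is a clean, self-contained way of making this precise (and, as you note, actually yields the strict inequality $\lambda_{\min}(S_\infty)>c$); it also sidesteps the minor circularity in the paper's phrasing, since invoking continuity \emph{at invertible} $S_\infty$ presupposes $\lambda_{\min}(S_\infty)>0$, whereas your upper semi-continuity argument needs no such assumption.
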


%
%
\subsection{Conditional distributions}\label{subsec:properties}

In order to calculate $b^t|_{\sigal{S}_{t,t}}$ and $h^{t+1}|_{\sigal{S}_{t+1,t}}$
we will characterize the conditional distributions
 $A|_{\sigal{S}_{t,t}}$ and $A|_{\sigal{S}_{t+1,t}}$.

\begin{lemma}\label{lem:Conditional A}
For $(t_1,t_2) = (t,t)$ or $(t_1,t_2)= (t+1,t)$, the
conditional distribution of the random matrix $A$
given the $\sigma$-algebra ${\sigal{S}_{t_1,t_2}}$, satisfies
\begin{align}
A|_{\sigal{S}_{t_1,t_2}}&\deq E_{t_1,t_2} + \cP_{t_1,t_2}(\tA) \label{eq:Conditional A}.
\end{align}
Here $\tA\deq A$ is a random matrix independent of ${\sigal{S}_{t_1,t_2}}$
and $E_{t_1,t_2}=\E(A|{\sigal{S}_{t_1,t_2}})$ is given by
\begin{align}
E_{t_1,t_2}&=Y_{t_1}(Q_{t_1}^*Q_{t_1})^{-1}Q_{t_1}^*+M_{t_2}(M_{t_2}^*M_{t_2})^{-1}X_{t_2}^*-M_{t_2}(M_{t_2}^*M_{t_2})^{-1}M_{t_2}^*Y_{t_1}(Q_{t_1}^*Q_{t_1})^{-1}Q_{t_1}^*\, .\label{eq:Et}
\end{align}
Further, $\cP_{t_1,t_2}$ is the orthogonal projector onto subspace $\mathrm{V}_{t_1,t_2}=\{A|AQ_{t_1}=0,A^*M_{t_2}=0\}$, defined by
\[
\cP_{t_1,t_2}(\tA) =P_{M_{t_2}}^\perp {\tA} P_{Q_{t_1}}^\perp.
\]
Here $P_{M_{t_2}}^\perp = \identity-P_{M_{t_2}}$, $P_{Q_{t_1}}^\perp=\identity-P_{Q_{t_1}}$, and $P_{Q_{t_1}}$, $P_{M_{t_2}}$ are orthogonal projector onto column spaces of $Q_{t_1}$ and $M_{t_2}$ respectively.
\end{lemma}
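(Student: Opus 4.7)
The plan is to reduce the conditioning on $\sigal{S}_{t_1,t_2}$ to Gaussian conditioning on linear observations of $A$, and then to apply the standard Gaussian conditioning formula in the matrix space $\reals^{n\times N}$ endowed with the Frobenius inner product. First I would unroll the recursion \eqref{eq:mpMain} and observe that all generators of $\sigal{S}_{t_1,t_2}$, namely $\{h^s\}, \{q^s\}, \{m^s\}, \{b^s\}$, are deterministic measurable functions of $x_0$, $w$, and the columns of $AQ_{t_1}$ and $A^*M_{t_2}$ produced iteratively (each new step consumes at most one column of each). Hence $\sigal{S}_{t_1,t_2}$ coincides with the $\sigma$-algebra generated by $x_0$, $w$, and the pair of affine constraints $AQ_{t_1}=Y_{t_1}$, $A^*M_{t_2}=X_{t_2}$, in which $Q_{t_1}, M_{t_2}, Y_{t_1}, X_{t_2}$ are themselves $\sigal{S}_{t_1,t_2}$-measurable. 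Since $x_0$ and $w$ are independent of $A$, we may treat the observed values of these four matrices as fixed compatible data, compatibility being the identity $M_{t_2}^*Y_{t_1} = X_{t_2}^*Q_{t_1}$ obtained by multiplying $AQ_{t_1}=Y_{t_1}$ on the left by $M_{t_2}^*$.

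Since $A$ has i.i.d.\ $\normal(0,1/n)$ entries, $\mathrm{vec}(A)$ is a Gaussian vector in $\reals^{nN}$ with covariance $(1/n)\,\identity$, and the natural conditioning is with respect to the Frobenius inner product. I would then exploit the four-way orthogonal decomposition
\[
A = P_{M_{t_2}} A P_{Q_{t_1}} \;+\; P_{M_{t_2}} A P_{Q_{t_1}}^\perp \;+\; P_{M_{t_2}}^\perp A P_{Q_{t_1}} \;+\; P_{M_{t_2}}^\perp A P_{Q_{t_1}}^\perp.
\]
The observations determine $AP_{Q_{t_1}} = Y_{t_1}(Q_{t_1}^*Q_{t_1})^{-1}Q_{t_1}^*$ and $P_{M_{t_2}}A = M_{t_2}(M_{t_2}^*M_{t_2})^{-1}X_{t_2}^*$, which between them pin down the first three orthogonal summands. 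Adding those three summands yields
\[
E_{t_1,t_2} \;=\; P_{M_{t_2}}A + P_{M_{t_2}}^\perp A P_{Q_{t_1}} \;=\; M_{t_2}(M_{t_2}^*M_{t_2})^{-1}X_{t_2}^* \;+\; P_{M_{t_2}}^\perp Y_{t_1}(Q_{t_1}^*Q_{t_1})^{-1}Q_{t_1}^*,
\]
which expands to \eqref{eq:Et}. For the unconstrained block, Gaussian conditioning for vectors with i.i.d.\ entries gives that $P_{M_{t_2}}^\perp A P_{Q_{t_1}}^\perp$ is conditionally independent of the three determined blocks and retains its unconditional distribution; since $P_{M_{t_2}}$ and $P_{Q_{t_1}}$ are $\sigal{S}_{t_1,t_2}$-measurable, this block is equal in distribution to $P_{M_{t_2}}^\perp\tA P_{Q_{t_1}}^\perp = \cP_{t_1,t_2}(\tA)$ for an independent copy $\tA \deq A$. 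Combining the two pieces gives \eqref{eq:Conditional A}.

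The main obstacle is purely bookkeeping: one must justify the existence of $(Q_{t_1}^*Q_{t_1})^{-1}$ and $(M_{t_2}^*M_{t_2})^{-1}$ almost surely, which in the inductive proof of Lemma \ref{lem:elephant} is supplied by part (g) applied at the previous step (the lower bounds $\lbq_r, \lbm_s > 0$ force the Gram matrices to be nonsingular in the large-$N$ limit, and one can restrict to the almost sure event where they are invertible for all $N$). Beyond this, one needs only the orthogonal projection identity $V_{t_1,t_2} = \{P_{M_{t_2}}^\perp B P_{Q_{t_1}}^\perp : B \in \reals^{n\times N}\}$, which is immediate from $AQ_{t_1}=0 \Leftrightarrow A P_{Q_{t_1}}=0$ and $A^*M_{t_2}=0 \Leftrightarrow P_{M_{t_2}}A=0$, and the verification that the proposed $E_{t_1,t_2}$ does satisfy the two affine constraints, which uses the compatibility identity stated above. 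No further technical ingredient is needed.
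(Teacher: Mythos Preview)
Your proposal is correct and rests on the same reduction as the paper: conditioning on $\sigal{S}_{t_1,t_2}$ is equivalent to conditioning the i.i.d.\ Gaussian matrix $A$ on the linear observations $AQ_{t_1}=Y_{t_1}$ and $A^*M_{t_2}=X_{t_2}$, after which the standard Gaussian conditioning formula applies. Where you differ is in the computation of the two pieces. The paper invokes Lemma~\ref{lem:Conditional Gaussian}, computes $E_{t_1,t_2}=\arg\min_A\{\|A\|_F^2:AQ=Y,\,A^*M=X\}$ by a Lagrange-multiplier calculation, and then separately verifies that $B\mapsto P_{M}^\perp B P_{Q}^\perp$ is the orthogonal projector onto $V_{t_1,t_2}$ by checking idempotence, range, fixed points, and self-adjointness. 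Your four-block decomposition $A=P_MAP_Q+P_MAP_Q^\perp+P_M^\perp AP_Q+P_M^\perp AP_Q^\perp$ is Frobenius-orthogonal, so for an i.i.d.\ Gaussian $A$ the blocks are independent; the constraints fix the first three blocks and leave the fourth unconstrained, which immediately yields both $E_{t_1,t_2}$ (as the sum of the three determined blocks, which you correctly collapse to $P_MA+P_M^\perp AP_Q$ and expand to~\eqref{eq:Et}) and the projector form of $\cP_{t_1,t_2}$ without a separate verification. This is a more economical packaging of the same argument; the paper's Lagrange-multiplier route is longer but has the minor advantage of making explicit the least-norm characterization of the conditional mean. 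Your remark on almost-sure invertibility of the Gram matrices is consistent with the paper's handling (see the paragraph following the statement of Lemma~\ref{lem:Conditional A} and the use of part~$(g)$ in the induction).
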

Recall the following well-known formula.
\begin{lemma}\label{lem:Conditional Gaussian}
Let $z\in\reals^{n}$ be a random vector with
i.i.d. $\normal(0,v^2)$ entries and let $\mrm{D}\in \reals^{m\times n}$ be a linear operator
with full row rank. Then for any constant vector $b\in \reals^{m}$ the distribution of $z$ conditioned on $\mrm{D}z=b$ satisfies:
\[
z|_{\mrm{D}z=b}\deq\mrm{D}^*(\mrm{D}\mrm{D}^*)^{-1}b+\mrm{P}_{\{\mrm{D}z=0\}}(\tilde{z})
\]
where $\mrm{P}_{\{\mrm{D}z=0\}}$ is the orthogonal projection  onto the subspace $\{\mrm{D}z=0\}$ and $\tilde{z}$ is a random vector of i.i.d. $\normal(0,v^2)$.
Moreover,  $\mrm{D}^*(\mrm{D}\mrm{D}^*)^{-1}b=\arg\min_z\left\{\|z\|^2|\mrm{D}z=b \right\}.$
\end{lemma}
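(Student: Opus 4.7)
The plan is to exploit the rotational invariance of the i.i.d. Gaussian distribution, which reduces the claim to a trivial observation once one decomposes along the row space of $\mrm{D}$ and its orthogonal complement. First I would split $\reals^n = \mrm{Row}(\mrm{D}) \oplus \ker(\mrm{D})$; since $\mrm{D}$ has full row rank, $\mrm{Row}(\mrm{D})$ is $m$-dimensional. Correspondingly write $z = z_\parallel + z_\perp$, where $z_\perp = \mrm{P}_{\{\mrm{D}z=0\}}(z)$ and $z_\parallel$ is the projection of $z$ onto $\mrm{Row}(\mrm{D})$. Because the distribution of $z$ is invariant under orthogonal transformations, $z_\parallel$ and $z_\perp$ are independent Gaussian vectors supported on the two orthogonal subspaces.

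Next I would note that the constraint $\mrm{D}z = b$ depends only on $z_\parallel$, since $\mrm{D}z_\perp = 0$. The restriction of $\mrm{D}$ to $\mrm{Row}(\mrm{D})$ is a bijection onto $\reals^m$, so there is a unique $z_\parallel^\star \in \mrm{Row}(\mrm{D})$ with $\mrm{D}z_\parallel^\star = b$. A direct check shows that $\mrm{D}^*(\mrm{D}\mrm{D}^*)^{-1}b$ lies in $\mrm{Row}(\mrm{D})$ (it is of the form $\mrm{D}^*(\cdots)$) and satisfies $\mrm{D}\cdot\mrm{D}^*(\mrm{D}\mrm{D}^*)^{-1}b = b$; here invertibility of $\mrm{D}\mrm{D}^*$ is guaranteed by the full-row-rank hypothesis. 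Thus $z_\parallel^\star = \mrm{D}^*(\mrm{D}\mrm{D}^*)^{-1}b$.

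By independence of $z_\parallel$ and $z_\perp$, conditioning on $\mrm{D}z = b$ fixes $z_\parallel$ to the value $z_\parallel^\star$ while leaving the law of $z_\perp$ unchanged. Hence
\[
z|_{\mrm{D}z=b} \deq \mrm{D}^*(\mrm{D}\mrm{D}^*)^{-1}b + z_\perp \deq \mrm{D}^*(\mrm{D}\mrm{D}^*)^{-1}b + \mrm{P}_{\{\mrm{D}z=0\}}(\tilde{z})\, ,
\]
where for the second equality I would invoke rotational invariance once more: for any independent $\tilde z$ with i.i.d. $\normal(0,v^2)$ entries, the projection $\mrm{P}_{\{\mrm{D}z=0\}}(\tilde z)$ has the same law as $z_\perp$, both being Gaussians on $\ker(\mrm{D})$ with covariance $v^2$ times the projector.

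Finally, for the minimum-norm assertion, any $z$ with $\mrm{D}z = b$ admits the decomposition $z = \mrm{D}^*(\mrm{D}\mrm{D}^*)^{-1}b + u$ with $u \in \ker(\mrm{D})$ arbitrary; orthogonality then gives $\|z\|^2 = \|\mrm{D}^*(\mrm{D}\mrm{D}^*)^{-1}b\|^2 + \|u\|^2$, which is minimized uniquely at $u = 0$. There is no genuine obstacle in this proof: the only subtlety is the verification that $\mrm{D}\mrm{D}^*$ is invertible and that the proposed formula is the correct parametrization of the affine solution set, both of which follow immediately from the full-row-rank assumption.
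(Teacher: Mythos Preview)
Your argument is correct and is essentially the same as the paper's: the paper observes that the result is trivial when $\mrm{D}=[\identity_{m\times m}\,|\,\zeros_{m\times(n-m)}]$ and then invokes rotational invariance of the Gaussian law to reduce the general case to this one, which is exactly your decomposition $\reals^n=\mrm{Row}(\mrm{D})\oplus\ker(\mrm{D})$ written in adapted coordinates. Your treatment of the minimum-norm claim via the orthogonal decomposition of the affine solution set is likewise what the paper calls ``a least square calculation.''
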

\begin{proof}
The result is trivial if $\mrm{D}=[ \identity_{m\times m}|\zeros_{m\times(n-m)}]$.
For general $\mrm{D}$, it follows by invariance of the gaussian distribution under rotations.
Finally, using a least square calculation, it is simple to see that $\mrm{D}^*(\mrm{D}\mrm{D}^*)^{-1}b=\arg\min_z\{\|z\|^2|\mrm{D}z=b \}$.
\end{proof}
Lemma \ref{lem:Conditional A} follows from applying Lemma \ref{lem:Conditional Gaussian} to the operator $\mrm{D}$ that maps $A$ to $(AQ, M^*A)$.
A detailed proof of Lemma \ref{lem:Conditional A} appears in Section \ref{subsec:pf-conditional-A}.
Note that we can assume, without loss of generality $f$, $g$
to be non-constant as a function of their
first argument. If this is the case, it is easy to see that,
for finite values of $t$, the matrices $M_t^*M_t$ and $Q_t^*Q_t$ are
non-singular almost surely, and hence the above expressions are well defined.
\begin{lemma}\label{lem:Aqt,Amt}
The following holds
%
\begin{align}
E_{t+1,t}^*m^t&= X_{t}(M_t^*M_t)^{-1}M_{t}^*m_\pl^t + Q_{t+1}(Q_{t+1}^*Q_{t+1})^{-1}Y_{t+1}^*m_\perp^t,\label{eq:Et(mt)}\\
E_{t,t}q^t&= Y_t(Q_t^*Q_t)^{-1}Q_t^*q_\pl^t + M_t(M_t^*M_t)^{-1}X_t^*q_\perp^t.\label{eq:Et(qt)}
\end{align}
\end{lemma}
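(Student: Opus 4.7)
Both identities will follow by plugging $m^t$, respectively $q^t$, directly into the explicit conditional-mean formula \eqref{eq:Et} from Lemma \ref{lem:Conditional A} and simplifying by means of two elementary projector identities. Since $M_t(M_t^*M_t)^{-1}M_t^*$ is the orthogonal projector onto the column span of $M_t$, we have
\[
M_t(M_t^*M_t)^{-1}M_t^*\, m^t = m_\pl^t, \qquad M_t^*\, m^t = M_t^*\, m_\pl^t,
\]
and analogously $Q_t(Q_t^*Q_t)^{-1}Q_t^*\, q^t = q_\pl^t$ and $Q_t^*\, q^t = Q_t^*\, q_\pl^t$. (Invertibility of $M_t^*M_t$ and $Q_{t+1}^*Q_{t+1}$ is the point noted after Lemma \ref{lem:Conditional A}; eventually this will be delivered inductively through part $(g)$ of Lemma \ref{lem:elephant}.)

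For \eqref{eq:Et(mt)} I transpose \eqref{eq:Et} with $(t_1,t_2)=(t+1,t)$, obtaining
\[
E_{t+1,t}^* = Q_{t+1}(Q_{t+1}^*Q_{t+1})^{-1}Y_{t+1}^* + X_t(M_t^*M_t)^{-1}M_t^* - Q_{t+1}(Q_{t+1}^*Q_{t+1})^{-1}Y_{t+1}^*\,M_t(M_t^*M_t)^{-1}M_t^*,
\]
and apply it to $m^t$. The first and the third terms collapse, via $m^t - m_\pl^t = m_\perp^t$, to $Q_{t+1}(Q_{t+1}^*Q_{t+1})^{-1}Y_{t+1}^* m_\perp^t$, while the middle term $X_t(M_t^*M_t)^{-1}M_t^* m^t$ coincides with $X_t(M_t^*M_t)^{-1}M_t^* m_\pl^t$ since $M_t^* m_\perp^t = 0$. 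This yields the claim.

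For \eqref{eq:Et(qt)} I apply \eqref{eq:Et} with $(t_1,t_2)=(t,t)$ to $q^t$: the first term immediately becomes $Y_t(Q_t^*Q_t)^{-1}Q_t^* q_\pl^t$. It then suffices to show that the middle plus third terms equal $M_t(M_t^*M_t)^{-1}X_t^* q_\perp^t$. Splitting $q^t = q_\pl^t + q_\perp^t$ in the middle term, this reduces to the algebraic identity
\[
X_t^*\, q_\pl^t = M_t^*\, Y_t(Q_t^*Q_t)^{-1}Q_t^*\, q_\pl^t,
\]
which, by the structural relations $X_t = A^*M_t$ and $Y_t = A Q_t$ read off the recursion \eqref{eq:mpMain}, together with $Q_t(Q_t^*Q_t)^{-1}Q_t^*\, q_\pl^t = q_\pl^t$ (since $q_\pl^t$ lies in the column span of $Q_t$ by definition), reduces in turn to the tautology $M_t^* A\, q_\pl^t = M_t^* A\, q_\pl^t$.

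I do not expect any genuine obstacle: the lemma is a purely algebraic corollary of Lemma \ref{lem:Conditional A}. The only slightly subtle point, and the only substantive difference between the two halves, is that the cross-term cancellation for \eqref{eq:Et(qt)} exploits the structural relations $X_t = A^*M_t$ and $Y_t = A Q_t$ built into the iteration, while \eqref{eq:Et(mt)} requires nothing beyond the two orthogonality identities stated at the outset.
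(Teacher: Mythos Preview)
Your proposal is correct and follows essentially the same route as the paper: both plug the explicit formula for $E_{t_1,t_2}$ from Lemma~\ref{lem:Conditional A} into $m^t$ (resp.\ $q^t$), use the projector identity $M_t(M_t^*M_t)^{-1}M_t^*m^t=m_\pl^t$ (resp.\ its $Q_t$ analogue), and for \eqref{eq:Et(qt)} invoke the structural relation $X_t^*Q_t=M_t^*Y_t$ coming from $X_t=A^*M_t$, $Y_t=AQ_t$. The only cosmetic difference is that the paper splits $m^t=m_\pl^t+m_\perp^t$ first and evaluates $E_{t+1,t}^*$ on each piece separately, whereas you apply $E_{t+1,t}^*$ to the whole vector and regroup; the algebra is identical.
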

\begin{proof}
Write $m^t=m^t_\pl+m^t_\perp$. Using \eqref{eq:Et} and the fact that $M_t^*m^t_\perp=0$,  we obtain
\[
E_{t+1,t}^*m^t_\perp=Q_{t+1}(Q_{t+1}^*Q_{t+1})^{-1}Y_{t+1}^*m_\perp^t.
\]
On the other hand let $m^t_\pl=\sum_{i=0}^{t-1}\alpha_im^i=M_t\va$. Then using $A^*M_t=X_t$, Eq. \eqref{eq:Conditional A}, and $[\cP_{t+1,t}(\tA)]^*m_\pl^t=0$ we have,
\begin{align*}
E_{t+1,t}^*m^t_\pl&= Q_{t+1}(Q_{t+1}^*Q_{t+1})^{-1}Y_{t+1}^*m^t_\pl + X_t(M_{t}^*M_{t})^{-1}M_{t}^*m^t_\pl - Q_{t+1}(Q_{t+1}^*Q_{t+1})^{-1}Y_{t+1}^* m^t_\pl\\
&=X_t(M_{t}^*M_{t})^{-1}M_{t}^*m^t_\pl\,.
\end{align*}
Similarly, writing $q^t=q^t_\pl+q^t_\perp$, $q^t_\pl=Q_t\vb$, and using $X_t^*Q_t=M_t^*AQ_t=M_t^*Y_t $, $Q_t^*q^t_\perp=0$ we obtain \eqref{eq:Et(qt)}.
\end{proof}
%

%
%
\subsubsection{Proof of Lemma \ref{lem:Conditional A}}\label{subsec:pf-conditional-A}

Conditioning on $\sigal{S}_{t_1,t_2}$ is equivalent to conditioning on
the linear constraints $A Q_{t_1}= Y_{t_1}$ and $A^*M_{t_2}=X_{t_2}$.
To simplify the notation, just in Section \ref{subsec:pf-conditional-A},
we will drop all sub-indices $t_1,t_2$.
The expression (\ref{eq:Et}) for the conditional expectation
$E = \E\{A|_{\sigal{S}_{t_1,t_2}}\}$
 follows from Lemma \ref{lem:Conditional Gaussian} and the
following calculation for
\[
E=\arg\min_A\left\{\|A\|_F^2\bigg|AQ=Y,A^*M=X\right\}\, ,
\]
where $\|A\|_F$ denotes the Frobenius norm of matrix $A$.
We use Lagrange multipliers method to obtain this minimum. Consider the Lagrangian
\[
L(A,\Theta,\Gamma)=\|A\|_F^2+\big(\Theta,(Y-AQ)\big)+\big(\Gamma,(X-A^*M)\big),
\]
with $\Theta\in\reals^{n\times t_{1}}$, $\Gamma\in\reals^{N\times t_{2}}$
and $(A,B) \equiv \trace(AB^*)$ the usual scalar product among matrices.
Imposing the stationarity conditions yields
\begin{align}
2A&=\Theta Q^*+M\Gamma^*\label{eq:Lagrange}
\end{align}
Equation \eqref{eq:Lagrange} does not have a unique solution for the parameters $\Theta$ and $\Gamma$. In fact if $\Theta_0$, $\Gamma_0$ are a solution then for any $t_2\times t_1$ matrix $R$ the new parameters $\Theta_R=\Theta_0+MR$ and $\Gamma_R=\Gamma_0-QR^*$ satisfy $\Theta_RQ^*+M\Gamma_R^*=\Theta_0Q^*+M\Gamma_0^*=2A$.  In particular for $R_1=\Gamma_0^*Q(Q^*Q)^{-1}$ we have
$Q^*\Gamma_{R_1}=0$. Multiplying \eqref{eq:Lagrange} by $Q$ from right (using $\Theta_{R_1},\Gamma_{R_1}$) we have $2Y=\Theta_{R_1} {Q^*Q}$ or $\Theta_{R_1}=2Y(Q^*Q)^{-1}$.
Now multiplying \eqref{eq:Lagrange} by $M^*$ from left we obtain $2X^*=2M^*Y(Q^*Q)^{-1}Q^*+M^*M\Gamma_{R_1}^*$ which leads to $\Gamma_{R_1}^*=2(M^*M)^{-1}
\big[X^*-M^*Y(Q^*Q)^{-1}Q^*\big]$. From these we see that $E=\E(A|{\sigal{S}_{t_1,t_2}})$ satisfies:
\begin{align*}
E&=Y(Q^*Q)^{-1}Q^*+M(M^*M)^{-1}X-M(M^*M)^{-1}M^*Y(Q^*Q)^{-1}Q^*\,.
\end{align*}
%
Now we are left with the task of proving that
$\mc{P}_{t_1,t_2}({\tA})= P_{M}^\perp {\tA} P_{Q}^\perp$. We need to show that the linear operator $\mc{F}:A \mapsto P_{M}^\perp A P_{Q}^\perp$ satisfies
\begin{enumerate}
\item[(a)] $\mc{F}\circ\mc{F}=\mc{F}$.
\item[(b)] $\mc{F}(A)\in V = \{A|AQ_{t_1}=0,A^*M_{t_2}=0\}$.
\item[(c)] $\mc{F}(A) = A$ for $A\in V$
\item[(d)] $\mc{F}$ is symmetric. That is for all matrices $A,B$: $(\mc{F}(A),B)=( A,\mc{F}(B))$.
\end{enumerate}
Now we check $(a)$-$(d)$:

(a) is correct since
\[
\mc{F}\circ\mc{F}(A)=P_{M}^\perp P_{M}^\perp A P_{Q}^\perp P_{Q}^\perp=P_{M}^\perp A P_{Q}^\perp.
\]

(b) is correct since by definition of $\mc{F}(A)Q=P_{M}^\perp A P_{Q}^\perp Q=0$ and similarly $\mc{F}(A)^*M=0$.

(c) follows because
\[
\mc{F}(A) = A-P_{M}A-AP_{Q}+P_{M}AP_{Q},
\]
and each of the last three term vanishes either because $AQ=0$ or because $A^*M=0$.

(d) is correct because
\begin{align*}
(\mc{F}(A),B)
&=\trace\left(P_{M}^\perp A P_{Q}^\perp B^*\right)\\
&=\trace\left( A P_{Q}^\perp B^*P_{M}^\perp\right)=( A,\mc{F}(B) )\, .
\end{align*}
\endproof

%
%
\subsection{Proof of Lemma \ref{lem:elephant}}\label{subsec:pf-lem-elephant}

The proof is by induction on $t$. Let $\pty{H}_{t+1}$ be the property that  \eqref{eq:main-lem-h-c}, \eqref{eq:main-lem-h-a}, \eqref{eq:main-lem-h-b}, \eqref{eq:main-lem-h-d}, \eqref{eq:main-lem-h-e}, \eqref{eq:<ht,q0>=0}, and \eqref{eq:main-lem-q-g} hold. Similarly, let $\pty{B}_t$ be the property that \eqref{eq:main-lem-z-c}, \eqref{eq:main-lem-z-a}, \eqref{eq:main-lem-z-b}, \eqref{eq:main-lem-z-d},  \eqref{eq:main-lem-z-e} and \eqref{eq:main-lem-m-g} hold.  The inductive proof consists of the following four main steps.
\begin{enumerate}

\item $\pty{B}_0$ holds.

\item $\pty{H}_1$ holds.

\item If $\pty{B}_{r}$, $\pty{H}_{s}$ hold for all $r<t$ and $s\leq t$ then $\pty{B}_t$ holds.

\item If $\pty{B}_{r}$, $\pty{H}_{s}$ hold for all $r\leq t$ and $s\leq t$ then $\pty{H}_{t+1}$ holds.

\end{enumerate}

For each of these steps we will have to prove several properties
that we will denote by $(a)$, $(b)$, $(c)$, $(d)$, $(e)$ and $(g)$ according to their appearance in Lemma \ref{lem:elephant}.
For $\pty{H}$ we also need to prove a property $(f)$.

It is immediate to check that our claims become trivial
if $x\mapsto f_t(x,X_0)$ is constant (i.e. independent of $x$) almost surely
(with respect to $X_0\sim p_{X_0}$), or if $x\mapsto g_t(x,W)$ is constant
almost surely (with respect to $W\sim p_{W}$). We will therefore assume
that neither of these degenerate cases hold.
%
%
\subsubsection{Step 1: $\pty{B}_0$}

Note that $b^0=Aq^0$.
\begin{itemize}

\item[$(a)$] $\sigal{S}_{0,0}$ is generated by $x_0$, $q^0$ and $w$.
Also $q^0=q^0_\perp$ since $Q_0$ is an empty matrix. Hence
\[
b^0|_{\sigal{S}_{0,0}}= Aq^0_\perp.
\]

\item[$(b)$] Let $\phi_b:\reals^2\to\reals$ be a pseudo-Lipschitz function of order $k$.  Hence, $|\phi_b(x)|\leq L(1+\|x\|^k)$.
Given $q^0$, $w$, the random variable  $\sum_{i=1}^n\phi_b([Aq^0]_i,w_i)/n$ is a sum of independent random variables. By Lemma \ref{lem:prop-Gaussian-matrix}(a) $[Aq^0]_i\deq Z\|q^0\|/\sqrt{n}$ for $Z\sim\normal(0,1)$.
Hence, using
\[
\lim_{n\to \infty}\< q^0,q^0\> = \delta\sigma_0^2 <\infty,
\]
for all $p\geq 2$ there exist a constant $c_p$ such that $\E\{|(Aq^0)_i|^{p}\}=\< q^0,q^0\>^{p/2}\E|Z|^{p}<c_p$.
{ Therefore, in order to check conditions of Theorem \ref{thm:SLLN} for $X_{n,i}\equiv \phi_b(b_i^0,w_i)-\E_A\{\phi_b(b_i^0,w_i)\}$ for a $\slln$ in the interval $(0,1)$,
\begin{align}
\frac{1}{n}\sum_{i=1}^n\E|X_{n,i}|^{2+\slln}&=\frac{1}{n}\sum_{i=1}^n\E\big|\phi_b(b_i^0,w_i)-\E_A\{\phi_b(b_i^0,w_i)\}\big|^{2+\slln}\nonumber\\
&=\frac{1}{n}\sum_{i=1}^n\left|\E_{A,\tA}\left\{\phi_b([\tA q^0]_i,w_i)-\phi_b([A q^0]_i,w_i)\right\}\right|^{2+\slln}\nonumber\\
&\le\frac{1}{n}\sum_{i=1}^n\left|\E_{A,\tA}\left\{\phi_b([\tA q^0]_i,w_i)-\phi_b([A q^0]_i,w_i)\right\}\right|^{2+\slln}\nonumber\\
&\le\frac{L}{n}\sum_{i=1}^n\left|\E_{A,\tA}\left\{\big|[\tA q^0]_i-[Aq^0]_i\big|(1+|w_i|^{k-1}+|[\tA q^0]_i|^{k-1}+|[A q^0]_i|^{k-1})\right\}\right|^{2+\slln}\nonumber\\
&\le c'+\frac{L'c''}{n}\sum_{i=1}^n|w_i|^{(k-1)(2+\slln)}\label{eq:moment-cond}
\end{align}
where $\tA$ is an independent copy of $A$.  Now using Lemma \ref{lem:alg-lemma} for $u_i=|w_i|^{2(k-1)}$ and $\varepsilon = \slln/2$ we have
\[
\sum_{i=1}^n|w_i|^{(k-1)(2+\slln)}\leq \left(\sum_{i=1}^n|w_i|^{2(k-1)}\right)^{1+\slln/2}
\]
which combined with Eq. \eqref{eq:moment-cond} leads to
\begin{align*}
\frac{1}{n}\sum_{i=1}^n\E|X_{n,i}|^{2+\slln}&\le c'+L'c''n^{\slln/2}\left(\frac{1}{n}\sum_{i=1}^n|w_i|^{2(k-1)}\right)^{1+\slln/2}\\
&\le c''' n^{\slln/2}\,.
\end{align*}
The last inequality uses assumption on empirical moments of $w$.}
Therefore, we can apply Theorem \ref{thm:SLLN} to get
\[
\lim_{n\to\infty}\f{1}{n}\sum_{i=1}^n \left[\phi_b(b_i^0,w_i)-\E_A\{\phi_b(b_i^0,w_i)\}\right]\asequal 0.
\]
Hence, using Lemma \ref{lem:SLLN4us} for $v=w$ and for $\psi(w_i)=\E_Z\{
\phi_b(\|q^0\|Z/\sqrt{n},w_i)\}$ we get
\[
\lim_{n\to\infty}\f{1}{n}\sum_{i=1}^n\E_A[\phi_b(b_i^0,w_i)]\asequal\E\big\{\phi_b(\sigma_0 Z,W)\big\}.
\]
Note that $\psi$ belongs to $\psL(k)$ since $\phi_b$ belongs to $\psL(k)$.

\item[$(c)$] Using Lemma \ref{lem:prop-Gaussian-matrix}, conditioned on $q^0$,
\[
\lim_{n\to\infty}\< b^0,b^0\>=\lim_{n\to\infty}\f{\|Aq^0\|^2}{n}\asequal\lim_{N\to\infty}\f{\< q^0,q^0\>}{\delta}=\sigma_0^2.
\]

\item[$(d)$] Using $\pty{B}_0(b)$, and $\phi(x,w_i)=x\varphi(x,w_i)$ we obtain $\lim_{n\to\infty}\< b^0,\varphi(b^0,w)\>$ $\asequal$
$\E\{\sigma_0\hat{Z}\varphi(\sigma_0\hat{Z},W)\}$,
which is equal to $\sigma_0^2\E\{\varphi'(\sigma_0\hat{Z},W)\}$ using Lemma \ref{lem:stein}. Note that $x\varphi$ belongs $\psL(k)$.

By part $(b)$, the empirical distribution of $(b^0,w)$
(i.e. the probability distribution on $\reals^2$ that puts mass
$1/n$ on each point $(b_i^0,w_i)$, $i\in [n]$) converges weakly to the distribution
of $(\sigma_0\hat{Z},W)$.
Using Lemma \ref{lem:AlmostSmooth}, we get $\lim_{n\to\infty}\< \varphi'(b^0,w)\>\asequal\E\{\varphi'(\sigma_0\hat{Z},W)\}$.

\item[$(e)$] Similar to $(b)$, conditioning on $q^0$, the term $\sum_{i=1}^n([Aq^0]_i)^{2\ell}/n$ is sum of independent random variables
(namely, gaussians to the power $2\ell$) and $\E\{|[Aq^0]_i|^p\}
=\< q^0,q^0\>^{p/2}\E\{Z^{p}\}<c$ for a constant $c$. Therefore,
by Theorem \ref{thm:SLLN}, we get
\[
\lim_{n\to\infty}\f{1}{n}\sum_{i=1}^n \left[([Aq^0]_i)^{2\ell}-\E_A\{([Aq^0]_i)^{2\ell}\}\right]\asequal 0.
\]
But, $\f{1}{n}\sum_{i=1}^n \E_A\{([Aq^0]_i)^{2\ell}\}=\< q^0,q^0\>^{\ell}\E_Z(Z^{2\ell})<\infty$.

\item[$(g)$] This case is trivial since there is not $m^s$ with $s\leq t-1=-1$.


\end{itemize}

\endproof

%
%
\subsubsection{Step 2: $\pty{H}_1$}

Note that $h^1=A^*m^0-\xi_0 q^0$.
\begin{itemize}

\item[$(a)$] $\sigal{S}_{1,0}$ is generated by  $x_0$, $q^0$, $w$, $b^0$ and $m^0$. Also $m^0=m^0_\perp$ since $M_0$ is an empty matrix.
Applying Lemma \ref{lem:Conditional A} we have
\[A|_{\sigal{S}_{1,0}}\deq b^0\|q^0\|^{-2}(q^0)^*+\tA P_{q^0}^\perp.\]
Hence,
\[
h^1|_{\sigal{S}_{1,0}}\deq P_{q^0}^\perp
\tA^*m^0 + \delta\f{\< b_0,m_0\>}{\< q^0,q^0\>}q^0 -\xi_0 q^0.
\]
But using $\pty{B}_0(d)$ for $\varphi=g_0$
\[
\lim_{n\to\infty}\< b^0,m^0\>=\lim_{n\to\infty}\< b^0,g_0(b^0,w)\>\asequal \lim_{n\to\infty} \< b^0,b^0\> \< g_0'(b^0,w)\> \asequal \lim_{n\to\infty}\xi_0\f{\< q^0,q^0\>}{\delta}.\]
Therefore,
\[
h^1|_{\sigal{S}_{1,0}}\deq P_{q^0}^\perp \tA^*m^0 + \order_1(1) q^0.
\]
Also $\pty{B}_0(b)$, applied to the function $\phi_b(x,w)=g_0(x,w)^2$ gives
\begin{eqnarray}
\lim_{n\to\infty}\< m^0,m^0\>\asequal \E[g_0(\sigma_0Z,W)^2]=\tau_0^2<\infty\,.
\label{eq:Moved}
\end{eqnarray}
Thus,
\[
P_{q^0}^\perp\tA^*m^0 = \tA^*m^0 - P_{q^0}\tA^* m^0 = \tA^*m^0 +\order_1(1) \tq^0\, ,
\]
where the last estimate follows from Lemma \ref{lem:prop-Gaussian-matrix}(c) and \eqref{eq:Moved}. Finally,
\begin{align}
h^1|_{\sigal{S}_{1,0}}&\deq \tA^*m^0 + \order_1(1)q^0\,.\label{eq:H1Rep}
\end{align}

\item[$(c)$] Using \eqref{eq:H1Rep}, \eqref{eq:Moved}, and Lemma \ref{lem:prop-Gaussian-matrix}, we get
\[
\lim_{N\to\infty}\< h^1,h^1\>|_{\sigal{S}_{1,0}}\deq\lim_{N\to\infty}\f{\|\tA^* m_0\|^2}{N}\asequal\lim_{N\to\infty}\< m^0,m^0\>\asequal\tau_0^2\,.
\]

\item[$(e)$] First note that, conditioning on $\sigal{S}_{1,0}$,
\[\f{1}{N}\sum_{i=1}^N (h_i^1)^{2\ell} = \f{1}{N}\sum_{i=1}^N ([\tA^*m^0]_i + \order_1(1)q_i^0)^{2\ell} \leq \f{2^{2\ell}}{2} \f{1}{N}\sum_{i=1}^N\left\{ ([\tA^*m^0]_i)^{2\ell} + \order_1(1)(q_i^0)^{2\ell}\right\}.\]
By assumption, $\lim_{N\to\infty}\f{1}{N}\sum_{i=1}^N(q_i^0)^{2\ell}<\infty$ and finiteness of $\f{1}{N}\sum_{i=1}^N ([\tA^*m^0]_i)^{2\ell}$ can be established similar to $\pty{B}_0(e)$ for the sum of functions of
independent gaussians $\sum_{i=1}^N([\tA^*m^0]_i)^{2\ell}/N$.

\item[$(f)$] Using \eqref{eq:H1Rep} and Lemma \ref{lem:prop-Gaussian-matrix}(a) we have almost surely
\[
\lim_{N\to\infty}\< h^1,q^0\>\deq \lim_{N\to\infty} \frac{Z\|m^0\|\|q^0\|}{N\sqrt{n}}\asequal \lim_{N\to\infty} \frac{Z}{\sqrt{N}}\sqrt{\<m^0,m^0\>\<q^0,q^0\>} \asequal 0\,.
\]

\item[$(b)$] This proof uses again Eq.~(\ref{eq:H1Rep}) and
is very similar to the proof of $\pty{B}_0(b)$.  First we need to control
 the error term $\order_1(1)\tilde{q}^0=\order_1(1)q^0$. In other words we need to show
\[
\lim_{N\to\infty}\f{1}{N}\sum_{i=1}^N \left[\phi_h\left( [\tA^*m^0]_i + \order_1(1)q_i^0,x_{0,i}\right)-\phi_h\left( [\tA^*m^0]_i,x_{0,i}\right)\right]\asequal 0.
\]
To simplify the notation let $a_{i}=( [\tA^*m^0]_i + \order_1(1)q_i^0,x_{0,i})$ and $c_{i}=([\tA^*m^0]_i,x_{0,i})$.  Now, using the pseudo-Lipschitz property of $\phi_h$:
\[
|\phi_h(a_i)-\phi_h(c_i)|\leq L\{1+\max(\|a_i\|^{k-1},\|c_i\|^{k-1})\}
\, |q_i^0|\order_1(1).
\]
Using Cauchy-Schwartz inequality,
\[
\f{1}{N}\sum_{i=1}^N|\phi_h(a_i)-\phi_h(c_i)|\leq L\max\left(\f{1}{N}\sum_{i=1}^N\|a_i\|^{2k-2},\f{1}{N}\sum_{i=1}^N\|c_i\|^{2k-2}\right)^{1/2}\< q^0,q^0\>^{1/2}\, \order_1(1)\,.
\]
Hence, we only need to show $\f{1}{N}\sum_{i=1}^N\|a_i\|^{2k-2}<\infty$ and $\f{1}{N}\sum_{i=1}^N\|c_i\|^{2k-2}<\infty$ as $N\to\infty$.
But
\[
\f{1}{N}\sum_{i=1}^N\|a_i\|^{2k-2}=O\left(\f{1}{N}\sum_{i=1}^N|h_i^1|^{2k-2}+\f{1}{N}\sum_{i=1}^N|x_{0,i}|^{2k-2}\right)\,,
\]
which is bounded using part $(e)$ and the original assumption on $x_0$. Similarly, using $\f{1}{N}\sum_{i=1}^N\|q_i^0\|^{2k-2}<\infty$, we obtain $\f{1}{N}\sum_{i=1}^N\|c_i\|^{2k-2}<\infty$.

Thus, from here we consider $\th^1|_{\sigal{S}_{1,0}}\equiv \tA^*m^0$
whose components are distributed as $\|m^0\|Z/\sqrt{n}$ for
$Z$ a standard normal random variable,
and will follow the steps taken in $\pty{B}_0(b)$. Conditionally on $\sigal{S}_{1,0}$, we can apply Theorem \ref{thm:SLLN} to get
\[
\lim_{N\to\infty}\f{1}{N}\sum_{i=1}^N \left[\phi_h(\th_i^1,x_{0,i})-\E_{\tA}\{\phi_h(\th_i^1,x_{0,i})\}\right]\asequal 0\,.
\]
{ Note that a similar inequality to \eqref{eq:moment-cond} can be obtained here as well and then the condition of Theorem \ref{thm:SLLN} follows using Lemma \ref{lem:alg-lemma} and the assumed bound on empirical moments of $x_0$.}
Then, using Lemma \ref{lem:SLLN4us} for $v=x_0$ and $\psi(x_{0,i})=\E_{\tA}\{\phi_h(\th_i^1,x_{0,i})\}$,  we obtain
\[
\lim_{N\to\infty}\f{1}{N}\sum_{i=1}^n\E_{\tA}\left\{\phi_h(h_i^1,x_{0,i})\right\}=\lim_{N\to\infty}\E_Z\left\{\phi_h(\f{\|m^0\|}{\sqrt{n}}Z,X_0)\right\}\asequal\E\left\{\phi_h(\tau_0 Z,X_0)\right\}.
\]
The last equality used $\pty{B}_0(c)$.

\item[$(d)$] Using $\pty{H}_1(b)$ for $\phi_h(x,x_{0,i})=x\varphi(x,x_{0,i})$ we obtain $\lim_{N\to\infty}\< h^1,\varphi(h^1,x_{0})\>$ $\asequal$
$\E\{\tau_0 Z\varphi(\tau_0 Z,X_0)\}$,
which is equal to $\tau_0^2\,\E\{\varphi'(\tau_0{Z},X_0)\}$ using Lemma \ref{lem:stein}. On the other hand, in proof of $(b)$ we showed that $\lim_{N\to\infty}\< h^1,h^1\>\asequal\tau_0^2$.

By part $(b)$ the empirical distribution of
$(h^1,x_0)$ (i.e. the probability distribution on $\reals^2$ that puts mass
$1/N$ on each point $(h_i^1,x_{0,i})$, $i\in [N]$)
converges weakly to $(\tau_0Z,X_0)$.
By applying Lemma \ref{lem:AlmostSmooth} to the Lipschitz
function $\varphi$, we get $\lim_{N\to\infty}\< \varphi'(h^1,x_0)\>\asequal\E\{\varphi'(\tau_0{Z},X_0)\}$.

\item[$(g)$] Since $t=0$, and $q^0=q^0_\perp$ then the result follows from \eqref{eq:limit-<||q0||^2>} and that $\sigma_0^2>0$.

\end{itemize}
\endproof

%
%
\subsubsection{Step 3: $\pty{B}_{t}$}
This part is analogous to step 1 albeit more complex.

First we prove $(g)$.
\begin{itemize}

\item[$(g)$] Note that using induction hypothesis $\pty{B}_{t-1}(b)$  for
$\phi_b(b^r_i,b^s_i,w_i)=g_r(b^r_i,w_i)g_s(b^s_i,w_i)$,
$0\leq r,s\leq t-1$ we have almost surely
\begin{align}
\label{eq:<mr,ms-as-gaussian>}
\lim_{n\to\infty}\<m^{r},m^{s}\>=\E\left\{g_r(\sigma_r\hat{Z}_r,W)\,g_s(\sigma_s\hat{Z}_s,W)\right\}\,.
\end{align}
On the other hand
\begin{align}
\label{eq:norm-m_perp-1}
\<m^{t-1}_\perp,m^{t-1}_\perp\>=\<m^{t-1},m^{t-1}\>-\frac{(m^{t-1})^*M_{t-1}}{n}\left[\frac{M_{t-1}^*M_{t-1}}{n}\right]^{-1}\frac{M_{t-1}^*m^{t-1}}{n}\,.
\end{align}
But using induction hypothesis, we have $\lim_{n\to\infty}\<m^r_\perp,m^r_\perp\>>\lbm_r>0$ for all $r<t-1$. So using Lemma \ref{lem:InvertibleCovariance}, for large enough $n$ the smallest eigenvalue of matrix $M_{t-1}^*M_{t-1}/n$ is larger than a positive constant $c'$ that is independent of $n$. Hence, by Lemma \ref{lem:inverse-is-continuous} its inverse converges to an invertible limit. Thus, Eqs. \eqref{eq:<mr,ms-as-gaussian>} and \eqref{eq:norm-m_perp-1} lead to
\begin{align}
\label{eq:norm-m_perp-2}
\lim_{n\to\infty}\<m^{t-1}_\perp,m^{t-1}_\perp\>\asequal \E\left\{[g_{t-1}(\sigma_{t-1}\hat{Z}_{t-1},W)]^2\right\} - u^*C^{-1}u
\end{align}
with $u\in \reals^{(t-1)}$ and $C\in\reals^{(t-1)\times(t-1)}$ such that for $1\leq r,s\leq t-1$:
\[u_r=\E\left\{g_{r-1}(\sigma_{r-1}\hat{Z}_{r-1},W)\,g_{t-1}(\sigma_{t-1}\hat{Z}_{t-1},W)\right\}~,~~~
C_{rs}=\E\left\{g_{r-1}(\sigma_{r-1}\hat{Z}_{r-1},W)\,g_{s-1}(\sigma_{s-1}\hat{Z}_{s-1},W)\right\}\,.
 \]
 Now the result follows from Lemma \ref{lem:ConditionalVariance} provided that we show for gaussian random variables $\sigma_0\hat{Z}_0,\ldots,\sigma_{t-1}\hat{Z}_{t-1}$, all conditional variances $\var[\sigma_{r}\hat{Z}_{r}|\,\sigma_0\hat{Z}_0,\ldots,\sigma_{r-1}\hat{Z}_{r-1}]$ are strictly positive for $r=0,\ldots,t-1$.
To prove the latter first using the induction hypothesis $\pty{B}_{t-1}(b)$, we have for all $0\leq r\leq t-1$
\[
\lim_{n\to\infty} \<b^r_{\perp},b^r_\perp\>=\lim_{n\to\infty}\left( \<b^r,b^r\>-\frac{(b^{r})^*B_{r}}{n}\left[\frac{B_{r}^*B_{r}}{n}\right]^{-1}\frac{B_{r}^*b^{r}}{n}\ \right)\asequal \var[\sigma_{r}\hat{Z}_{r}\,|\,\sigma_0\hat{Z}_0,\ldots,\sigma_{r-1}\hat{Z}_{r-1}]\,.
\]
Similar as above we used the fact that for large enough $n$ the matrix $B_{r}^*B_{r}/n$ has a smallest eigenvalue greater than a positive constant to obtain the limit of its inverse. On the other hand using induction hypothesis $\pty{B}_r(c)$ we have almost surely
\begin{align}
\lim_{n\to\infty} \<b^r_{\perp},b^r_\perp\>&
=\lim_{n\to\infty}\left( \<b^r,b^r\>-\frac{(b^{r})^*B_{r}}{n}\left[\frac{B_{r}^*B_{r}}{n}\right]^{-1}\frac{B_{r}^*b^{r}}{n}\ \right)\nonumber\\
&=\frac{1}{\delta}\lim_{N\to\infty}\left( \<q^r,q^r\>-\frac{(q^{r})^*Q_{r}}{N}\left[\frac{Q_{r}^*Q_{r}}{N}\right]^{-1}\frac{Q_{r}^*q^{r}}{N}\ \right)\nonumber\\
&=\frac{1}{\delta}\lim_{N\to\infty} \<q^r_{\perp},q^r_\perp\>\,.\label{eq:b-perp-rep}
\end{align}
And now by induction hypothesis $\pty{H}_r(g)$ we have $\lim_{N\to\infty}\<q^r_{\perp},q^r_\perp\> > \lbq_r$. Hence the result follows.

\begin{corollary}\label{coro:beta-alpha-finite}  The vectors
\begin{align*}
\vec{\alpha}&=(\alpha_0,\dots,\alpha_{t-1})=\left[\frac{M_t^*M_t}{n}\right]^{-1}\frac{M_t^*\,m^t}{n}\,,\\
\vec{\beta}&=(\beta_0,\dots,\beta_{t-1})=\left[\frac{Q_t^*Q_t}{N}\right]^{-1}\frac{Q_t^*\,q^t}{N}
\end{align*}
have finite limits as $n$ and $N$ converge to $\infty$.
\end{corollary}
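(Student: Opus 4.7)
The plan is to write each of $\vec{\alpha}$ and $\vec{\beta}$ as the product of a convergent inverse Gram matrix and a convergent inner-product vector, so that continuity of matrix inversion forces a finite limit. I will describe the argument for $\vec{\alpha}$ in detail; the argument for $\vec{\beta}$ is strictly symmetric, swapping $M_t,m^t$ for $Q_t,q^t$ and the $\pty{B}$-induction hypotheses for their $\pty{H}$-counterparts.

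First I would unpack the formula: the $(r,s)$-entry of $M_t^*M_t/n$ is $\<m^r,m^s\>$ for $0\le r,s\le t-1$, and the $r$-th entry of $M_t^*m^t/n$ is $\<m^r,m^t\>$ for $0\le r\le t-1$. Applying part $(b)$ of Lemma \ref{lem:elephant} (in the relevant indices — $\pty{B}_{t-1}(b)$ for pairs $r,s\le t-1$, and $\pty{B}_t(b)$, to be proved just after $\pty{B}_t(g)$, for pairs involving index $t$) to the test function $\phi_b(y_0,\ldots,y_t,w)=g_r(y_r,w)g_s(y_s,w)$, which is pseudo-Lipschitz of order $k$ since the $g_i$ are Lipschitz, yields
\begin{equation*}
\lim_{n\to\infty}\<m^r,m^s\>\asequal \E\{g_r(\sigma_r\hat{Z}_r,W)\,g_s(\sigma_s\hat{Z}_s,W)\}
\end{equation*}
almost surely for each relevant pair. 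Therefore $M_t^*M_t/n$ and $M_t^*m^t/n$ both possess deterministic, finite entry-wise almost-sure limits.

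It remains to control the inverse. By $\pty{B}_s(g)$ for $s\le t-1$, which is exactly what was just established in the current step, we have $\lim_{n\to\infty}\<m^s_\perp,m^s_\perp\>>\lbm_s>0$. This is precisely the hypothesis of Lemma \ref{lem:InvertibleCovariance} applied to the columns of $M_t$, and therefore yields a strictly positive lower bound on $\lambda_{\min}(M_t^*M_t/n)$ uniform in $n$ for all large $n$. Lemma \ref{lem:inverse-is-continuous} then lets me pass the limit through the inverse: $[M_t^*M_t/n]^{-1}$ converges almost surely to the (finite) inverse of the limiting Gram matrix. Multiplying the two convergent factors yields an almost-sure finite limit for $\vec{\alpha}$.

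For $\vec{\beta}$ the procedure is identical. Entry-wise convergence of $Q_t^*Q_t/N$ and $Q_t^*q^t/N$ comes from $\pty{H}_t(b)$ applied to products of $f_r$'s; invertibility of the limit follows from $\pty{H}_r(g)$ for $r\le t-1$ and Lemma \ref{lem:InvertibleCovariance}; and Lemma \ref{lem:inverse-is-continuous} completes the argument. Nothing substantial obstructs the proof — the corollary is essentially bookkeeping on statements $(b)$ and $(g)$ already in hand. The only mild subtlety is that entries of $Q_t^*Q_t/N$ and $Q_t^*q^t/N$ which involve the initial condition $q^0$ are not of the form $\phi_h(h^1,\ldots,h^t,x_0)$, and so must be dispatched separately via the assumed convergence of $\<q^0,q^0\>$ and a Cauchy--Schwarz / SLLN argument using the bounded empirical moments of $q^0$ combined with the already-proved boundedness of $\<q^r,q^r\>$.
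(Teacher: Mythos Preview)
Your proposal is correct and follows the same approach as the paper: invoke $\pty{B}_t(g)$/$\pty{H}_t(g)$ together with Lemma~\ref{lem:InvertibleCovariance} and Lemma~\ref{lem:inverse-is-continuous} to control the inverse Gram matrices, and pair this with entry-wise convergence of the inner-product vectors coming from part~$(b)$. The paper's proof is terser (it only spells out the Gram-matrix half), but your expansion of the missing half is exactly what is intended.

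One clarification worth making explicit: your forward reference to $\pty{B}_t(b)$ for the entries $\<m^r,m^t\>$ is harmless only because there is no circularity---$\pty{B}_t(b)$ (and indeed all of Step~3) uses only $\vec{\beta}$, never $\vec{\alpha}$, so the convergence of $\vec{\alpha}$ can legitimately be deferred until after $\pty{B}_t(b)$ is in place and before Step~4 where $\vec{\alpha}$ is first needed. Your phrasing ``to be proved just after $\pty{B}_t(g)$'' slightly misstates the order (in the paper $(b)$ comes after $(a)$, $(c)$, $(e)$), but the logic is sound once you note the dependency graph.
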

\begin{proof}
We can apply Lemma \ref{lem:InvertibleCovariance} to obtain that for large enough $n$ the smallest eigenvalue of $M_t^*M_t/n$ is larger than a positive constant $c'$. Hence by Lemma \ref{lem:inverse-is-continuous} its inverse has a finite limit. Similarly, we can apply induction hypothesis $\pty{H}_{t}(g)$ and Lemmas \ref{lem:InvertibleCovariance} and \ref{lem:inverse-is-continuous} to the matrix $Q_t^*Q_t/N$.
\end{proof}

\item[$(a)$] Recall definition of $Y_t$ and $X_t$ from Section \ref{subsec:definitions}.
\begin{align}
X_t=H_t+Q_t\Xi_t\,,~~~~Y_t=B_t+ [0| M_{t-1}]\Lambda_t\,,\label{eq:YX}
\end{align}
where $\Xi_t=\diag(\xi_0,\ldots,\xi_{t-1})$, $H_t=[h^1|\cdots|h^t]$, $B_t=[b^0|\cdots|b^{t-1}]$, and $\Lambda_t=\diag(\lambda_0,\ldots,\lambda_{t-1})$.
\begin{lemma}\label{lem:kt-ht Conditioned}
The following holds

(a) $h^{t+1}|_{\sigal{S}_{t+1,t}}\deq H_{t}(M_{t}^*M_{t})^{-1}M_{t}^*m^t_\pl+  P_{Q_{t+1}}^\perp {\tA}^*P_{M_{t}}^\perp m^t+Q_t\order_t(1)$.

(b) $b^t|_{\sigal{S}_{t,t}}\deq B_t (Q_t^*Q_t)^{-1}Q_t^*q^t_\pl+ P_{M_{t}}^\perp {\tA} P_{Q_{t}}^\perp q^t+M_t\order_t(1)$.

\end{lemma}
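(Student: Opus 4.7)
I will prove part (b) in detail; part (a) follows by the completely symmetric argument with the substitutions $A \leftrightarrow A^*$, $(Q,\xi,f) \leftrightarrow (M,\lambda,g)$, and $(B,H) \leftrightarrow (H,B)$.

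The first step is mechanical: combining Lemma~\ref{lem:Conditional A} at $(t_1,t_2)=(t,t)$ with Lemma~\ref{lem:Aqt,Amt}, the identities $Y_t=B_t+[0|M_{t-1}]\Lambda_t$ and $X_t=H_t+Q_t\Xi_t$ from~\eqref{eq:YX}, and the orthogonality $Q_t^* q_\perp^t=0$, decomposes
\begin{equation*}
b^t\big|_{\sigal{S}_{t,t}}\deq B_t(Q_t^*Q_t)^{-1}Q_t^*q_\pl^t + P_{M_t}^\perp \tA P_{Q_t}^\perp q^t + \Delta_t,
\end{equation*}
where the first two pieces are exactly the main terms of the claim and the residual is
\begin{equation*}
\Delta_t \equiv [0|M_{t-1}]\Lambda_t \vb_t + M_t(M_t^*M_t)^{-1}H_t^* q_\perp^t - \lambda_t m^{t-1}.
\end{equation*}
Since all three summands of $\Delta_t$ lie in the column span of $M_t$, it suffices to show that their combined coefficient vector in the basis $m^0,\dots,m^{t-1}$ tends to zero almost surely.

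The key calculation is the a.s.\ limit of the coefficients $(M_t^*M_t)^{-1}H_t^*q_\perp^t$. Corollary~\ref{coro:beta-alpha-finite} together with Lemma~\ref{lem:inverse-is-continuous} gives $(M_t^*M_t/n)^{-1}\to\widetilde C^{-1}$ where $\widetilde C_{ij}=\lim\<m^i,m^j\>$. For $H_t^* q_\perp^t/N$ I expand $q_\perp^t=q^t-Q_t\vb_t$ and invoke the induction hypothesis: the Stein-type identity $\pty{H}_t(d)$ applied to each $q^j=f_j(h^j,x_0)$, $1\le j\le t$, yields $\<h^{i+1},q^j\>\to\delta\lambda_j\<h^{i+1},h^j\>$; property~$\pty{H}_t(f)$ kills the $j=0$ contribution; and $\pty{H}_t(b)$ rewrites $\<h^{i+1},h^j\>$ as $\widetilde C_{i,j-1}$. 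Absorbing the factor $N/n=1/\delta$ produced by $(M_t^*M_t)^{-1}=(1/n)(M_t^*M_t/n)^{-1}$, the product $\widetilde C^{-1}\widetilde C$ telescopes to the identity and leaves a.s.\ limiting coefficients equal to $\lambda_t$ on $m^{t-1}$ and $-\lambda_{i+1}\beta_{i+1}$ on $m^i$ for $i=0,\dots,t-2$. These exactly cancel the Onsager term $-\lambda_t m^{t-1}$ and the expansion $[0|M_{t-1}]\Lambda_t\vb_t=\sum_{i=0}^{t-2}\lambda_{i+1}\beta_{i+1}\,m^i$, so $\Delta_t=M_t\,\order_t(1)$.

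The core obstacle is exactly this cancellation at the level of limiting coefficients, which is the reason the Onsager corrections $\xi_t q^t$ and $\lambda_t m^{t-1}$ were placed into the recursion~\eqref{eq:mpMain} in the first place. One must verify that $\pty{H}_t(d)$ is invoked only with indices $0\le r,s\le t-1$---which it is, since $q^t$ enters through $f_t(h^t,x_0)$ at $s=t-1$---keeping the induction consistent. For part~(a) the identical scheme uses $\pty{B}_t(d)$, which is available at step~4 after $\pty{B}_t$ has been completed in step~3, and the coefficients attached to $Q_{t+1}$ collapse---again by the same Onsager cancellation---to a vector whose component along $q^t$ vanishes in the limit, yielding the sharper conclusion $Q_t\,\order_t(1)$ in place of $Q_{t+1}\,O(1)$.
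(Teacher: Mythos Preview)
Your proof is correct and follows essentially the same route as the paper: decompose via Lemmas~\ref{lem:Conditional A} and~\ref{lem:Aqt,Amt}, peel off the main terms using~\eqref{eq:YX}, and show the residual $\Delta_t$ has $M_t$-coefficients that vanish by expanding $\<h^r,q^s\>$ through the induction hypotheses and collapsing via $G^{-1}G=\identity$. One small correction: where you write ``$\pty{H}_t(b)$ rewrites $\<h^{i+1},h^j\>$ as $\widetilde C_{i,j-1}$'', you need $\pty{H}_t(c)$ instead---property $(b)$ only gives $\<h^{i+1},h^j\>\to\E[\tau_i Z_i\,\tau_{j-1}Z_{j-1}]$, which is not directly $\lim\<m^i,m^{j-1}\>$; it is $(c)$ that supplies this identity.
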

\begin{proof}
In light of Lemmas \ref{lem:Conditional A} and \ref{lem:Aqt,Amt} we have
\begin{align*}
h^{t+1}|_{\sigal{S}_{t+1,t}}&\deq X_{t}(M_{t}^*M_{t})^{-1}M_{t}^*m^t_\pl
+ Q_{t+1}(Q_{t+1}^*Q_{t+1})^{-1}Y_{t+1}^*m_\perp^t + P_{Q_{t+1}}^\perp {\tA}^*P_{M_{t}}^\perp m^t-\xi_tq^t,\\
b^t|_{\sigal{S}_{t,t}}&\deq Y_t(Q_t^*Q_t)^{-1}Q_t^*q^t_\pl + M_t(M_t^*M_t)^{-1}X_t^*q_\perp^t + P_{M_{t}}^\perp {\tA} P_{Q_{t}}^\perp q^t-\lambda_tm^{t-1}.
\end{align*}
Now using \eqref{eq:YX}, we only need to show
\begin{align*}
Q_{t}\Xi_t(M_{t}^*M_{t})^{-1}M_{t}^*m^t_\pl+Q_{t+1}(Q_{t+1}^*Q_{t+1})^{-1}Y_{t+1}^*m_\perp^t-\xi_tq^t&=Q_t\order_t(1),\\
[0| M_{t-1}]\Lambda_t(Q_t^*Q_t)^{-1}Q_t^*q^t_\pl+ M_t(M_t^*M_t)^{-1}X_t^*q_\perp^t -\lambda_tm^{t-1}&=M_t\order_t(1).
\end{align*}
Recall that $m^t_\pl=M_t\va$ and $q^t_\pl=Q_t\vb$. On the other hand $Y_{t+1}^*m_\perp^t=B_{t+1}^*m_\perp^t$ because $M_{t}^*
m_\perp^t=0$. Similarly, $X_t^*q_\perp^t=H_t^*q_\perp^t$. Hence we need to show
\begin{align}
Q_{t}\Xi_t\va+Q_{t+1}(Q_{t+1}^*Q_{t+1})^{-1}B_{t+1}^*m_\perp^t-\xi_tq^t&=Q_t\order_t(1)\label{eq:error(h)}\\
~[0|M_{t-1}] \Lambda_t\vb + M_t(M_t^*M_t)^{-1}H_t^*q_\perp^t-\lambda_tm^{t-1}&=M_t\order_t(1).\label{eq:error(z)}
\end{align}
Here is our strategy to prove \eqref{eq:error(z)} (proof of \eqref{eq:error(h)} is similar). The left hand side is a linear combination of vectors $m^0,\ldots,m^{t-1}$. For any $\ell=1,\ldots,t$ we will prove that the coefficient of
$m^{\ell-1}\in\reals^n$ converges to $0$.  This coefficient in the left hand side is equal to
\[\left[(M_t^*M_t)^{-1}H_t^*q_\perp^t\right]_\ell-\lambda_\ell(-\beta_\ell)^{\indicator_{\ell\neq t}}=
\sum_{r=1}^t \left[(\f{M_t^*M_t}{n})^{-1}\right]_{\ell,r} \f{\< h^r,q^t-\sum_{s=0}^{t-1}\beta_s q^s\>}{\delta}-\lambda_\ell(-\beta_\ell)^{\indicator_{\ell\neq t}}.
\]
To simplify the notation denote the matrix $M_t^*M_t/n$ by $G$.
Therefore,
\[
 \lim_{N\to\infty}\textrm{Coefficient of } m^{\ell-1}=
\lim_{N\to\infty}\left\{\sum_{r=1}^t (G^{-1})_{\ell,r} \< h^r,q^t-\sum_{s=0}^{t-1}\beta_s q^s\>\f{1}{\delta}-\lambda_\ell(-\beta_\ell)^{\indicator_{\ell\neq t}}\right\}.
\]
But using the induction hypothesis $\pty{H}_t(d)$ for $\varphi=f_1,\dots, f_t$, and $\pty{H}_t(f)$,
the term $\< h^r,q^t-\sum_{s=0}^{t-1}\beta_s q^s\>/\delta$ is almost surely equal to the
limit of
$\< h^r,h^t\>\lambda_t   - \sum_{s=0}^{t-1}\beta_s \< h^r,h^s\>\lambda_s$. This can be modified, using the induction hypothesis $\pty{H}_t(c)$, to
$\< m^{r-1},m^{t-1}\>\lambda_t   - \sum_{s=0}^{t-1}\beta_s \< m^{r-1},m^{s-1}\>\lambda_s$ almost surely, which can be written as $G_{r,t}\lambda_t   - \sum_{s=0}^{t-1}\beta_s G_{r,s}\lambda_s$.
Hence,
\begin{align*}
\lim_{N\to\infty}\textrm{Coefficient of } m^{\ell-1} &\asequal  \lim_{N\to\infty}\left\{\sum_{r=1}^t (G^{-1})_{\ell,r} [G_{r,t}\lambda_t   - \sum_{s=0}^{t-1}\beta_s G_{r,s}\lambda_s]-\lambda_\ell(-\beta_\ell)^{\indicator_{\ell\neq t}}\right\}\\
&\asequal  \lim_{N\to\infty}\left\{\lambda_t\indicator_{t=\ell}   - \sum_{s=0}^{t-1}\beta_s\lambda_s\indicator_{\ell=s} -\lambda_\ell(-\beta_\ell)^{\indicator_{\ell\neq t}}\right\}\\
&\asequal 0
\end{align*}
Notice that the above series of equalities hold because $G$ has, almost surely,
a non-singular limit as $N\to\infty$ as shown in point $(g)$ above.

Equation  \eqref{eq:error(h)} is proved analogously,
using $\xi_t = \< g'(b^t,w)\>$.
\end{proof}
The proof of Eq.~\eqref{eq:main-lem-z-c} follows immediately since the last lemma yields
\begin{align*}
b^t|_{\sigal{S}_{t,t}}&\deq \sum_{i=0}^{t-1}{\beta_i}b^{i} + {\tA}q_\perp^{t}- M_t(M_t^*M_t)^{-1}M_t^* \tA q_\perp^t+M_t\order_t(1)\,.
\end{align*}
Note that, using Lemma \ref{lem:prop-Gaussian-matrix}(c), as $n,N\to\infty$,
\begin{align*}
M_t(M_t^*M_t)^{-1}M_t^* \tA q_\perp^t&\deq\tM_t\order_t(1)\,,
\end{align*}
which finishes the proof since $\tM_t\order_t(1)+M_t\order_t(1)=\tM_t\order_t(1)$.
\item[$(c)$] For $r,s<t$ we can use the induction hypothesis. For $s=t, r<t$,
we can apply Lemma \ref{lem:kt-ht Conditioned}
to $b^t$ (proved above), thus obtaining
\begin{align}
\< b^t,b^r\>|_{\sigal{S}_{t,t}}&\deq
\sum_{i=0}^{t-1}{\beta_i}\< b^{i},b^r\> + \< P_{M_t}^\perp\tA q_\perp^{t},b^r\>+\sum_{i=0}^{t-1}o(1)\< m^i,b^r\>\,,\no
\end{align}
Note that, by induction hypothesis $\pty{B}_{t-1}(d)$ applied
to $\varphi=g_{t-1}$, and using the bound $\pty{B}_{t-1}(e)$ to control $\<b^i,b^r\>$, we deduce that each
term $\< m^i,b^r\>$ has a finite limit. Thus,
\[
\lim_{n\to\infty}\sum_{i=0}^{t-1}o(1)\< m^i,b^r\>\asequal 0.
\]
We can use Lemma \ref{lem:prop-Gaussian-matrix} for $\< P_{M_t}^\perp{\tA}q_\perp^{t},b^r\>=\< {\tA}q_\perp^{t},P_{M_t}^\perp b^r\>$
(recalling that $\tA$ is independent of $q_\perp^{t}, P_{M_t}^\perp b^r$) to obtain
\[
\< {\tA}q_\perp^{t},P_{M_t}^\perp b^r\>\deq \f{\|q_\perp^{t}\|\|P_{M_t}^\perp b^r\|}{N}\f{Z}{\sqrt{n}}\stackrel{{\almostsurely}}{\to}0
\]
where the last estimate uses the induction hypothesis $\pty{B}_{t-1}(c)$ and $\pty{H}_t(c)$ which imply, almost surely, for some constant $c$,
$\< P_{M_t}^\perp b^r,P_{M_t}^\perp b^r\>\le \< b^r,b^r\><c$ and $\< q_\perp^{t},q_\perp^{t}\>\le \< q^t,q^t\><c$ for all $N$ large enough.  Finally, using the induction hypothesis $\pty{B}_{t-1}(c)$ for each term of the form $\< b^{i},b^r\>$
(noting that $i,r\le t-1$) and Corollary \ref{coro:beta-alpha-finite} we have
\begin{align}
\lim_{n\to\infty}\< b^t,b^r\>&\asequal \lim_{n\to\infty}\f{1}{\delta}\sum_{i=0}^{t-1}{\beta_i}\< q^{i},q^r\>\no\\
&\asequal\lim_{n\to\infty}\f{1}{\delta}\< q_\pl^t,q^r\>\asequal\lim_{n\to\infty}\f{1}{\delta}\< q^t,q^r\>\no\, .
\end{align}
The last line uses the definition of $\beta_i$ and $q_\perp^t\perp q^r$.

For the case of $r=s=t$, similarly, we have
\begin{align}
\< b^t,b^t\>|_{\sigal{S}_{t,t}}&\deq \sum_{i,j=0}^{t-1}\beta_i\beta_j\< b^{i},b^j\> + \<  P_{M_t}^\perp \tA q_\perp^{t}, P_{M_t}^\perp \tA q_\perp^{t}\>+o(1).\no
\end{align}
The contribution of other terms is $o(1)$ because:
\begin{itemize}
\item $\<P_{M_t}^\perp \tA q_\perp^{t},M_t\order_t(1)\>=\<\tA q_\perp^{t},P_{M_t}^\perp M_t\order_t(1)\>=0$.
\item $\<\sum_{i=0}^{t-1}\beta_i b^i,M_t\order_t(1)\>=o(1)$, using Corollary \ref{coro:beta-alpha-finite} and induction hypothesis  $\pty{B}_{t-1}(d)$  for $\varphi = g_j$.

\item $\<\sum_{i=0}^{t-1}\beta_i b^i,P_{M_t}^\perp \tA q_\perp^{t}\>=o(1)$ follows from Lemma \ref{lem:prop-Gaussian-matrix} and Corollary \ref{coro:beta-alpha-finite}.
\end{itemize}
The arguments at the last two points are completely analogous to the one
carried out in the case $s=t$, $r<t$ above.

Now, using Lemma \ref{lem:prop-Gaussian-matrix},
\begin{align*}
\lim_{n\to\infty}\<  P_{M_t}^\perp \tA q_\perp^{t}, P_{M_t}^\perp \tA q_\perp^{t}\>&=\lim_{n\to\infty}\left[\<  \tA q_\perp^{t}, \tA q_\perp^{t}\>-\<  P_{M_t} \tA q_\perp^{t}, P_{M_t} \tA q_\perp^{t}\>\right]\\
&\asequal\lim_{n\to\infty}\left[\f{\< q_\perp^{t}, q_\perp^{t}\>}{\delta}-o(1)\right]\,.
\end{align*}
Hence, from the induction hypothesis  $\pty{B}_{t-1}(c)$,
\begin{align}
\lim_{n\to\infty}\< b^t,b^t\>|_{\sigal{S}_{t,t}}&\asequal \lim_{n\to\infty}\sum_{i,j=0}^{t-1}\beta_i\beta_j\f{\< q^{i},q^j\>}{\delta}
+ \lim_{n\to\infty}\f{\< q_\perp^{t}, q_\perp^{t}\>}{\delta}\no\\
&\asequal \lim_{n\to\infty}\f{\< q_\pl^{t},q_\pl^t\>}{\delta} + \lim_{n\to\infty}\f{\< q_\perp^{t},q_\perp^{t}\>}{\delta}\no\\
&\asequal \lim_{n\to\infty}\f{\< q^{t},q^t\>}{\delta}\no.
\end{align}

\item[$(e)$]
Conditioning on $\sigal{S}_{t,t}$ and using Lemma \ref{lem:kt-ht Conditioned}
(proved at point $(a)$ above), almost surely,
\begin{align*}
\sum_{i=1}^n\f{1}{n}(b_i^t)^{2\ell}& \le \f{C}{n}\sum_{i=1}^n (\sum_{r=0}^{t-1}\beta_r b_i^r)^{2\ell}+\f{C}{n}\sum_{i=1}^n([P_{M_t}^\perp\tA q_\perp^t]_i)^{2\ell}+o(1)\f{C}{n}\sum_{r=0}^{t-1}\sum_{i=1}^n([m^r]_i)^{2\ell}\, ,
\end{align*}
for some constant $C=C(\ell,t)<\infty$. We will bound each of the above
summands.

\begin{itemize}
\item The term $n^{-1}\sum_{i=1}^n (\sum_{r=0}^{t-1}\beta_r b_i^r)^{2\ell}$ is finite since we can write
\[
\f{1}{n}\sum_{i=1}^n (\sum_{r=0}^{t-1}\beta_r b_i^r)^{2\ell} = O\left(\sum_{r=0}^{t-1} \beta_r^{2\ell} \f{1}{n}\sum_{i=1}^n (b_i^r)^{2\ell}\right)\,,
\]
use Corollary \ref{coro:beta-alpha-finite} and induction hypothesis $\pty{B}_{t-1}(e)$ for each of $n^{-1}\sum_{i=1}^n (b_i^r)^{2\ell}$.

\item For the term $n^{-1}\sum_{i=1}^n([m^r]_i)^{2\ell}$ we use
\[
 (m_i^r)^2 = g_r(b_i^r,w_{i,0})^2= O\Big((b_i^r)^2+w_{i,0}^2+g(0,0)^2)\Big)\,,
\]
that follows from the Lipschitz assumption on $g_r$. Thus,
\begin{align}\label{eq:m2ell-rep}
\frac{1}{n}\sum_{i=1}^n(m^r_i)^{2\ell}=O\left( \f{1}{n}\sum_{i=1}^n(b_i^r)^{2\ell}+\f{1}{n}\sum_{i=1}^nw_{i,0}^{2\ell}+g(0,0)^{2\ell}\right)\,,
\end{align}
which has a finite limit almost surely, using the induction hypothesis
$\pty{B}_{t-1}(e)$ and the assumption on $w$.

\item The term $n^{-1}\sum_{i=1}^n([P_{M_t}^\perp\tA q_\perp^t]_i)^{2\ell}$ can be
written as
\[
\frac{1}{n}\sum_{i=1}^n([P_{M_t}^\perp\tA q_\perp^t]_i)^{2\ell}=O\Big(\frac{1}{n}\sum_{i=1}^n([\tA q_\perp^t]_i)^{2\ell}\Big)+O\Big(\frac{1}{n}\sum_{i=1}^n([P_{M_t}\tA q_\perp^t]_i)^{2\ell}\Big)\,.
\]
Now, $n^{-1}\sum_{i=1}^n([\tA q_\perp^t]_i)^{2\ell}$ has a finite limit using the same proof as in $\pty{B}_0(b)$ and the fact that $\lim_{n\to\infty}\< q^t_\perp,q^t_\perp\>\le \lim_{n\to\infty}\< q^t,q^t\><\infty$ almost surely.

Finally, for $n^{-1}\sum_{i=1}^n([P_{M_t}\tA q_\perp^t]_i)^{2\ell}$ using Lemma \ref{lem:prop-Gaussian-matrix} and Corollary \ref{coro:beta-alpha-finite}
we can write
\begin{align*}
P_{M_t}\tA q_\perp^t&\deq M_t\left[\frac{M_t^*M_t}{n}\right]^{-1}
\Big[\frac{Z_0\|m^0\|\|q_\perp^t\|}{n\sqrt{n}}
\Big|\cdots\Big| \frac{Z_{t-1}\|m^{t-1}\| \|q^t_\perp\|}{n\sqrt{n}}\Big]^*\\
&= \frac{1}{\sqrt{n}}\sum_{r=0}^{t-1}c_rm^rZ_r\,,
\end{align*}
where $Z_0,\ldots,Z_{t-1}$ are iid with distribution $\normal(0,1)$.
and $c_0,\dots ,c_r$ are allmost surely bounded for all $N$ large enough.
Therefore, almost surely,
\begin{align*}
\frac{1}{n}\sum_{i=1}^n([P_{M_t}\tA q_\perp^t]_i)^{2\ell}&\deq
\frac{1}{n}\sum_{i=1}^{n}
\Big(\frac{1}{\sqrt{n}}\sum_{r=0}^{t-1}c_rm_i^rZ_r\Big)^{2\ell}\\
& \le C\sum_{r=0}^{t-1}
\frac{1}{n}\sum_{i=1}^{n}\Big(\frac{1}{\sqrt{n}}m_i^rZ_r\Big)^{2\ell}\\
& \le C'\sum_{r=0}^{t-1}
\frac{1}{n}\sum_{i=1}^{n}(m_i^r)^{2\ell}\,.
\end{align*}
Now each term is finite using the
same argument as in Eq. \eqref{eq:m2ell-rep}.
\end{itemize}

\item[$(b)$]
Using part $(a)$ we can write
\[
\phi_b(b_i^0,\ldots,b_i^t,w_i)|_{\sigal{S}_{t,t}}\deq\phi_b\left(b_i^0,\ldots,b_i^{t-1},\left[\sum_{r=0}^{t-1}{\beta_r}b^{r} + {\tA}q_\perp^{t}+\tM_t\order_t(1)\right]_i,w_i\right).
\]
Similar to the proof of $\pty{H}_0(b)$ we can drop the error term $\tM_t\order_t(1)$. Indeed, defining
\begin{eqnarray*}
a_{i}&=&\left(b_i^0,\ldots,b_i^{t-1},\left[\sum_{r=0}^{t-1}{\beta_r}b^{r} + {\tA}q_\perp^{t}+\tM_t\order_t(1)\right]_i,w_i\right)\,,\\
c_{i}&=&\left(b_i^0,\ldots,b_i^{t-1},\left[\sum_{r=0}^{t-1}{\beta_r}b^{r} + {\tA}q_\perp^{t}\right]_i,w_i\right)\,,
\end{eqnarray*}
by the pseudo-Lipschitz assumption
\[
|\phi_b(a_{i})-\phi_b(c_{i})|\leq L\Big\{1+\max\big(\|a_{i}\|^{k-1},\|c_{i}\|^{k-1}\big)\Big\}\Big|\sum_{r=0}^{t-1}\tm_i^r\Big|\, \order_1(1).
\]
Therefore, using Cauchy-Schwartz inequality twice, we have
\begin{align}
\f{1}{n}\Big|\sum_{i=1}^n\phi_b(a_{i})-\sum_{i=1}^n\phi_b(c_{i})\Big|&\leq L\big[\max(\sum_{i=1}^n\f{\|a_{i}\|^{2k-2}}{n},\f{\sum_{i=1}^n\|c_{i}\|^{2k-2}}{n})\big]^{\f{1}{2}}\big[\sum_{r=0}^{t-1}t^{\f{1}{2}}\< \tm^r,\tm^r\>\big]^{\f{1}{2}} \order_1(1)\,.
\label{eq:CS}
\end{align}
Also note that
\[
\f{1}{n}\sum_{i=1}^n\|a_{i}\|^{2\ell}\leq (t+1)^{\ell}
\big\{\sum_{r=0}^{t}\f{1}{n}\sum_{i=1}^n (b_i^r)^{2\ell}+\f{1}{n}\sum_{i=1}^n (w_i)^{2\ell}\big\}\,,
\]
which is finite almost surely using the induction hypothesis  $\pty{B}_t(e)$
proved above and the assumption on $w$. The term $n^{-1}\sum_{i=1}^n\|c_{i}\|^{2\ell}$ is bounded almost surely since
\begin{align*}
\f{1}{n}\sum_{i=1}^n\|c_{i}\|^{2\ell}&\leq
\f{C}{n}\sum_{i=1}^n\|a_{i}\|^{2\ell}+C\sum_{r=0}^{t-1}
\f{1}{n}\sum_{i=1}^n(\tm^r)^{2\ell}\order_1(1)\\
& \leq
\f{C}{n}\sum_{i=1}^n\|a_{i}\|^{2\ell}+C'\sum_{r=0}^{t-1}
\f{1}{n}\sum_{i=1}^n(m^r)^{2\ell}\order_1(1)\, ,
\end{align*}
where the last inequality follows from the fact that $[M_t^*M_t/n]$
has almost surely a non-singular limit as $N\to\infty$, as proved in point $(g)$
above. Finally, for $r\le t-1$, each term
 $(1/n)\sum_{i=1}^n(\tm^r)^{2\ell}$ is bounded
using the induction hypothesis $\pty{B}_{t-1}(e)$, and the argument in
Eq.~\eqref{eq:m2ell-rep}.

Hence for any fixed $t$, \eqref{eq:CS} vanishes almost surely when $n$ goes to $\infty$.

Now given, $b^0,\ldots,b^{t-1}$,  consider the random variables
\[
\tilde{X}_{i,n}= \phi_b\left(b_i^0,\ldots,b_i^{t-1},\sum_{r=0}^{t-1}{\beta_r}b^{r}_i + ({\tA}q_\perp^{t})_i,w_i\right)
\]
and $X_{i,n}\equiv\tilde{X}_{i,n}-\E_{{\tA}}\{\tilde{X}_{i,n}\}$.
Proceeding as in Step 1, and using the pseudo-Lipschitz property of
$\phi$, it is easy to check the conditions of Theorem \ref{thm:SLLN}.
{ In particular, a similar inequality to \eqref{eq:moment-cond} can be obtained here as well and then the condition of Theorem \ref{thm:SLLN} follows using Lemma \ref{lem:alg-lemma} and induction hypothesis $\pty{B}_r(e)$ on the empirical moment bounds for $b^r$ for $r=0,\ldots,t-1$ and for $w$.}
We therefore get
\begin{multline}\label{eq:Ztb-halfway1}
\lim_{n\to\infty}\f{1}{n}\sum_{i=1}^n\bigg[\phi_b\Big(b_i^0,\ldots,b_i^{t-1},\big[\sum_{r=0}^{t-1}{\beta_r}b^{r} + {\tA}q_\perp^{t}\big]_i,w_i\Big)\\
-\E_{{\tA}}\Big\{\phi_b\Big(b_i^0,\ldots,b_i^{t-1},\big[\sum_{r=0}^{t-1}{\beta_r}b^{r} + {\tA}q_\perp^{t}\big]_i,w_i\Big)\Big\}\bigg]\asequal 0.
\end{multline}
Note that $[{\tA}q_\perp^{t}]_i$ is a gaussian random variable with variance $\|q_\perp^{t}\|^2/n$.
Further  $\|q_\perp^{t}\|^2/n$ converges to
a finite limit $\gamma_t^2$ almost surely as $N\to\infty$.
Indeed  $\|q_\perp^{t}\|^2/N =  \|q^{t}\|^2/N-\|q^t_{\pl}\|^2/N$.
By induction hypthesis  $\pty{H}_{t}(b)$ applied to the pseudo-Lipshitz
function $\phi_h(h^t_i,x_{0,i}) = f_t(h^t_i,x_{0,i})^2$,
$\|q^{t}\|^2/N = \<f_t(h^t,x_0),f_t(h^t,x_0)\>$ converges to a finite limit.
Further $\|q^t_{\pl}\|^2/N =\sum_{r,s=0}^{t-1}\beta_r\beta_r\<q^r,q^s\>$
also converges since the products $\<q^r,q^s\>$ do and the
coefficients $\beta_r$, $r\le t-1$
converge by Corollary  \ref{coro:beta-alpha-finite}.

Hence we can use induction hypothesis $\pty{B}_{t-1}(b)$ and
Corollary \ref{coro:beta-alpha-finite} for
\[
\widehat{\phi}_b(b_i^0,\ldots,b_i^{t-1},w_i)=\E_{Z}\Big\{\phi_b\Big(b_i^0,\ldots,b_i^{t-1},\sum_{r=0}^{t-1}{\beta_r}b_i^{r} + \f{\|q_\perp^{t}\|Z}{\sqrt{n}},w_i\Big)\Big\}\,,
\]
where $Z$ is an independent $\normal(0,1)$ random variable to show
\begin{multline}\label{eq:Ztb-halfway2}
\lim_{n\to\infty}\f{\sum_{i=1}^n\E_{{\tA}}\left\{\phi_b\left(b_i^0,\ldots,b_i^{t-1},\left[\sum_{r=0}^{t-1}{\beta_r}b^{r} + {\tA}q_\perp^{t}\right]_i,w_i\right)\right\}}{n}\\
\asequal\E\,\E_{Z}\Big\{\phi_b\Big(\sigma_0 Z_0,\ldots\sigma_{t-1}Z_{t-1},\sum_{r=0}^{t-1}{\beta_r}\sigma_{r} Z_{r} + \gamma_t\, Z,W\Big)\Big\}\, .
\end{multline}

Note that $\sum_{r=0}^{t-1}{\beta_r}\sigma_{r} Z_{r} + \gamma_t \,Z$ is gaussian.
All that we need, is to show that the variance of this gaussian is $\sigma_t^2$. But using a combination of \eqref{eq:Ztb-halfway1} and \eqref{eq:Ztb-halfway2}
for the pseudo-Lipschitz function $\phi_b(y_0,\ldots,y_t,w_i)=y_t^2$,
\begin{align}
\lim_{n\to\infty}\< b^t,b^t\>\asequal\E\Big\{\Big(\sum_{r=0}^{t-1}{\beta_r}\sigma_{r} Z_{r} + \gamma_t\, Z\Big)^2\Big\}.\label{eq:var-temp}
\end{align}
On the other hand in part $(c)$ we proved $\lim_{n\to\infty}\< b^t,b^t\>\asequal\lim_{n\to\infty}\delta^{-1}\< f(h^t,x_0),f(h^t,x_0)\>$. By induction hypothesis $\pty{H}_t(b)$ for the pseudo-Lipschitz function $\phi_h(y_0,\ldots,y_t,x_{0,i})=f(y_t,x_{0,i})^2$
we get $\lim_{n\to\infty}\delta^{-1}\< f(h^t,x_{0}),f(h^t,x_{0})\>\asequal\delta^{-1}\E\left\{f(\tau_{t-1}Z,X_0)^2\right\}$. So by definition
(\ref{eq:tau-sigma-recursion}),
both sides of \eqref{eq:var-temp} are equal to $\sigma_t^2$.

\item[$(d)$] In a manner very similar to the proof of $\pty{B}_0(d)$, using part $(b)$ for
the pseudo-Lipschitz function $\phi_b:\reals^{t+2}\to\reals$ that is given by
$\phi_b(y_0,\ldots,y_t,w_i)=y_t\varphi(y_s,w_i)$ we can obtain
\[
\lim_{n\to\infty}\< b^{t},\varphi(b^s,w)\>\asequal \E\left\{\sigma_t \hat{Z}_t \varphi(\sigma_s \hat{Z}_s,W)\right\}\,,
\]
for jointly gaussian $\hat{Z}_t, \hat{Z}_s$ with distribution $\normal(0,1)$. Using Lemma \ref{lem:stein}, this is almost surely equal to
$\cov(\sigma_t \hat{Z}_t,\sigma_s \hat{Z}_s)\E\{\varphi'(\sigma_s \hat{Z}_s,W)\}.$ By another application of part $(b)$ for $\phi_b(y_0,\ldots,y_t,w_i)=y_sy_t$ transforms $\cov(\sigma_t \hat{Z}_t,\sigma_s \hat{Z}_s)$ to $\lim_{n\to\infty}\< b^t,b^s\>$. Similar to $\pty{B}_0(d)$ we can use Lemma \ref{lem:AlmostSmooth} to transform $\E\{\varphi'(\sigma_s \hat{Z}_s,W)\}$  to $\lim_{n\to\infty}\< \varphi'(b^t,w)\>$ almost surely. This finishes the proof of $(d)$.

\end{itemize}

%
%
\subsubsection{Step 4: $\pty{H}_{t+1}$}
Due to symmetry, proof of this step is very similar to the proof of step 3 and we present only some differences.
\begin{itemize}

\item[$(g)$] This part is very similar to the one of $\pty{B}_t(g)$.

\item[$(a)$] To prove Eq.~\eqref{eq:main-lem-h-c} we use Lemma \ref{lem:kt-ht Conditioned}(a) as for $\pty{B}_{t}(a)$ to obtain
\begin{align*}
h^{t+1}|_{\sigal{S}_{t+1,t}}&\deq \sum_{i=0}^{t-1}{\alpha_i}h^{i+1}+ {\tA}^*m_\perp^t- Q_{t+1}(Q_{t+1}^*Q_{t+1})^{-1}Q_{t+1}^*\tA^*m_{\perp}^t+Q_t\order_t(1)\,.
\end{align*}
Now, using Lemma \ref{lem:prop-Gaussian-matrix}(c), as $n,N\to\infty$,
\begin{align*}
Q_{t+1}(Q_{t+1}^*Q_{t+1})^{-1}Q_{t+1}^*\tA^*m_{\perp}^t&\deq\tQ_{t+1}\order_t(1)\,
\end{align*}
which finishes the proof since $\tQ_{t+1}\order_t(1)+Q_t\order_t(1)=\tQ_{t+1}\order_t(1)$.

\item[$(c)$] For $r,s<t$ we can use induction hypothesis. For $s=t, r<t$, very similar to the proof of $\pty{B}_t(a)$,
\begin{align}
\< h^{t+1},b^{r+1}\>|_{\sigal{S}_{t+1,t}}&\deq \sum_{i=0}^{t-1}{\alpha_i}\< h^{i+1},h^{r+1}\> +
\< P_{Q_{t+1}}^{\perp} \tA^* m_\perp^{t},h^{r+1}\>+\sum_{i=0}^{t-1}o(1)\< q^i,h^{r+1}\>.\no
\end{align}
Now, by induction hypothesis $\pty{H}_{t}(d)$, for $\varphi=f$, each term $\< q^i,h^{r+1}\>$ has a finite limit. Thus,
\[
\lim_{N\to\infty}\sum_{i=0}^{t-1}o(1)\< q^i,h^{r+1}\>\asequal0.
\]
We can use induction hypothesis $\pty{H}_{r+1}(c)$ or $\pty{H}_{i}(c)$ for each term of the form $\< h^{i},h^{r+1}\>$ and use Lemma \ref{lem:prop-Gaussian-matrix} for $\< {\tA}^*m_\perp^{t},P_{Q_{t+1}}^{\perp} h^{r+1}\>$ to obtain
\begin{align}
\lim_{N\to\infty}\< h^{t+1},h^{r+1}\>&\asequal \lim_{N\to\infty}\sum_{i=0}^{t-1}{\alpha_i}\< m^{i},m^r\>\no\\
&\asequal\lim_{N\to\infty}\< m_\pl^t,m^r\>\asequal\lim_{N\to\infty}\< m^t,m^r\>\no\, ,
\end{align}
Where the last line uses the definition of $\alpha_i$ and $m_\perp^t\perp m^r$.

For the case of $r=s=t$, we have
\begin{align}
\< h^{t+1},h^{t+1}\>|_{\sigal{S}_{t+1,t}}&\deq \sum_{i,j=0}^{t-1}\alpha_i\alpha_j\< h^{i+1},h^{j+1}\> + \< P_{Q_{t+1}}^{\perp} \tA^* m_\perp^{t},P_{Q_{t+1}}^{\perp} \tA^* m_\perp^{t}\>+o(1).\no
\end{align}
Note that we used similar argument as in proof of $\pty{B}(c)$ to show the contribution of all products of the form $\< Q_t\order_t(1),\cdot\>$ and  $\< P_{Q_{t+1}}^{\perp} \tA^* m_\perp^{t},h^{i+1}\>$ a.s. tend to $0$.
Now, using induction hypothesis and Lemma \ref{lem:prop-Gaussian-matrix}

\begin{align}
\lim_{N\to\infty}\< h^{t+1},h^{t+1}\>|_{\sigal{S}_{t+1,t}}&\asequal \lim_{n\to\infty}\sum_{i,j=0}^{t-1}\alpha_i\alpha_j\< m^{i},m^j\> + \lim_{N\to\infty}\f{1}{N\delta}\|m_\perp^{t}\|^2\no\\
&\asequal \lim_{n\to\infty}\< m_\pl^{t},m_\pl^t\> + \lim_{n\to\infty}\< m_\perp^{t},m_\perp^{t}\>\no\\
&\asequal \lim_{n\to\infty}\< m^{t},m^t\>\no.
\end{align}

\item[$(e)$] This part is very similar to $\pty{B}_t(e)$.

\item[$(f)$]
Using $\pty{H}_t(a)$ and Lemma \ref{lem:prop-Gaussian-matrix}(a) we have almost surely
\[
\lim_{N\to\infty}\< h^{t+1},q^0\>\deq \lim_{N\to\infty}\frac{Z\|m^t_\perp\|\|q^0\|}{\sqrt{n}N}+\sum_{i=0}^{t-1}\lim_{N\to\infty}\alpha_i\< h^{i+1},q^0\>.
\]
%
But this limit is $0$ almost surely, using the induction hypothesis $\pty{H}_r(e)$ for $r<t$ and $\pty{B}_t(c)$.

\item[$(b)$] Using part $(a)$ we can write
\[
\phi_h(h_i^1,\ldots,h_i^{t+1},x_{0,i})|_{\sigal{S}_{t+1,t}}\deq\phi_h\left(h_i^1,\ldots,h_i^{t},\left[\sum_{r=0}^{t-1}{\alpha_r}h^{r+1} + {\tA}^*m_\perp^{t}+\tQ_{t+1}\order_{t+1}(1)\right]_i,x_{0,i}\right).
\]
Similar to proof of $\pty{B}_t(b)$ we can drop the error term $\tQ_{t+1}\order_{t+1}(1)$. Now given, $h^1,\ldots,h^{t}$,  consider the random variables
\[
\tilde{X}_{i,N}= \phi_h\left(h_i^1,\ldots,h_i^{t},\sum_{r=0}^{t-1}{\alpha_r}h^{r+1}_i + ({\tA}^*m_\perp^{t})_i,x_{0,i}\right)
\]
and $X_{i,N}\equiv\tilde{X}_{i,N}-\E_{{\tA}}\{\tilde{X}_{i,N}\}$.
Proceeding as in Step 2, and using the pseudo-Lipschitz property of
$\phi_h$, it is easy to check the conditions of Theorem \ref{thm:SLLN}.
We therefore  get
\begin{multline}\label{eq:Ht+1b-halfway1}
\lim_{N\to\infty}\f{1}{N}\sum_{i=1}^N\Bigg(\phi_h\Big(h_i^1,\ldots,h_i^{t},\big[\sum_{r=0}^{t-1}{\alpha_r}h^{r+1} + {\tA}^*m_\perp^{t}\big]_i,x_{0,i}\Big)\\
-\E_{{\tA}}\bigg\{\phi_h\Big(h_i^1,\ldots,h_i^{t},\big[\sum_{r=0}^{t-1}{\alpha_r}b^{r+1} + {\tA}^*m_\perp^{t}\big]_i,x_{0,i}\Big)\bigg\}\Bigg)\asequal 0.
\end{multline}
Note that $[{\tA}^*m_\perp^{t}]_i$ is a gaussian random variable with variance $\|m_\perp^{t}\|^2/n$. Hence we can use induction hypothesis $\pty{H}_{t}(b)$ for
\[
\widehat{\phi}_h(h_i^1,\ldots,h_i^{t},x_{0,i})=\E_{Z}\bigg\{\phi_h\Big(h_i^1,\ldots,h_i^{t},\sum_{r=0}^{t-1}{\alpha_r}h_i^{r+1} + \f{\|m_\perp^{t}\|Z}{\sqrt{n}},x_{0,i}\Big)\bigg\}\,,
\]
where $Z$ is an independent $\normal(0,1)$ random variable, to show
\begin{multline}\label{eq:Ht+1b-halfway2}
\lim_{N\to\infty}\f{\sum_{i=1}^N\E_{{\tA}}\left\{\phi_h\left(h_i^1,\ldots,h_i^{t},\left[\sum_{r=0}^{t-1}{\alpha_r}b^{r+1} + {\tA}^*m_\perp^{t}\right]_i,x_{i,0}\right)\right\}}{N}\\
\asequal\E\,\E_{Z}\left\{\phi_h\left(\tau_0 Z_0,\ldots\tau_{t-1}Z_{t-1},\sum_{r=0}^{t-1}{\alpha_r}\tau_{r} Z_{r} + \f{\|m_\perp^{t}\|Z}{\sqrt{n}},X_0\right)\right\}\,.
\end{multline}

Note that $\sum_{r=0}^{t-1}{\alpha_r}\tau_{r} Z_{r} + n^{-1/2}\|m_\perp^{t}\|Z$ is gaussian.
All that we need, is to show that the variance of this gaussian is $\tau_t^2$. But using combination of \eqref{eq:Ht+1b-halfway1} and \eqref{eq:Ht+1b-halfway2}
for the pseudo-Lipschitz function $\phi_h(y_0,\ldots,y_t,x_{0,i})=y_t^2$,
\begin{align}
\lim_{N\to\infty}\< h^{t+1},h^{t+1}\>\asequal\E\left\{\left(\sum_{r=0}^{t-1}{\alpha_r}\tau_{r} Z_{r} + \f{\|m_\perp^{t}\|Z}{\sqrt{n}}\right)^2\right\}.\label{eq:var-temp2}
\end{align}
On the other hand in part (c) we proved $\lim_{N\to\infty}\< h^{t+1},h^{t+1}\>\asequal\lim_{N\to\infty}\< g_t(b^t,w),g_t(b^t,w)\>$.

By the induction hypothesis $\pty{B}_t(b)$ for the pseudo-Lipschitz function $\phi_b(y_0,\ldots,y_t,w)=g_t(y_t,w)^2$
we get $\lim_{n\to\infty}\< g_t(b^t,w),g_t(b^t,w)\>\asequal\E\{g_t(\sigma_{t}Z,W)^2\}$. So by the definition
(\ref{eq:SERecursion}),
both sides of \eqref{eq:var-temp2} are equal to $\tau_t^2$.

\item[$(d)$] This is very similar to the proof of $\pty{B}_t(d)$. For the pseudo-Lipschitz function $\phi_h:\reals^{t+2}\to\reals$ that is given by
$\phi_h(y_1,\ldots,y_{t+1},x_{0,i})=y_{t+1}\varphi(y_{s+1},x_{0,i})$ we can use part $(a)$ to obtain
\[
\lim_{N\to\infty}\< h^{t+1},\varphi(b^{s+1},x_0)\>\asequal \E\{\tau_t Z_t \varphi(\tau_s Z_s,X_0)\}\,,
\]
for jointly gaussian $Z_t, Z_s$ with distribution $\normal(0,1)$. Using Lemma \ref{lem:stein}, this is almost surely equal to
$\cov(\tau_t Z_t,\tau_s Z_s)\E\{\varphi'(\tau_s Z_s,X_0)\}.$ And another application of part $(b)$ for $\phi_h(y_1,\ldots,y_{t+1},x_{i,0})=y_{s+1}y_{t+1}$ transforms $\cov(\tau_t {Z}_t,\tau_s {Z}_s)$ to $\lim_{N\to\infty}\< h^{t+1},h^{s+1}\>$. Similar to $\pty{H}_1(d)$ using Lemma \ref{lem:AlmostSmooth}, $\E\{\varphi'(\tau_s {Z}_s,X_0)\}$ can be transformed to $\lim_{N\to\infty}\< \varphi'(h^{t+1},x_0)\>$ almost surely. This finishes the proof of (d).
\end{itemize}

%
%
\subsection{Proof of Corollary \ref{coro:Decoupling}}
\label{sec:CoroProof}

First notice that the statement to be proved is equivalent to the following
claim. The joint distribution of
$(x_{J(1)}^{t},\dots,x_{J(\ell)}^t,
x_{0,J(1)},\dots,x_{0,J(\ell)})$, for
$J(1),\dots,J(\ell)\in [N]$ uniformly random subset of distinct indices,
converges weakly to to the distribution of
$(\widehat{X}_1,\dots \widehat{X}_\ell,X_{0,1},\dots,X_{0,\ell})$.
By general theory of weak convergence, it is therefore sufficient to
check Eq.~(\ref{eq:Kconv}) for functions of the form
\begin{eqnarray}
\psi(x_{1},\dots,x_{\ell},
y_{1},\dots,y_{\ell}) = \psi_1(x_{1},y_1)\cdots\psi_\ell(x_{\ell},y_{\ell})\, ,
\end{eqnarray}
for $\psi_i:\reals^2\to\reals$ Lipschitz and bounded.
This case follows immediately from Theorem \ref{thm:main}
once we notice that
\begin{eqnarray}
{\sf E}\, \psi(x_{J(1)}^{t},\dots,x_{J(\ell)}^t,
x_{0,J(1)},\dots,x_{0,J(\ell)}) = \prod_{s=1}^\ell
\Big(\frac{1}{N}\sum_{i=1}^N\psi_s(x^t_i,x_{0,i})\Big)+O(1/N)\, .
\end{eqnarray}
\endproof
%
%

\section{Symmetric Case}
\label{sec:Symmetric}

Let $k\ge2$, $G=A^*+A$ with $A\in\reals^{N\times N}$, and
assume that the entries of $A$ are i.i.d. $\normal(0,(2N)^{-1})$.  Also let $f: \reals\to\reals$ be a Lipschitz continuous function.  Start with $m^0$ and $m^1$ in $\reals^N$ where $m^0=\zeros_{N\times 1}$ and $m^1$ is a fixed deterministic vector in $\reals^N$ with $\lim\sup_{N\to\infty}N^{-1}\sum_{i=1}^N (m_{1,i})^{2k-2}< \infty$, and proceed by the following iteration
\begin{align}
h^{t+1}&=Gm^{t}-\lambda_t m^{t-1},\label{eq:sym}\\
m^t&=f(h^t)\no
\end{align}
where $\lambda_t=\< f'(h^t)\>$. Now let $\tau_1^2=\lim_{N\to\infty}\< m_1,m_1\>$, and define recursively for $t\ge 1$,
\begin{align}
\tau_{t+1}^2&  =\E\left\{[f(\tau_tZ)]^2\right\}\, ,\label{eq:SERecursionSym}
\end{align}
with $Z\sim \normal(0,1)$.

\begin{theorem}\label{thm:mainSym}
Let $\{A(N)\}_N$ be a sequence of matrices $A\in\reals^{N\times N}$
indexed by $N$, with i.i.d. entries $A_{ij}\sim \normal(0,1/(2N)^{-1})$.
Then, for any pseudo-Lipschitz function $\psi:\reals\to\reals$ of order $k$ and all $t\in\naturals$, almost surely
\begin{eqnarray}
\lim_{N\to\infty}\f{1}{N}\sum_{i=1}^N\psi(h_i^{t+1})
=\E\left[ \psi(f(\tau_{t} Z))\right]\, .
\end{eqnarray}
\end{theorem}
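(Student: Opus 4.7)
The proof plan is to establish a technical lemma analogous to Lemma \ref{lem:elephant} by induction on $t$, now adapted to the symmetric Gaussian setting, which is the one originally considered by Bolthausen \cite{Bolthausen}. The high-level blueprint of Section \ref{sec:HighLevel} applies almost verbatim: condition on the filtration generated by past iterates, replace $G$ on that filtration by its conditional expectation plus an independent projected Gaussian, and use the Onsager term to kill the non-Gaussian part of the conditional expectation.

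Let $\sigal{S}_t$ denote the $\sigma$-algebra generated by $\{m^0,\ldots,m^t,h^1,\ldots,h^t\}$, and write $M_t=[m^0|\cdots|m^{t-1}]$, $H_t=[h^1|\cdots|h^t]$ and $\Lambda_t=\diag(\lambda_0,\ldots,\lambda_{t-1})$. The recursion \eqref{eq:sym} is equivalent to the single linear constraint $GM_t = H_t + [0\,|\,M_{t-1}]\,\Lambda_t$. Since $G$ is Gaussian and symmetric, by an analog of Lemma \ref{lem:Conditional A} (proved by the same Lagrange multiplier calculation as in Section \ref{subsec:pf-conditional-A}, now minimizing $\|G\|_F^2$ over symmetric matrices), one obtains
\[
G|_{\sigal{S}_t}\deq E_t + \cP_t(\tilde{G})\, ,
\]
with $\tilde{G}$ an independent copy of $G$, $E_t=\E(G\,|\,\sigal{S}_t)$ the symmetric conditional expectation, and $\cP_t$ the orthogonal projector onto symmetric matrices that vanish on $\mathrm{span}(M_t)$. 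Decomposing $m^t=m^t_\parallel+m^t_\perp$ with $m^t_\parallel=M_t\vec\alpha$, the term $E_t m^t$ produces an explicit linear combination of $h^1,\ldots,h^t$ and $m^0,\ldots,m^{t-1}$ (the symmetric analog of Lemma \ref{lem:Aqt,Amt}); the cross contribution $P_{M_t}\tilde{G} m^t_\perp$ is $\tilde{M}_t\order_t(1)$ by Lemma \ref{lem:prop-Gaussian-matrix}(c); and the genuinely new Gaussian contribution is $\tilde{G}m^t_\perp$, whose entries are asymptotically $\normal(0,\|m^t_\perp\|^2/N)$. Running the inductive scheme of Lemma \ref{lem:elephant}(a)--(e) — supplemented by the analog of Lemma \ref{lem:elephant}(g) ensuring that the limiting Gram matrix of $\{m^r\}_{r\le t-1}$ is non-singular — one then transfers the Gaussian conclusion to the empirical distribution of $h^{t+1}$ using Theorem \ref{thm:SLLN} and Lemma \ref{lem:SLLN4us}, with the variance identified through \eqref{eq:SERecursionSym}.

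The main obstacle is verifying that the Onsager correction $-\lambda_t m^{t-1}$ exactly cancels the non-Gaussian part of $E_t m^t$. The cancellation requires invoking, as part of the induction hypothesis, a Stein-type identity (the symmetric analog of Lemma \ref{lem:elephant}(d)) stating that $\lim_{N\to\infty}\<h^{r+1},f(h^{s+1})\>\asequal \lim_{N\to\infty}\<h^{r+1},h^{s+1}\>\<f'(h^{s+1})\>$; applied with $\varphi=f$, this rewrites the coefficients of each $m^{s-1}$ in $E_t m^t_\parallel$ as multiples of $\<f'(h^s)\>=\lambda_s$, and the non-singularity of the limiting Gram matrix forces all of them to vanish except the coefficient of $m^{t-1}$, which precisely matches $\lambda_t=\<f'(h^t)\>$. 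This is the direct symmetric counterpart of the computation carried out in the proof of Lemma \ref{lem:kt-ht Conditioned} above. Once the cancellation is in place, the theorem follows by applying the analog of Lemma \ref{lem:elephant}(b) to the pseudo-Lipschitz function $\phi(y_1,\ldots,y_{t+1})=\psi(y_{t+1})$, with moment bounds propagated from the hypothesis $\limsup N^{-1}\sum_{i}(m_{1,i})^{2k-2}<\infty$ using Lipschitz continuity of $f$ exactly as in Lemma \ref{lem:elephant}(e).
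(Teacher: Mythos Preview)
Your proposal is correct and follows essentially the same approach as the paper: the paper omits the detailed proof, stating only that it is ``very similar to the one of Theorem~\ref{thm:general}, and exploits the same conditioning trick,'' and then records the symmetric analog Lemma~\ref{lem:elephantSym}, which is precisely the technical lemma you outline. Your sketch correctly identifies the key adaptations (conditioning on symmetric $G$ via Frobenius minimization over symmetric matrices, the single projector $P_{M_t}^\perp$ on both sides, the Stein-type identity for the Onsager cancellation), modulo minor indexing differences with the paper's conventions (the paper drops $m^0=0$ from $M_{t-1}$ and writes the filtration as $\sigal{U}_t$).
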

\begin{note}
This theorem was proved by Bolthausen in the case $f(x) = \tanh(\beta x+h)$
and $\<m^1,m^1\>=\tau_*^2$, for $\tau_*^2$ the
fixed point of the recursion \eqref{eq:SERecursionSym}.
The general proof is very similar to the one of Theorem \ref{thm:general},
and exploits the same conditioning trick. We omit it to avoid repetitions.
\end{note}

When we are calculating $h^{t+1}$, all values $h^1,\ldots,h^t$ and hence $m^1,\ldots,m^t$ are known to us. Denote the $\sigma$-algebra generated by all of these random variables by $\sigal{U}_t$. Moreover, use the following compact formulation for \eqref{eq:sym}.
\begin{align*}
\underbrace{\left[h^2|h^3+\lambda^2m^1|\cdots|h^t+\lambda^{t-1}m^{t-2}\right]}_{Y_{t-1}}&=G\underbrace{[m^1|\ldots|m^{t-1}]}_{M_{t-1}},
\end{align*}
The analogue of Lemma \ref{lem:elephant} is the following.
\begin{lemma}\label{lem:elephantSym}
Let $\{A(N)\}_N$ be a sequence of random matrices as in
Theorem \ref{thm:mainSym}. Then the following hold
for all $t\in\naturals$
\begin{itemize}
\item[$(a)$]
\begin{align}
h^{t+1}|_{\sigal{U}_{t}}&\deq \sum_{i=1}^{t-1}{\alpha_i}h^{i+1}+ {\tG}m_\perp^t+\tM_{t-1}\order_t(1)\,,
\end{align}
where ${\tG}$ is an independent copy of $G$ and coefficients $\alpha_i$ satisfy
$m_\pl^t=\sum_{i=1}^{t-1}{\alpha_i}m^{i}$. The matrix $\tM_t$ is such that its columns form an orthogonal basis for the column space of $M_t$ and $\tM_t^*\tM_t=n\,\identity_{t\times t}$.
Recall that, $\order_t(1)\in\reals^t$ is a finite dimensional random vector that converges to 0
almost surely as $N\to\infty$.

\item[$(b)$] For any pseudo-Lipschitz function $\phi:\reals^{t}\to\reals$ of order $k$,
\begin{align}
\lim_{N\to\infty}\f{1}{N}\sum_{i=1}^N\phi(h_i^2,\ldots,h_i^{t+1})
&\asequal\E\big[\phi(\tau_1 Z_1,\ldots,\tau_tZ_{t})\big]
\end{align}
where $Z_1,\ldots,Z_{t}$ have $\normal(0,1)$ distribution.

\item[$(c)$] For all $1\leq r,s\leq t$ the following equations hold and all limits exist, are bounded and have degenerate distribution
(i.e. they are constant random variables):
\begin{align}
\lim_{N\to\infty}\< h^{r+1},h^{s+1}\>&\asequal\lim_{N\to\infty}\< m^{r},m^{s}\>\,
\end{align}

\item[$(d)$] For all $1\leq r,s\leq t$, and for any
Lipschitz continuous function $\varphi$, the following equations hold and all limits exist, are bounded and have degenerate distribution
(i.e. they are constant random variables):
\begin{align}
\lim_{N\to\infty}\< h^{r+1}, \varphi(h^{s+1})\>&\asequal \lim_{N\to\infty}\< h^{r+1},h^{s+1}\> \< \varphi'(h^{s+1})\>
\end{align}
\item[$(e)$] For $\ell=k-1$, almost surely $\lim_{N\to\infty}(h_i^{t+1})^{2\ell}<\infty$.

\item[$(f)$] For all $0\leq r\leq t$ the following limit exists and there are positive constants $\lbq_r$ (independent of $N$) such that almost surely
\begin{align}
\lim_{N\to\infty}\< m_\perp^r, m_\perp^r\> > \lbq_r\,.
\end{align}

\end{itemize}
\end{lemma}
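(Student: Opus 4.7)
The plan is to proceed by induction on $t$, establishing parts $(a)$--$(f)$ jointly. The overall structure follows the proof of Lemma \ref{lem:elephant}, but since the symmetric case involves only a single recursion (rather than the coupled $h/b$ pair), the four-step induction collapses to a base case $\pty{H}_1$ and a single inductive step $\pty{H}_t \Rightarrow \pty{H}_{t+1}$.

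The first ingredient is a symmetric analogue of Lemma \ref{lem:Conditional A}. Conditioning on $\sigal{U}_t$ is equivalent to conditioning on the linear constraint $GM_{t-1} = Y_{t-1}$ (together with its transpose, which is automatic by symmetry of $G$). Applying the symmetric-matrix version of Lemma \ref{lem:Conditional Gaussian}, one obtains
\[
G|_{\sigal{U}_t} \deq E_t + P_{M_{t-1}}^\perp \tG P_{M_{t-1}}^\perp,
\]
where $\tG$ is an independent copy of $G$ and $E_t$ is the minimum Frobenius-norm \emph{symmetric} matrix satisfying $E_tM_{t-1}=Y_{t-1}$, which a Lagrange-multiplier calculation parallel to that of Section \ref{subsec:pf-conditional-A} identifies as
\[
E_t = Y_{t-1}(M_{t-1}^*M_{t-1})^{-1}M_{t-1}^* + M_{t-1}(M_{t-1}^*M_{t-1})^{-1}Y_{t-1}^* - M_{t-1}(M_{t-1}^*M_{t-1})^{-1}M_{t-1}^*Y_{t-1}(M_{t-1}^*M_{t-1})^{-1}M_{t-1}^*.
\]
Decomposing $m^t = m^t_\pl + m^t_\perp$ with $m^t_\pl = M_{t-1}\va_t$, and expressing $Y_{t-1}$ in terms of $h^{i+1}$ and $\lambda_i m^{i-1}$ as in \eqref{eq:YX}, the representation of $h^{t+1} = Gm^t - \lambda_t m^{t-1}$ expands into the form claimed in $(a)$: the coefficient of each basis vector $m^i$ reduces, via the inductive hypothesis $(d)$ applied to $\varphi = f$ together with the identity $\lambda_s = \<f'(h^s)\>$, to an $\order_t(1)$ term exactly as in the proof of Lemma \ref{lem:kt-ht Conditioned}.

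The base case $\pty{H}_1$ treats $h^2 = Gm^1$ with deterministic $m^1$. The coordinates of $Gm^1$ are jointly Gaussian with asymptotic variance $\|m^1\|^2/N \to \tau_1^2$, so parts $(b)$--$(e)$ follow from Theorem \ref{thm:SLLN}, Lemma \ref{lem:SLLN4us}, and the hypothesis $\lim\sup N^{-1}\sum_i (m^1_i)^{2k-2} < \infty$, by the arguments of Step 2 in the proof of Lemma \ref{lem:elephant}; part $(f)$ at $t=1$ is immediate. For the inductive step I would first prove $(f)$ by writing
\[
\<m^t_\perp, m^t_\perp\> = \<m^t,m^t\> - \f{(m^t)^*M_{t-1}}{N}\Big[\f{M_{t-1}^*M_{t-1}}{N}\Big]^{-1}\f{M_{t-1}^*m^t}{N},
\]
passing to the limit via the inductive $(b)$ and $(c)$, and applying Lemma \ref{lem:ConditionalVariance} to the limit Gaussian vector $(\tau_1 Z_1,\ldots,\tau_t Z_t)$; the earlier $(f)$ provides the non-degeneracy needed by Lemmas \ref{lem:InvertibleCovariance} and \ref{lem:inverse-is-continuous}. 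With $(a)$ and $(f)$ in hand, parts $(b)$--$(e)$ at level $t+1$ follow by direct transposition of Step 4 in the proof of Lemma \ref{lem:elephant}: use the representation from $(a)$, apply Theorem \ref{thm:SLLN} conditionally on $\sigal{U}_t$, identify the Gaussian limit via Lemma \ref{lem:SLLN4us}, and use the bound in $(e)$ to absorb the pseudo-Lipschitz error contributed by the $\tM_{t-1}\order_t(1)$ term.

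The main obstacle is verifying the symmetric conditional-distribution formula, in particular that the projected noise $P_{M_{t-1}}^\perp \tG P_{M_{t-1}}^\perp$ carries exactly the right covariance on the subspace of symmetric matrices orthogonal to $\{G : GM_{t-1} = 0\}$; this involves checking that the different variances of diagonal and off-diagonal entries of $G$ ($2/N$ vs.\ $1/N$) do not introduce asymptotic corrections in the scaling of $\|Gv\|^2/N$. Once this bookkeeping is settled, no new phenomena arise and the remainder of the proof is a direct simplification of the arguments of Section \ref{subsec:pf-lem-elephant}.
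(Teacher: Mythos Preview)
Your proposal is correct and matches the paper's approach: the paper explicitly omits the proof of Lemma \ref{lem:elephantSym}, noting only that it is ``very similar to the one of Theorem \ref{thm:general}, and exploits the same conditioning trick.'' Your outline---collapsing the four-step induction of Lemma \ref{lem:elephant} into a base case and a single inductive step, deriving the symmetric conditional-distribution formula for $G|_{\sigal{U}_t}$, and then transposing the arguments of Section \ref{subsec:pf-lem-elephant}---is exactly the intended route, and your identification of the diagonal/off-diagonal variance discrepancy as the only new bookkeeping is accurate.
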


%
%
\section*{Acknowledgement}

We are deeply indebted  with Erwin Bolthausen, who first
presented the conditioning technique during the EURANDOM
workshop on `Order, disorder and double disorder' in September 2009.
We are also  grateful to Dave Donoho and Arian Maleki,
for a number of insightful exchanges.

This work was partially supported by a Terman fellowship,
the NSF CAREER award CCF-0743978 and the NSF grant DMS-0806211.

%
\appendix

\section{AMP algorithm: An heuristic derivation}
\label{app:AMPderivation}

In this appendix we present an heuristic derivation of the AMP
iteration (\ref{eq:dmm}) starting from the standard message passing
formulation (\ref{eq:mp1}). Let us stress that such derivation is not
relevant for the proof of our Theorem \ref{thm:main}.
Our objective is to help the reader develop an intuitive
understanding of the AMP iteration. For further
discussion of the connection with belief propagation we
refer to \cite{DMM_ITW_I,InPreparationDMM}.

Let us rewrite the message passing iteration for greater convenience of the
reader
\begin{align}
z_{a\to i}^t &= y_a - \sum_{j\in[N]\bs i}A_{aj}x_{j\to a}^t\, ,\label{eq:mp-repeated}\\
x_{i\to a}^{t+1}&=\eta_t\left(\sum_{b\in[n]\bs a}A_{bi}z_{b\to i}^t\right)\, .
\end{align}
Notice that on the right-hand side of both equations the messages appears
in sums of $\Theta(N)$ terms. Consider for instance the  messages
$\{z_{a\to i}^t\}_{i\in [N]}$ for a fixed node $a\in [n]$.
These depend on $i\in [N]$ only because the excluded term
changes. It is therefore natural to guess that
$z^{t}_{a\to i}=z^t_a+O(N^{-1/2})$ and
$x^{t}_{i\to a}=x^t_i+O(n^{-1/2})$, where
$z^t_a$ only depends on the index $a$ (and not on $i$),
and $x^t_i$ only depends on $i$ (and not on $a$).

A na\"{i}ve approximation would consist in neglecting
the $O(N^{-1/2})$ correction
but this turns out to produce a non-vanishing error in the
large-$N$ limit. We instead set
\begin{eqnarray*}
z_{a\to i}^t = z_a^t+\dz_{a\to i}^t\, ,\;\;\;\;\;\;\;
x_{i\to a}^t = x_i^t+\dx_{i\to a}^t\, .
\end{eqnarray*}
Substituting in Eq.~(\ref{eq:mp-repeated}), we get
\begin{align*}
z_{a}^t+\dz_{a\to i}^t &= y_a - \sum_{j\in[N]}A_{aj}
(x_{j}^t+\dx_{j\to a}^t)
+A_{ai}(x_{i}^t+\dx_{i\to a}^t)\, ,\\
x_{i}^{t+1}+\dx_{i\to a}^{t+1}&=
\eta_t\left(\sum_{b\in[n]}A_{bi}(z_{b}^t+\dz_{b\to i}^t)-
A_{ai}(z_{a}^t+\dz_{a\to i}^t)\right)\, .
\end{align*}
We will now drop the terms that are negligible without writing
explicitly the error terms. First of all notice that single terms of
the type $A_{ai}\dz_{a\to i}^t$ are of order $1/N$ and
can be safely neglected. Indeed $\dz_{a\to i} = O(N^{-1/2})$
by our anzatz, and $A_{ai} = O(N^{-1/2})$ by definition.
We get
\begin{align*}
z_{a}^t+\dz_{a\to i}^t &= y_a - \sum_{j\in[N]}A_{aj}
(x_{j}^t+\dx_{j\to a}^t)
+A_{ai}x_{i}^t\, ,\\
x_{i}^{t+1}+\dx_{i\to a}^{t+1}&=
\eta_t\left(\sum_{b\in[n]}A_{bi}(z_{b}^t+\dz_{b\to i}^t)-
A_{ai}z_{a}^t\right)\, .
\end{align*}
We next expand the second equation to linear order in $\dx_{i\to a}^t$ and
$\dz_{a\to i}^t$:
\begin{align*}
z_{a}^t+\dz_{a\to i}^t &= y_a - \sum_{j\in[N]}A_{aj}
(x_{j}^t+\dx_{j\to a}^t)
+A_{ai}x_{i}^t\, ,\\
x_{i}^{t+1}+\dx_{i\to a}^{t+1}&=
\eta_t\left(\sum_{b\in[n]}A_{bi}(z_{b}^t+\dz_{b\to i}^t)\right)-
\eta_t'\left(\sum_{b\in[n]}A_{bi}(z_{b}^t+\dz_{b\to i}^t)\right)
A_{ai}z_{a}^t\, .
\end{align*}
Notice that the last term on the right hand side of
the first equation is the only one dependent on $i$,
and we can therefore identify this term with $\dz_{a\to i}^t$.
We obtain the decomposition
\begin{align}
z_{a}^t &= y_a - \sum_{j\in[N]}A_{aj}
(x_{j}^t+\dx_{j\to a}^t)\, ,\label{eq:Z1}\\
\dz_{a\to i}^t &= A_{ai}x_{i}^t\, .\label{eq:Z2}
\end{align}
Analogously for the second equation we get
\begin{align}
x_{i}^{t+1}&=
\eta_t\left(\sum_{b\in[n]}A_{bi}(z_{b}^t+\dz_{b\to i}^t)\right)\, ,
\label{eq:X1}\\
\dx_{i\to a}^{t+1} & =-
\eta_t'\left(\sum_{b\in[n]}A_{bi}(z_{b}^t+\dz_{b\to i}^t)\right)
A_{ai}z_{a}^t\, .\label{eq:X2}
\end{align}

Substituting Eq.~(\ref{eq:Z2}) in Eq.~(\ref{eq:X1}) to eliminate
$\dz_{b\to i}^t$ we get
\begin{align}
x_{i}^{t+1}&=
\eta_t\left(\sum_{b\in[n]}A_{bi}z_{b}^t+\sum_{b\in[n]}A_{bi}^2x_{i}^t
\right)\, ,
\end{align}
and using the normalization of $A$, we get
$\sum_{b\in [n]}A_{bi}^2\to 1$, whence
\begin{align}
x^{t+1}&=
\eta_t(x^t+A^*z^t)\, .
\end{align}

Analogously substituting Eq.~(\ref{eq:X2}) in (\ref{eq:Z1}),
we get
\begin{align}
z_{a}^t &= y_a - \sum_{j\in[N]}A_{aj}x_{j}^t+
\sum_{j\in[N]}A_{aj}^2\eta_t'(x^t_j+(A^*z^t)_j)z_{a}^t\, .\label{eq:ZZZZ}
\end{align}
Again, using the law of large numbers  and the normalization of $A$,
we get
\begin{align}
\sum_{j\in[N]}A_{aj}^2\eta_t'(x^t_j+(A^*z^t)_j)
\approx \frac{1}{n}\sum_{j\in[N]}\eta_t'(x^t_j+(A^*z^t)_j)
\to \frac{1}{\delta}\<\eta_t'(x^t_j+(A^*z^t)_j)\>\, ,
\end{align}
whence substituting in (\ref{eq:ZZZZ}),
we obtain the second equation in (\ref{eq:dmm}).
This finishes our derivation.
%
%
{
\section{Strong law of large number for triangular arrays}\label{ap:SLLN}
In this section we show how Theorem \ref{thm:SLLN} can be obtained from Theorem 2.1 of Hu and Taylor from \cite{SLLN2}.  Define $a_n\equiv n$, $p\equiv 2$, and $\psi(t)\equiv t^{2+\slln}$. It is clear that $\psi$ satisfies condition (2.1) from \cite{SLLN2}. Next, condition $n^{-1}\sum_{i=1}^n\E|X_{n,i}|^{2+\slln}\le cn^{\slln/2}$ yields
\begin{align}
\sum_{n=1}^\infty \sum_{i=1}^n\frac{\E\{\psi(|X_{n,i}|)\}}{\psi(a_n)} &= \sum_{n=1}^\infty \sum_{i=1}^n\frac{\E\{|X_{n,i}|^{2+\slln}\}}{n^{2+\slln}}\nonumber\\
&\le c\sum_{n=1}^\infty \frac{1}{n^{1+\slln/2}}<\infty\,.\label{eq:appendix-slln-1}
\end{align}
Therefore condition (2.3) from \cite{SLLN2} also holds. Finally, for any positive integer $k$, using condition $n^{-1}\sum_{i=1}^n\E|X_{n,i}|^{2+\slln}\le cn^{\slln/2}$ and a generalized mean inequality
\begin{align*}
\sum_{n=1}^\infty \left(\sum_{i=1}^n\E\bigg(\frac{|X_{n,i}|}{a_n}\bigg)^2\right)^{2k}&= \sum_{n=1}^\infty \left(\sum_{i=1}^n  \frac{\E|X_{n,i}|^{2}}{n^2}\right)^{2k}\\
&\le \sum_{n=1}^\infty \left( \sum_{i=1}^n\frac{\left[\E|X_{n,i}|^{2+\slln}\right]^{\frac{2}{2+\slln}}}{n^2}\right)^{2k} \\
&\le c'+\sum_{n=1}^\infty \left(\sum_{i=1}^n\frac{\E|X_{n,i}|^{2+\slln}}{n^2}\right)^{2k} \\
&\le c'+c''\sum_{n=1}^\infty \frac{1}{n^{2k(1-\frac{\slln}{2})}}<\infty\,.
\end{align*}
The last inequality uses $\slln<1$ which leads to $2k(1-\slln/2)>1$. Hence, condition (2.4) of \cite{SLLN2} satisfies as well. Therefore $n^{-1}\sum_{i=1}^nX_{n,i}$ converges to $0$ almost surely.

%
%
\section{Proof of Lemma \ref{lem:alg-lemma}}\label{ap:alg-lemma}
Define $f(\beta)\equiv\frac{1}{\beta}\log\left(\sum_{i=1}^ne^{\beta \log u_i}\right)$.  Lemma \ref{lem:alg-lemma} is equivalent to show that $f(1+\varepsilon)\le f(1)$.  We prove that $f$ is a decreasing function for all $\beta>0$.  Note that
\begin{align*}
f'(\beta)&=-\frac{1}{\beta^2}\log\left(\sum_{i=1}^ne^{\beta \log u_i}\right)+\frac{1}{\beta}\frac{\sum_{i=1}^n(\log u_i)e^{\beta \log u_i}}{\sum_{s=1}^ne^{\beta \log u_s}}\\
&=-\frac{1}{\beta^2}\sum_{i=1}^n\frac{e^{\beta \log u_i}}{\sum_{s=1}^ne^{\beta \log u_s}}
\left[-\log\left(\sum_{i=1}^ne^{\beta \log u_i}\right)+\beta\log u_i\right]\\
&=-\frac{1}{\beta^2}\sum_{i=1}^n\frac{e^{\beta \log u_i}}{\sum_{s=1}^ne^{\beta \log u_s}}
\log\left(\frac{e^{\beta\log u_i}}{\sum_{i=1}^ne^{\beta \log u_i}} \right)=-\frac{1}{\beta^2}H(p)
\end{align*}
where $H(p)$ is the entropy of a probability distribution on $\{1,\ldots,n\}$ with $p_i=\frac{e^{\beta \log u_i}}{\sum_{s=1}^ne^{\beta \log u_s}}$ and is always non-negative. This finishes the proof.
}
%
%
\section{Proof of probability and linear algebra lemmas}

In this Appendix we provide proofs of two probability
lemmas stated in Section \ref{sec:ProbFacts}.

\subsection{Proof of Lemma \ref{lem:SLLN4us}}
\label{app:SLLN}

Note that by definition of empirical measure, $N^{-1}\sum_{i=1}^N\psi(v_{i})=\E_{\empr_{v}}\{\psi(V)\}$.
The proof uses a truncation technique. For a positive integer $\bound$ define $\psi_\bound$ by
\[
\psi_\bound(x)\equiv\left\{
\begin{array}{ll}
\psi(x)&|\psi(x)|\leq \bound\\
\bound&\psi(x)> \bound\\
-\bound&\psi(x)< -\bound\\
\end{array}
\right.
\]
and write $\psi(x)=\psi_\bound(x)+\tpsi_\bound(x)$. Since $\empr_{v}$ converges weakly to $p_{V}$, for the bounded continuous function $\psi_\bound(x)$,
we have
\begin{align}
\lim_{N\to\infty} \E_{\empr_{v}}\{\psi_\bound(V)\}&=\E_{p_{V}}\{\psi_\bound(V)\}.\label{eq:lim-4-psi-bounded}
\end{align}
On the other hand, since $\psi$ is pseudo-Lipschitz with order $k$
we have $|\psi(x)|\leq L (1+| x|^k)$ for $|x|\ge 1$.
Therefore for large enough $B$,
\[
|\tpsi_\bound(x)|\leq L(1+|x|^k)\indicator_{\{|\psi|>\bound\}}\leq
L(1+|x|^k)\indicator_{\{|x|^k>\f{\bound}{L}-1\}}.
\]
From this we obtain
\begin{multline*}
\E_{p_{V}}\{\psi_\bound(V)\}-\lim\sup_{N\to\infty}\E_{\empr_{v(N)}}\{L
(1+|V|^k)\indicator_{\{|V|^k>\f{\bound}{L}-1\}}\}\\
\leq \lim\inf_{N\to\infty}\E_{\empr_{v(N)}}\{\psi(V)\}\leq\lim\sup_{N\to\infty}\E_{\empr_{v(N)}}\{\psi(V)\}\leq\\
\E_{p_{V}}\{\psi_\bound(V)\}+\lim\sup_{N\to\infty}\E_{\empr_{v(N)}}\{L
(1+|V|^k)\indicator_{\{|V|^k>\f{\bound}{L}-1\}}\}.
\end{multline*}
Now, by assumption $\lim_{N\to\infty}\E_{\empr_{v(N)}}\{|V|^k\}=\E_{p_{V}}\{|V|^k\}$ we can write $|V|^k=|V|^k\indicator_{\{|V|^k>\bound/L-1\}}+|V|^k
\indicator_{\{|V|^k\leq\bound/L-1\}}$ and use the weak convergence of $\empr_{v(N)}$ to $p_{V}$ to get
\[
\lim_{N\to\infty}\E_{\empr_{v(N)}}\{L(1+|V|^k)\indicator_{\{|V|^k\leq
\f{\bound}{L}-1\}}\}=\E_{p_{V}}\{L(1+|V|^k)
\indicator_{\{|V|^k\leq\f{\bound}{L}-1\}}\}.
\]
Therefore
\begin{align*}
\lim\sup_{N\to\infty}\E_{\empr_{v(N)}}\{L(1+|V|^k)\indicator_{\{|V|^k>\f{\bound}{L}-1\}}\}&=\lim_{N\to\infty}
\E_{\empr_{v(N)}}\{L(1+|V|^k)\indicator_{\{|V|^k>\f{\bound}{L}-1\}}\}\\
&=\E_{p_{V}}\{L(1+|V|^k)\indicator_{\{V^k>\f{\bound}{L}-1\}}\}.
\end{align*}
Hence, all we need  to show is that $\E_{p_{V}}\{L|V|^k\indicator_{\{|V|^k>\f{\bound}{L}-1\}}\}$ converges to $0$ as $\bound\to\infty$. But this follows using the bounded $k^{\mbox{\rm \tiny th}}$ moment of $V$ and the dominated convergence theorem, when applied to the sequence of functions $L(1+|V|^k)\indicator_{\{|V|^k>\bound/L-1\}}\leq L(1+|V|^k)$, indexed by $B$.
\endproof
%
%
\subsection{Proof of Lemma \ref{lem:AlmostSmooth}}\label{app:Almost proof}
Recall that by Skorokhod's theorem, there exists a probability
space $(\Omega,\cF,\prob)$ and a construction
of the random variables $\{(X_n,Y_n)\}_{n\ge 1}$ and
$(X,Y)$ on this space, such that letting
\begin{eqnarray*}
A = \big\{\omega\in\Omega\, :\, (X_n(\omega),Y_n(\omega))\to (X(\omega),Y(\omega))\big\}\, ,
\end{eqnarray*}
be the event that $(X_n,Y_n)$ converges to $(X,Y)$,
we have $\prob(A) = 1$.
Let $\cC_F\subseteq\reals^2$ be the domain on which $F$ is continuously
differentiable. Since $F$ is Lipschitz continuous, $\cC_F$
has full Lebesgue measure. Since the probability distribution of $(X,Y)$
is absolutely continuous with respect to Lebesgue,
$\cC_F$ has measure  $1$
under this measure.
Hence if we let
\begin{eqnarray*}
B =\big\{\omega\in\Omega \, : \, (X(\omega),Y(\omega))\in\cC_F\big\}\, ,
\end{eqnarray*}
we have $\prob(B)=1$. On $A\cap B$, we also have
$F'(X_n(\omega),Y_n(\omega))\to F'(X(\omega),Y(\omega))$.

Letting $Z_n(\omega)\equiv F'(X_n(\omega),Y_n(\omega))$
(if $(X_n(\omega),Y_n(\omega))\not\in\cC_F$ set $Z_n(\omega) = 0$) and
$Z(\omega)\equiv F'(X(\omega),Y(\omega))$, we thus proved that
\begin{eqnarray*}
\prob\big\{\lim_{n\to\infty}Z_n(\omega)=Z(\omega)\big\} = 1\, .
\end{eqnarray*}
Since $F$ is Lipschitz $|Z_n(\omega)|\le C$, and hence the bounded convergence theorem
implies $\E\{Z_n(\omega)\}\to \E\{Z(\omega)\}$ which proves our claim.
\endproof

%
%
\subsection{Proof of Lemma \ref{lem:ConditionalVariance}}
\label{app:ConditionalVarianceProof}

Let us
denote by $Q$ the covariance of the gaussian vector
$Z_1,\ldots,Z_t$. The set of matrices $Q$ satisfying the
constraints with constants $c_1,\dots,c_t$, $K$ is compact.
Hence if the thesis does not hold, there must exist a specific
covariance matrix satisfying these constrains, and such that
\begin{eqnarray}
\E\{[\lip(Z_t,Y)]^2\}-u^* C^{-1} u = 0\, .\label{eq:LipSchur}
\end{eqnarray}
Fix $Q$ to be such a matrix, and
let $S\in\reals^{t\times t}$ be the
matrix with entries $S_{i,j}\equiv\E\{
\lip(Z_i,Y)\lip(Z_j,Y)\}$. Then Eq.~(\ref{eq:LipSchur}) implies that
$S$ is not invertible (by Schur complement formula).
Therefore there exist non-vanishing constants $a_1,\dots,a_t$ such that
\begin{eqnarray}
a_1\, \lip(Z_1,Y)+a_2\, \lip(Z_2,Y)+\dots+a_t\,\lip(Z_t,Y) \asequal 0\, .
\end{eqnarray}

The function
$(z_1,\dots,z_t)\mapsto a_1\, \lip(z_1,Y)+\dots+a_t\,\lip(z_t,Y)$
is Lipschitz and non-constant. Hence there is a set ${\cal A}\subseteq
\reals^t$ of positive Lebesgue measure such that it is non-vanishing
on ${\cal A}$. Therefore, ${\cal A}$ must have zero measure under
the law of $(Z_1,\dots,Z_t)$, i.e.
$\lambda_{\rm min}(Q)=0$. This implies that there exists
non-vanishing constants $a_1',\dots,a_t'$ such that
\begin{eqnarray*}
a_1'\, Z_1+a_2'\, Z_2+\dots+a_t'\,Z_t \asequal 0\, .
\end{eqnarray*}
If $t_* = \max\{i\in\{1,\dots,t\}\, :\,  a'_i\neq 0\}$, this implies
\begin{eqnarray*}
Z_{t_*} \asequal \sum_{i=1}^{t_*-1}(-a_i'/a_{t_*}')\, Z_i\, ,
\end{eqnarray*}
which contradicts the assumption $\var(Z_{t_*}|Z_1,\dots,Z_{t_*-1})>0$.
\endproof
%
%
\subsection{Proof of Lemma \ref{lem:InvertibleCovariance}}
\label{app:InvertibleCovariance}

 We will prove the thesis by induction over
$t$. The case $t=1$ is trivial, and assume that the claim
is true up for any $(t-1)$ vectors $v_1,\dots,v_{t-1}$,
with constant $c'_{t-1}$.
Without loss of generality, we will assume $\|v_i\|^2/n\le K$
for some constant $K$ independent of $n$ (increasing the norm
of the $v_i$'s increases $\lambda_{\rm min}(C)$).

Let $V\in\reals^{n\times t}$ be the matrix with columns
$v_1,\dots,v_t$. Then $C = V^*V/n$.
By Gram-Schmidt orthonormalization,
we can construct $A$ upper triangular, and $U\in\reals^{n\times t}$ orthonormal
(i.e., with $U^*U=\identity_{t\times t}$) such that
\begin{eqnarray*}
U = VA\, .
\end{eqnarray*}
It follows that
\begin{eqnarray}
\lambda_{\rm min}(C) = \frac{1}{n}\, \lambda_{\rm min}(V^*V)
=  \frac{1}{n}\, \lambda_{\rm min}((A^{-1})^*A^{-1}) =
 \frac{1}{n}\, \lambda_{\rm max}(AA^{*})^{-1}= \frac{1}{n}\,
\sigma_{\rm max}(A)^{-2} \, .\label{eq:GS}
\end{eqnarray}
Defining $u_i$ to be the columns
of $U$, Gram-Schmidt orthonormalization prescribes
\begin{align*}
u_i & = \frac{v_i-P_{i-1}(v_i)}{\|v_i-P_{i-1}(v_i)\|}
\end{align*}
Which implies $A_{ii}=\|v_i-P_{i-1}(v_i)\|^{-1}\le (cn)^{-1/2}$ and
\begin{align*}
A_{ji} =
-\frac{1}{\|v_i-P_{i-1}(v_i)\|}(\tV_{i-1}^*\tV_{i-1})^{-1}\tV_{i-1}^*v_i
\end{align*}
We then have
\begin{align*}
|A_{ji}| \le (cn)^{-1/2}\lambda_{\rm min}(\tV_{i-1}^*\tV_{i-1})^{-1} (i-1)Kn
\le t(cn)^{-1/2}(c'_{t-1}n)^{-1} Kn\le c''n^{-1/2}\, .
\end{align*}
It follows that $\sigma_{\rm max}(A)\le c'''n^{-1/2}$ (with $c'''$
depending on $n$) whence the thesis follows by Eq.~(\ref{eq:GS}).
\endproof

\bibliographystyle{amsalpha}

\bibliography{all-bibliography}

\providecommand{\bysame}{\leavevmode\hbox to3em{\hrulefill}\thinspace}
\providecommand{\MR}{\relax\ifhmode\unskip\space\fi MR }
\providecommand{\MRhref}[2]{%
  \href{http://www.ams.org/mathscinet-getitem?mr=#1}{#2}
}
\providecommand{\href}[2]{#2}
\begin{thebibliography}{DMM10b}

\bibitem[Ald01]{AldousAssignment}
D.~Aldous, \emph{{The $\zeta(2)$ Limit in the Random Assignment Problem}},
  Random Structures and Algorithms \textbf{18} (2001), 381--418.

\bibitem[AS03]{AldousSteele}
D.~Aldous and J.~M. Steele, \emph{{The Objective Method: Probabilistic
  Combinatorial Optimization and Local Weak Convergence}}, Probability on
  discrete structures (H.~Kesten, ed.), Springer Verlag, 2003, pp.~1--72.

\bibitem[BM10]{InPreparation}
M.~Bayati and A.~Montanari, \emph{{The LASSO risk for gaussian matrices}},
  submitted, 2010.

\bibitem[Bol09]{Bolthausen}
E.~Bolthausen, \emph{{On the high-temperature phase of the
  Sherrington-Kirkpatrick model}}, Seminar at EURANDOM, Eindhoven, September
  2009.

\bibitem[BS96]{BenjaminiSchramm}
I.~Benjamini and O.~Schramm, \emph{{Percolation beyond $Z^d$, many questions
  and a few answers}}, {Electron. Comm. Probab.} \textbf{1} (1996), 71--82.

\bibitem[DDM04]{DaDeDe04}
I.~Daubechies, M.~Defrise, and C.~De Mol, \emph{An iterative thresholding
  algorithm for linear inverse problems with a sparsity constraint},
  Communications on Pure and Applied Mathematics \textbf{75} (2004),
  1412--1457.

\bibitem[DJ94a]{DJ94b}
D.~L. Donoho and I.~M. Johnstone, \emph{Ideal spatial adaptation via wavelet
  shrinkage}, Biometrika \textbf{81} (1994), 425--455.

\bibitem[DJ94b]{DJ94a}
\bysame, \emph{Minimax risk over $l_p$ balls}, Prob. Th. and Rel. Fields
  \textbf{99} (1994), 277--303.

\bibitem[DM09]{DonohoMalekiTuned}
D.~L. Donoho and A.~Maleki, \emph{{Optimally tuned iterative thresholding
  algorithm for compressed sensing}}, To appear, 2009.

\bibitem[DMM09]{DMM09}
D.~L. Donoho, A.~Maleki, and A.~Montanari, \emph{{Message Passing Algorithms
  for Compressed Sensing}}, Proceedings of the National Academy of Sciences
  \textbf{106} (2009), 18914--18919.

\bibitem[DMM10a]{DMM_ITW_I}
\bysame, \emph{{Message Passing Algorithms for Compressed Sensing: I.
  Motivation and Construction}}, Proceedings of IEEE Inform. Theory Workshop
  (Cairo), 2010.

\bibitem[DMM10b]{InPreparationDMM}
\bysame, \emph{{On the construction of approximate message passing
  algorithms}}, in preparation, 2010.

\bibitem[DMM10c]{NSPT}
D.L. Donoho, A.~Maleki, and A.~Montanari, \emph{{The Noise Sensitivity Phase
  Transition in Compressed Sensing}}, Preprint, 2010.

\bibitem[Don06]{DonohoIterative}
D.~Donoho, \emph{{For most large underdetermined systems of equations, the
  minimal $\ell_1$-norm near-solution approximates the sparsest
  near-solution}}, Communications on Pure and Applied Mathematics \textbf{59}
  (2006), 907--934.

\bibitem[DT05]{DoTa05}
D.~L. Donoho and J.~Tanner, \emph{Neighborliness of randomly-projected
  simplices in high dimensions}, Proceedings of the National Academy of
  Sciences \textbf{102} (2005), no.~27, 9452--9457.

\bibitem[DT09]{DoTa08}
\bysame, \emph{Counting faces of randomly projected polytopes when the
  projection radically lowers dimension}, Journal of American Mathematical
  Society \textbf{22} (2009), 1--53.

\bibitem[GBS09]{BaronGuoShamai}
D.~Guo, D.~Baron, and S.~Shamai, \emph{A single-letter characterization of
  optimal noisy compressed sensing}, 47th Annual Allerton Conference
  (Monticello, IL), September 2009.

\bibitem[GV05]{GuoVerdu}
D.~Guo and S.~Verdu, \emph{{Randomly Spread CDMA: Asymptotics via Statistical
  Physics}}, IEEE Trans. on Inform. Theory \textbf{51} (2005), 1982--2010.

\bibitem[HT97]{SLLN2}
T.~C. Hu and R.~L. Taylor, \emph{Strong law for arrays and for the bootstrap
  mean and variance}, Internat. J. Math. and Math. Sci. \textbf{20} (1997),
  375--383.

\bibitem[Kab03]{Kab03}
Y.~Kabashima, \emph{{A CDMA multiuser detection algorithm on the basis of
  belief propagation}}, J.~Phys. A \textbf{36} (2003), 11111--11121.

\bibitem[KWT09]{KabashimaTanaka}
Y.~Kabashima, T.~Wadayama, and T.~Tanaka, \emph{A typical reconstruction limit
  for compressed sensing based on lp-norm minimization}, J.Stat. Mech. (2009),
  L09003.

\bibitem[MM09]{MezardMontanari}
M.~M{\'e}zard and A.~Montanari, \emph{{Information, Physics and Computation}},
  Oxford University Press, Oxford, 2009.

\bibitem[MP67]{MarcenkoPastur}
V.~A. Marcenko and L.~A. Pastur, \emph{{Distribution of Eignevalues for Some
  Sets of Random Matrices}}, Mathematics of the USSR-Sbornik \textbf{1} (1967).

\bibitem[MT06]{MoT06}
A.~Montanari and D.~Tse, \emph{{Analysis of belief propagation for non-linear
  problems: the example of CDMA (or: how to prove Tanaka's formula)}},
  Proceedings of IEEE Inform. Theory Workshop (Punta de l'Este, Uruguay), 2006.

\bibitem[NS05]{NeirottiSaad}
J.P. Neirotti and D.~Saad, \emph{{Improved message passing for inference in
  densely connected systems}}, Europhys. Lett. \textbf{71} (2005), 866--872.

\bibitem[RFG09]{Goyal}
S.~Rangan, A.~K. Fletcher, and V.~K. Goyal, \emph{Asymptotic analysis of map
  estimation via the replica method and applications to compressed sensing},
  NIPS, 2009.

\bibitem[RU08]{RiU08}
T.J. Richardson and R.~Urbanke, \emph{{Modern Coding Theory}}, Cambridge
  University Press, Cambridge, 2008.

\bibitem[SS09]{SalezShah}
J.~Salez and D.~Shah, \emph{{Belief propagation: optimal algorithm for random
  assignment problem}}, Mathematics of Operations Research \textbf{32} (2009),
  468--480.

\bibitem[Ste72]{Ste72}
C.~Stein, \emph{A bound for the error in the normal approximation to the
  distribution of a sum of dependent random variables}, Proceedings of the
  Sixth Berkeley Symposium on Mathematical Statistics and Probability, 1972.

\bibitem[Tal10]{TalagrandBookLatest}
M.~Talagrand, \emph{Mean field models for spin glasses}, vol. Volume I: Basic
  Examples, Springer, Berlin, 2010, available online.

\bibitem[Tan02]{TanakaCDMA}
T.~Tanaka, \emph{A statistical-mechanics approach to large-system analysis of
  cdma multiuser detectors}, IEEE Trans. on Inform.~Theory \textbf{48} (2002),
  2888--2910.

\bibitem[Tan05]{Tanaka0}
T.~Tanaka, \emph{Statistical physics of information processing in mobile
  communications}, Progr. of Theor. Phys. Suppl. \textbf{157} (2005), 176--183.

\bibitem[TH99]{TseHanly}
D.~N.C. Tse and S.~V. Hanly, \emph{{Linear multiuser receivers: effective
  interference, effective bandwidth and user capacity}}, IEEE Trans. on
  Inform.~Theory \textbf{45} (1999), 641--657.

\bibitem[VS99]{VerduShamaiCDMA}
S.~Verd\'{u} and S.~Shamai, \emph{{Spectral efficiency of CDMA with random
  spreading}}, IEEE Trans. on Inform.~Theory \textbf{45} (1999), 622--640.

\end{thebibliography}

%
%

\end{document}